\documentclass{article}

\usepackage{microtype}
\usepackage{graphicx}

\usepackage{booktabs} 

\usepackage{hyperref}



\usepackage[accepted]{icml2024}

\usepackage{amsmath}
\usepackage{amssymb}
\usepackage{mathtools}
\usepackage{amsthm}
\usepackage{enumitem}
\usepackage{caption}

\usepackage[subpreambles=true]{standalone}
\usepackage{xstring}
\usepackage{subcaption}

\usepackage{wrapfig}

\newcommand{\includestandalonewithpath}[2][]{%
    \begingroup%
    \StrCount{#2}{/}[\matches]%
    \StrBefore[\matches]{#2}{/}[\datapath]%
    \includestandalone[#1]{#2}%
    \endgroup%
}

\usepackage[capitalize,noabbrev]{cleveref}

\theoremstyle{plain}
\newtheorem{theorem}{Theorem}
\newtheorem{proposition}{Proposition}
\newtheorem{lemma}{Lemma}
\newtheorem{corollary}{Corollary}[theorem]
\theoremstyle{definition}

\theoremstyle{remark}

\DeclareMathOperator*{\argmin}{arg\,min}

\usepackage[textsize=tiny]{todonotes}

\usepackage{standalone}  
\usepackage{tikz}
\usetikzlibrary{arrows.meta}

\icmltitlerunning{State-Constrained Zero-Sum Differential Games with One-Sided Information}

\begin{document}

\twocolumn[
\icmltitle{State-Constrained Zero-Sum Differential Games with One-Sided Information}



\icmlsetsymbol{equal}{*}

\begin{icmlauthorlist}
\icmlauthor{Mukesh Ghimire}{mae}
\icmlauthor{Lei Zhang}{mae}
\icmlauthor{Zhe Xu}{mae}
\icmlauthor{Yi Ren}{mae}
\end{icmlauthorlist}

\icmlaffiliation{mae}{Department of Mechanical and Aerospace Engineering, Arizona State University, Tempe, AZ, USA}

\icmlcorrespondingauthor{Yi Ren}{yiren@asu.edu}

\icmlkeywords{Machine Learning, ICML}

\vskip 0.3in
]



\printAffiliationsAndNotice{}  

\begin{abstract} 
We study zero-sum differential games with state constraints and one-sided information, where the informed player (Player 1) has a categorical payoff type unknown to the uninformed player (Player 2). The goal of Player 1 is to minimize his payoff without violating the constraints, while that of Player 2 is to violate the state constraints if possible, or to maximize the payoff otherwise. One example of the game is a man-to-man matchup in football.  
Without state constraints, Cardaliaguet~\yrcite{cardaliaguet2007differential} showed that the value of such a game exists and is convex to the common belief of players.
Our theoretical contribution is an extension of this result to games with state constraints and the derivation of the primal and dual subdynamic principles necessary for computing behavioral strategies.   
Different from existing works that are concerned about the scalability of no-regret learning in games with discrete dynamics, our study reveals the underlying structure of strategies for belief manipulation resulting from information asymmetry and state constraints. This structure will be necessary for scalable learning on games with continuous actions and long time windows. 
We use a simplified football game to demonstrate the utility of this work, where we reveal player positions and belief states in which the attacker should (or should not) play specific random deceptive moves to take advantage of information asymmetry, and compute how the defender should respond.    
\end{abstract}

\vspace{-0.05in}
\section{Introduction}
\vspace{-0.05in}
We study fixed-time zero-sum differential games with state constraints and one-sided information, where Player 1 holds a private type (e.g., an intent or preference) that defines the payoffs of the game.
The goal of Player 1 (resp. Player 2) is to minimize (resp. maximize) the cost. Since violation of the state constraint results in $+\infty$ penalty to Player 1, Player 2 resorts to violating the constraints when possible; and Player 1 resigns when state violation is inevitable. 
At the beginning of the game, Nature draws a type from a distribution known to both players and assigns the type only to Player 1. Initialized as Nature's distribution, the common belief about Player 1's type is updated dynamically during the game based on observations, and shared between the players. A stochastic state trajectory is produced based on the initial state and belief, the deterministic system dynamics, and the behavioral strategies of the two players.   
The value of the game, when exists, follows a Hamilton-Jacobi (HJ) equation and is a function of time, state, and belief.  
Importantly, Player 1 may control the release of information at the equilibrium to manipulate the common belief and take advantage of information asymmetry.




We use Hexner's game~\cite{hexner1979differential} as a minimal example to demonstrate information control by Player 1: Consider two players with linear dynamics
\vspace{-0.05in}
\begin{equation*}
    \dot{x}_i = A_i x_i + B_i u_i,
\vspace{-0.05in}
\end{equation*}
for $i=1,2$, where $x_i(t) \in \mathbb{R}^{d_x}$ are system states, $u_i(t) \in \mathcal{U}$ are control inputs belonging to the admissible set $\mathcal{U}$, 
$A_i, B_i \in \mathbb{R}^{d_x \times d_x}$. Let $\theta \in \{-1, 1\}$ be Player 1's type unknown to Player 2\footnote{Hexner's analysis is applicable to $\theta \in \mathbb{R}^{d_x}$, but is not generalizable to games with arbitrary dynamics and payoff functions. Here we adopt Cardaliaguet's setting where types are categorical~\cite{cardaliaguet2007differential}.}. 
Let $p_{\theta}$ be Nature's probability distribution of $\theta$. Consider that the game is to be played infinite many times, the payoff is an expectation over $\theta$:
\vspace{-0.05in}
\begin{equation} \label{eq:value_def}
\small
\begin{aligned}
    J(u_1, u_2) &= \mathbb{E}_{\theta} \Bigl[\int_0^T \left(\|u_1\|^2_{R_1} - \|u_2\|^2_{R_2}\Bigr) dt \right.+ \\ 
    & \quad \Bigl. \|x_1(T) - z\theta \|^2_{K_1(T)} - \|x_2(T) - z\theta \|^2_{K_2(T)}\Bigr],
\end{aligned}
\vspace{-0.05in}
\end{equation}
where $z \in \mathbb{R}^{d_x}$, $R_1$ and $R_2$ are positive-definite, continuous matrix functions, and $K_1(T)$ and $K_2(T)$ are positive semi-definite matrices. All parameters are common knowledge except $\theta$. Essentially Player 1's goal is to get closer to $z\theta$ than Player 2. Since Player 2 can infer the target based on Player 1's control, Player 1 may play a non-revealing strategy for some time, i.e., as if he also only knows $p_{\theta}$ rather than the actual $\theta$, before he eventually reveals.

The equilibrium of this game is as follows: 
First, it can be shown that players' control has a 1D representation, denoted by $\tilde{\theta}_i \in \mathbb{R}$, through:
\begin{equation*}\label{eq:action}
    u_i = - R_i^{-1}B_i^TK_ix_i + R_i^{-1}B_i^TK_i\Phi_i z \tilde{\theta}_i,
\end{equation*}
for $i = 1,2$, where $\dot{\Phi}_i = A_i\Phi_i$ with boundary condition $\Phi_i(T) = I$, and 
\begin{equation*}
    \dot{K}_i = -A_i^TK_i - K_iA_i + K_i^TB_iR_i^{-1}B_i^TK_i.
    \label{eq:K}
\end{equation*}
Then by introducing 
\begin{equation}
    d_i = z^T \Phi^T_i K_i B_i R_i^{-1} B_i^T K_i^T \Phi_i z, 
    \label{eq:d}
\end{equation}
and defining the critical time as
\begin{equation*}
    t_r = \argmin_{t} \int_{0}^{t} (d_1(s) - d_2(s))ds,
\end{equation*}
one can derive Player 1's strategy as $\tilde{\theta}_1(t) = 0$ for $t \in [0, t_r]$ and $\tilde{\theta}_1(t) = \theta$ for $t \in (t_r, T]$, i.e., Player 1 reveals its type at $t_r$. Player 2's strategy turns out to be to strictly follow Player 1: $\tilde{\theta}_2(t) = \tilde{\theta}_1(t)$. 
The original analysis by Hexner exploits the fact that both players solve linear-quadratic regulators parameterized by $\theta$. We will revisit this game after introducing the differential game theory for one-sided information games~\cite{cardaliaguet2007differential,cardaliaguet2009numerical}, which arrives at Hexner's solution but can also solve games with arbitrary dynamics and payoff functions, subject to continuity assumptions. 
This paper extends the unconstrained settings in Cardaliaguet~\yrcite{cardaliaguet2009numerical} and Souquiere~\yrcite{souquiere2010approximation}: We prove that value exists for differential games \textit{with state constraints} and one-sided information, and derive the primal-dual HJ equations necessary for computing player strategies. 

Different from existing works that focus on scalable no-regret learning on imperfect-information games with discrete dynamics~\cite{brown2020combining, perolat2022mastering}, this paper builds on top of repeated games and incomplete-information differential games~\cite{cardaliaguet2007differential,cardaliaguet2009numerical} to reveal the underlying mechanism of belief manipulation resulted from information asymmetry and state constraints. Specifically, we show that in any subgame, Player 1 plays a behavioral strategy (i.e., probability distributions over the action space for all his types) that convexifies his value with respect to the common belief. As a result, the common belief ``splits" to vertices of the value convex hull with probabilities that are optimal for Player 1. See Fig.~\ref{fig:value_curvature} for an illustration using Hexner's game. Importantly, the number of splits for Player 1 is no more than the number of possible player types. On the other hand, Player 2 counters Player 1 by playing a dual game where her behavioral strategy is determined by the convexification of the conjugate value. See Sec.~\ref{sec:problem} and \ref{sec:numerical} for details.

\begin{figure}[!t]
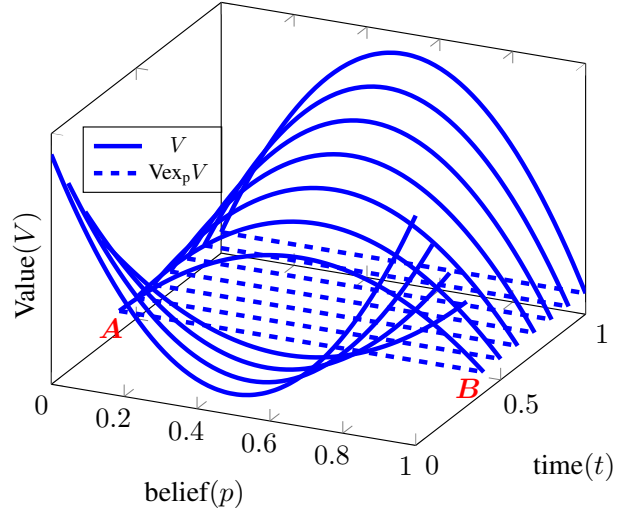

    \centering
    \includestandalone[width=\linewidth]{figures/convex_concave}
    \caption{Value along belief $(p)$ and time $(t)$ in Hexner's game. Belief splits to $A$ ($p=0$) and $B$ ($p=1$) depending on the true type of Player 1, when the value becomes concave should Player 1 play a non-revealing strategy. In other words, Player 1 delays the release of his type until a critical time. In more general cases, belief splitting may not fully reveal Player 1's type, leading to belief manipulation.}
    \label{fig:value_curvature}
\end{figure}

Within this context, it becomes clear that understanding whether and how belief should be manipulated relies on knowing the value landscape over the belief space at any time and state. In addition to the curse of dimensionality (CoD) commonly present for games with non-trivial state/action/belief spaces and time horizons, we also experience computational challenges due to value discontinuity and the need for convexification and splitting. We discuss in Sec.~\ref{sec:value} a set of solutions, including using physics-informed neural network to characterize the backward reachable set to smooth value approximation, and using an input convex architecture~\cite{amos2017input} for predicting convex values. 

To summarize, we claim the following contributions:
\begin{itemize}
\vspace{-0.05in}
    \item We extend the theory of zero-sum differential games with one-sided information to games with state constraints by proving value existence of such games and deriving the primal and dual subdynamic principles;
    \vspace{-0.05in}
    \item We elucidate, with detailed examples, how the subdynamic principles lead to the derivation of behavioral strategies;
    \vspace{-0.05in}
    \item We develop numerical tools to alleviate CoD in value approximation and to infer behavioral strategies from values. In Sec.~\ref{sec:cases}, we solve an 8D man-to-man matchup game and reveal player positions in which the attacker can take advantage of information asymmetry by playing specific deceptive moves, and to derive the defender's best response in the lack of information. See Fig.~\ref{fig:fig_1}.
\end{itemize}


\begin{figure}[!ht]
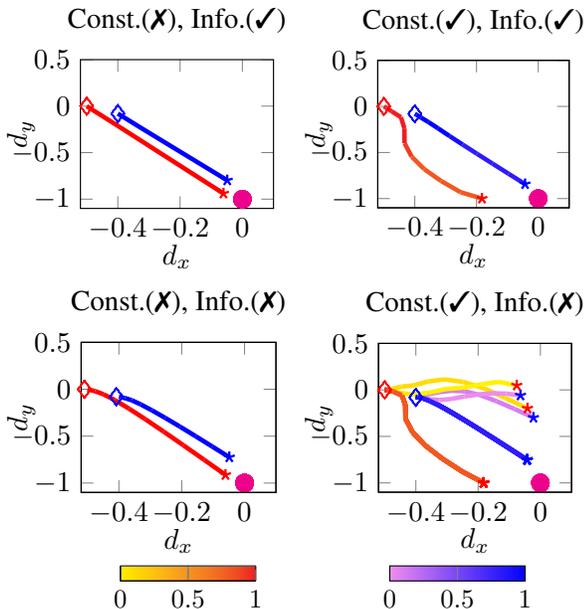

    \centering
    \begin{minipage}{0.48\linewidth}
    \centering
    \includestandalonewithpath[width=\linewidth]{figures/traj_comp_uncons}
    \end{minipage}%
    \begin{minipage}{.48\linewidth}
    \centering
    \includestandalonewithpath[width=\linewidth]{figures/traj_comp_cons_new}
    \end{minipage}
    \begin{minipage}{.48\linewidth}
    \centering
    \includestandalonewithpath[width=\linewidth]{figures/traj_incomp_uncons}
    \end{minipage}%
    \begin{minipage}{.48\linewidth}
    \centering
    \includestandalonewithpath[width=\linewidth]{figures/traj_incomp_cons_p1_-1_new_2}
    \end{minipage}
        \begin{minipage}{.48\linewidth}
        \hspace{0.4in}
    \centering
    \includestandalonewithpath{figures/only_colorbar_1}
    \end{minipage}%
    \begin{minipage}{.48\linewidth}
    \hspace{0.1in}
    \centering
    \includestandalonewithpath{figures/only_colorbar_2}
    \end{minipage}
    \caption{Trajectories of informed Player 1 (red) and uninformed Player 2 (blue) in an 8D Hexner's game w/ and w/o a state constraint or information asymmetry. Color shades indicate probabilities. When constrained, Player 1 stays away from Player 2 while trying to be closer to the target (the circle) than Player 2. Diamonds indicate initial states and stars indicate final states. See Sec.~\ref{sec:cases} for details.}
    \label{fig:fig_1}
    \vspace{-0.1in}
\end{figure}


\vspace{-0.1in}
\section{Related Work}
\vspace{-0.05in}
\paragraph{Incomplete-information repeated and differential games} Harsanyi~\yrcite{harsanyi1967games} first formalized information asymmetry in a stage game by introducing a private player types. 
Aumann et al.~\yrcite{aumann1995repeated} provided a framework to study repeated games with incomplete information on one side. De Meyer~\yrcite{de1996repeated} introduced dual games from where strategies of the uninformed player can be derived from a recursive structure of the conjugate value.
Extending these results to differential games with Markov rewards, Cardaliaguet~\yrcite{cardaliaguet2007differential} and Souquiere~\yrcite{souquiere2010approximation} confirmed the structures of incomplete-information games with one sided information on player type: (1) the game enjoys a primal-dual decomposition so that the informed player does not need to know the equilibrium of the uninformed player to compute his own; and (2) the value is convexified by belief splitting at the equilibrium. 
Recently, Hu et al.~\yrcite{deceptiongame} proposed independently a belief-space HJ formulation for zero-sum differential games with one-sided information. While their framework can incorporate state constraints, it does not reveal the above structure of the equilibrium strategies of such games.

\vspace{-0.05in}
\paragraph{Imperfect-information dynamic games} Since player types can be considered as static private states, our work belongs to the category of imperfect-information dynamic games, where more general dynamics and information structures (e.g., disturbances, partial observability, and delayed information sharing) are considered. Nayyar et al.~\yrcite{nayyar2013common} showed that the game can be reformulated as perfect-information by introducing a common belief state, provided that the belief is strategy independent.
This strategy-independence assumption is relaxed in Kartik and Nayyar~\yrcite{kartik2021upper} for zero-sum dynamic games by introducing past strategies as part of the players' information state. The general setting of Kartik and Nayyar~\yrcite{kartik2021upper}, however, does not facilitate a value existence proof. 
A significant amount of recent work build on top of common belief to approximate values of imperfect-information dynamic games (e.g., ReBeL~\cite{brown2020combining}, DeepNash~\cite{perolat2022mastering}, and SoG~\cite{schmid2023student}). Following Nayyar et al.~\yrcite{nayyar2013common}, these algorithms model behavioral strategies as \textit{prescriptions}, i.e., belief-conditioned action distributions. In addition, by taking advantage of the equivalence between local regret matching and Nash equilibrium in two-player zero-sum games~\cite{zinkevich2007regret}, no-regret algorithms~\cite{brown2018superhuman, brown2019superhuman, brown2020combining} have been developed for more scalably solving games with large action spaces and long time horizons than linear programming based methods~\cite{koller1995generating}. It should be noted that these algorithms scale linearly to the square-root of the action space, and thus induce high costs as the action space grows. 
While often disconnected, the studies on imperfect-information dynamic (or extensive-form) games and those on incomplete-information differential games are consistent in theory. Specifically, regret matching, i.e., solving subgame minimax problems with respect to behavioral strategies in the former, leads to strategies that satisfy the subdynamic programming principles stated in Cardaliaguet~\yrcite{cardaliaguet2007differential}, due to the fact that the behavioral strategies intrinsically convexify values. The key difference, however, is that regret matching algorithms do not enforce belief splitting. In practice, this means that the resultant strategy, often as a result of manually chosen action discretization, does not explicitly \textit{explain} whether a certain random action is to be taken in a given belief state in order to delay information release or to manipulate the belief in a specific way.


\vspace{-0.05in}
\section{State Constrained Zero-Sum Differential Games with One-Sided Information}
\label{sec:problem}
\vspace{-0.05in}

\paragraph{Preliminaries}
We consider a time-invariant deterministic dynamical system that defines the dynamics of the combined state $x$ of Players 1 and 2, whose control inputs are $u$ and $v$, respectively:
\vspace{-0.05in}
\begin{equation}
\left \{ \begin{array}{ll}
     \dot{x}(t) = f(x(t), u(t), v(t)), &  u(t) \in \mathcal{U}, ~v(t) \in \mathcal{V} \\
     x(t_0) = x_0 & \\
\end{array}
\right.   
\label{eq:dynamics}
\vspace{-0.05in}
\end{equation}
The game starts at $t_0 \in [0, T]$ with an initial state $x_0 \in \mathbb{R}^{d_x}$.
Denote $g_i: \mathbb{R}^{d_x} \rightarrow \mathbb{R}$ the terminal payoff functions for $i \in [I]$, each corresponding to a Player 1 type drawn from Nature's distribution $p = \{p_1, ..., p_I\} \in \Delta(I)$, where $\Delta(I)$ is an $I$-dimensional simplex;
denote $\mathcal{C} = \{x \in \mathbb{R}^{d_x} | c(x) \leq 0\}$ the set of feasible states. 
The goal of Player 1 is to minimize the expected payoff while keeping the state in $\mathcal{C}$. Player 1 receives $+\infty$ if state violation occurs; the goal of Player 2 is to maximize the expected payoff and hence may resort to violating the state constraint. We omit instantaneous payoffs (e.g., effort losses due to control) for conciseness, and discuss in Sec.~\ref{sec:numerical} modifications to the Bellman backup when common-knowledge instantaneous payoffs exist.  

The following assumptions will be used:
\begin{enumerate}
\vspace{-0.05in}
    \item $\mathcal{U}$ and $\mathcal{V}$ are compact and finite-dimensional sets;
    \vspace{-0.05in}
    \item $f: \mathbb{R}^{d_x} \times \mathcal{U} \times \mathcal{V} \rightarrow \mathbb{R}^{d_x}$ is bounded, continuous, and uniformly Lipschitz continuous with respect to $x$;
    \vspace{-0.05in}
    \item $g_i: \mathbb{R}^{d_x} \rightarrow \mathbb{R}$ for $i=1,...,I$ and $c: \mathbb{R}^{d_x} \rightarrow \mathbb{R}$ are Lipschitz continuous and bounded.
    \vspace{-0.05in}
    \item Isaacs' condition holds for the Hamiltonian $H: \mathbb{R}^{d_x} \times \mathbb{R}^{d_x} \rightarrow \mathbb{R}:$
        \begin{equation}
        \begin{aligned}
        H(x, \xi) &:= \min_{u \in \mathcal{U}} \max_{v \in \mathcal{V}} f(x, u, v)^T \xi \\
        &= \max_{v \in \mathcal{V}}\min_{u \in \mathcal{U}}f(x, u, v)^T \xi.
        \label{eq:hamiltonian}
        \end{aligned}
    \end{equation}
    \item Control inputs and states of both players are fully observable by all. The dynamics, the payoff set, and the equilibrium strategies are common knowledge to all. 
\end{enumerate}

\paragraph{Behavioral strategy} Let $\mathcal{A}(t)$ (resp. $\mathcal{D}(t)$) be the set of open-loop controls for Player 1 (resp. Player 2):
\vspace{-0.05in}
\begin{equation*}
\begin{aligned}
     \mathcal{A}(t) := \{\alpha: [t, T] \rightarrow \mathcal{U} ~|~ \text{Lebesgue measurable}\},\\
     \mathcal{D}(t) := \{\delta: [t, T] \rightarrow \mathcal{V} ~|~ \text{Lebesgue measurable}\}.
\end{aligned}
\vspace{-0.05in}
\end{equation*}
Following \cite{cardaliaguet2007differential}, 
we introduce $\mathcal{H}(t)$ (resp. $\mathcal{Z}(t)$) as the set of non-anticipative pure strategies with delay for Player 1 (resp. Player 2)~\cite{elliott1972existence}:
\vspace{-0.05in}
\begin{equation*}
\begin{aligned}
        \mathcal{H}(t) := \{&\eta: \mathcal{D}(t) \rightarrow \mathcal{A}(t) ~|~ \exists \tau > 0 \text{ such that }\\
        &\forall s \in (t, T-\tau) \text{ and } \delta, \bar{\delta} \in \mathcal{D}(t), \text{ if } \delta = \bar{\delta} \text{ a.e.} \\
        &\text{ in }[t,s], \text{then } \eta(\delta) = \eta(\bar{\delta}) \text{ a.e. in } [t,s + \tau]\}.\\
        \mathcal{Z}(t) := \{&\zeta: \mathcal{A}(t) \rightarrow \mathcal{D}(t) ~|~ \exists \tau > 0 \text{ such that } \\
        & \forall s \in (t, T-\tau) \text{ and } \alpha, \bar{\alpha} \in \mathcal{A}(t), \text{ if } \alpha = \bar{\alpha}\text{ a.e.} \\
        & \text{ in } [t,s], \text{then } \zeta(\alpha) = \zeta(\bar{\alpha})\text{ a.e. in } [t,s + \tau]\}
\end{aligned}
\vspace{-0.05in}
\end{equation*}
A behavioral (mixed) strategy for Player 1 is defined by a pair $((\Omega_{\eta}, \mathcal{F}_{\eta}, \textbf{P}_{\eta}), \eta)$, where $(\Omega_{\eta}, \mathcal{F}_{\eta}, \textbf{P}_{\eta})$ is a probability space 
and the strategy $\eta : \Omega_{\eta}\times \mathcal{D}(t) \rightarrow \mathcal{A}(t)$ is measurable and non-anticipative with delay, i.e., there is some $\tau > 0$ such that, for any $s \in (t, T - \tau)$ and $\delta$, $\bar{\delta} \in \mathcal{D}(t)$, if $\delta = \bar{\delta}$ a.e. in $[t, s]$ then $\eta(\omega, \delta) = \eta(\omega, \bar{\delta})$ a.e. in $[t, s + \tau]$ for any $\omega \in \Omega_{\eta}$. 
We denote the sets of behavioral strategies of Player 1 by $(\mathcal{H}_r(t))^I$ and the behavioral strategy of Player 2 by $\mathcal{Z}_r(t)$. With mild notational abuse, we will denote by $(\eta_i) \in (\mathcal{H}_r(t))^I$ a particular set of behavioral strategies of Player 1, and by $\zeta \in \mathcal{Z}_r(t)$ a particular behavioral strategy of Player 2. Lastly, we assume that $\eta_i$ for $i=1,...,I$ are defined on the same probability space.

\textit{Remarks.} Nonanticipative strategies \textit{with delay} are used, as opposed to ones without delay that are often used in complete-information games~\cite{elliott1972existence}, in order to enable Lemma~\ref{lemma:strategy} that associates random strategies with open-loop controls. This association will become useful in proving the existence of value of incomplete-information differential games and in value characterization (see discussions in \cite{cardaliaguet2007differential}):
\begin{lemma}
\label{lemma:strategy}
    (Lemma 2.2 of \cite{cardaliaguet2007differential}) For any pair $(\eta, \zeta) \in \mathcal{H}_r(t) \times \mathcal{Z}_r(t)$ and any $\omega := (\omega_1, \omega_2) \in \Omega_{\eta} \times \Omega_{\zeta}$, there is a unique pair $(\alpha_{\omega}, \delta_{\omega}) \in \mathcal{A}(t) \times \mathcal{D}(t)$ such that
    \vspace{-0.05in}
    \begin{equation}
        \eta(\omega_1, \delta_{\omega}) = \alpha_{\omega} \text{  and  } \zeta(\omega_2, \alpha_{\omega}) = \delta_{\omega}.
        \label{eq:random2det}
        \vspace{-0.05in}
    \end{equation}
    Furthermore the map $\omega \rightarrow (\alpha_{\omega}, \delta_{\omega})$ is measurable from $\Omega_{\eta}\times \Omega_{\zeta}$ endowed with $\mathcal{F}_{\eta} \otimes \mathcal{F}_{\zeta}$ into $\mathcal{A}(t) \times \mathcal{D}(t)$ endowed with the Borel $\sigma$-field associated with the $L^1$ distance.
\end{lemma}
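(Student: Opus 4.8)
The plan is to first \emph{freeze the randomness}: fix $\omega = (\omega_1, \omega_2) \in \Omega_{\eta} \times \Omega_{\zeta}$ and treat $\eta(\omega_1, \cdot) : \mathcal{D}(t) \to \mathcal{A}(t)$ and $\zeta(\omega_2, \cdot) : \mathcal{A}(t) \to \mathcal{D}(t)$ as two deterministic nonanticipative-with-delay maps, recovering measurability in $\omega$ only at the end. For such fixed $\omega$, producing $(\alpha_{\omega}, \delta_{\omega})$ satisfying \eqref{eq:random2det} is equivalent to finding a fixed point of the composition $\Psi := \zeta(\omega_2, \cdot) \circ \eta(\omega_1, \cdot) : \mathcal{D}(t) \to \mathcal{D}(t)$: once $\delta_{\omega} = \Psi(\delta_{\omega})$ is found, setting $\alpha_{\omega} := \eta(\omega_1, \delta_{\omega})$ solves both equations simultaneously, and conversely any solution of \eqref{eq:random2det} yields such a fixed point.

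The existence and uniqueness of this fixed point is where the \emph{delay} is essential. Let $\tau > 0$ be a common delay for $\eta$ and $\zeta$ (take the smaller of the two). The delay propagates through the composition: if $\delta = \bar{\delta}$ a.e. on $[t, s]$, then $\eta(\omega_1, \delta) = \eta(\omega_1, \bar{\delta})$ a.e. on $[t, s + \tau]$, whence $\Psi(\delta) = \Psi(\bar{\delta})$ a.e. on $[t, s + 2\tau]$. I would then build the fixed point block by block on the finitely many intervals $[t + (k-1)\tau, \, t + k\tau]$, $k = 1, \dots, \lceil (T - t)/\tau \rceil$. On the first block the delay forces the outputs of both strategies to be independent of their inputs, since any two admissible controls agree a.e. on the degenerate interval $\{t\}$; hence $\alpha_{\omega}$ and $\delta_{\omega}$ are uniquely pinned down on $[t, t + \tau]$. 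Inductively, once the pair is frozen on $[t, t + k\tau]$, the delay property determines both strategies' outputs on $[t, t + (k+1)\tau]$ from data already fixed on $[t, t + k\tau]$, extending the solution uniquely. After finitely many steps the pair $(\alpha_{\omega}, \delta_{\omega})$ is defined and unique on all of $[t, T]$, giving existence and uniqueness in \eqref{eq:random2det}.

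For measurability of $\omega \mapsto (\alpha_{\omega}, \delta_{\omega})$, I would observe that the construction is a finite concatenation of operations of the form ``apply $\eta(\omega_1, \cdot)$ (resp.\ $\zeta(\omega_2, \cdot)$) to a control that is itself a measurable function of $\omega$, then restrict and glue on a fixed subinterval.'' Behavioral strategies are \emph{jointly} measurable in (randomness, input) by definition, and restriction and concatenation of controls are continuous for the $L^1$ distance; so each block-step carries a measurable map $\Omega_{\eta} \times \Omega_{\zeta} \to \mathcal{D}(t)$ (or $\to \mathcal{A}(t)$) into another one. As a finite composition of measurable maps between separable metric spaces carrying their Borel $\sigma$-fields, the resulting $\omega \mapsto (\alpha_{\omega}, \delta_{\omega})$ is measurable into $\mathcal{A}(t) \times \mathcal{D}(t)$ with the $L^1$ Borel structure, as claimed.

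The main obstacle I anticipate is not the block induction itself but the measurability bookkeeping around it. Because strategies may be discontinuous in their input, I cannot invoke continuity of $\Psi$ and must lean entirely on the delay $\tau > 0$ to decouple the fixed point into explicitly solvable blocks --- which is precisely why nonanticipative strategies \emph{with delay} are assumed. The delicate points are verifying that the intermediate controls obtained by gluing genuinely lie in $\mathcal{A}(t)$ and $\mathcal{D}(t)$ (Lebesgue-measurable, $\mathcal{U}$- and $\mathcal{V}$-valued) and that the evaluation maps, composed with measurable selections of their inputs, remain jointly measurable for the $L^1$-Borel $\sigma$-fields; these are routine but must be checked at each of the finitely many steps.
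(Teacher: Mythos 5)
Your proposal is correct and is essentially the argument of the cited source: this paper does not reprove the lemma but imports it verbatim from Cardaliaguet~\yrcite{cardaliaguet2007differential}, whose proof is exactly your scheme --- freeze $\omega$, use the delay $\tau$ to pin down $(\alpha_\omega,\delta_\omega)$ on $[t,t+\tau]$ independently of any input, extend uniquely block by block since the output on $[t,t+(k+1)\tau]$ depends only on the input on $[t,t+k\tau]$, and obtain measurability from the joint measurability of $\eta$ and $\zeta$ composed finitely many times with the measurable maps built at the previous step. The only cosmetic caveat is that the base case (outputs on $[t,t+\tau]$ independent of input) should be read off from the delay condition interpreted to include the vacuous case $s=t$, as in Cardaliaguet's original definition.
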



\vspace{-0.05in}
\paragraph{Backward reachable set} 

Let $X_{\tau}^{t_0,x_0,\alpha,\delta}$ be the solution of Eq.~\eqref{eq:dynamics} at $t = \tau$ when starting at $(t_0, x_0)$ and following $(\alpha,\delta)$.
With behavioral strategies $(\eta, \zeta)$ and initials $(t_0, x_0)$,
we denote by 
$\mathcal{X}^{t_0,x_0,\alpha, \delta}_{\tau}$ the trajectory of states reachable by $(\alpha, \delta)$ within $[t_0, \tau]$, and $\mathcal{X}^{t_0,x_0,\eta, \zeta}_{\tau}$ as states reachable by $(\eta, \zeta)$ within $[t_0, \tau]$:
\vspace{-0.05in}
\begin{equation*}
\mathcal{X}^{t_0,x_0,\eta, \zeta}_{\tau} := \bigcup_{\omega \in \Omega_{\eta} \times \Omega_{\zeta}} \mathcal{X}^{t_0,x_0,\alpha_{\omega}, \delta_{\omega}}_{\tau}
\vspace{-0.05in}
\end{equation*}
where $(\alpha_{\omega}, \delta_{\omega})$ is defined by Eq.~\eqref{eq:random2det}. Introduce $\rho(\mathcal{S}) = 1$ if $\mathcal{S} \subseteq \mathcal{C}$, and otherwise $\rho(\mathcal{S}) = +\infty$;
and the backward reachable (infeasible) set as
\vspace{-0.05in}
\begin{equation*}
\begin{aligned}
        \bar{\mathcal{Q}}(t) := \{x \in \mathbb{R}^{d_x} ~|~ & \forall \eta \in \mathcal{H}_r(t), \exists \zeta \in \mathcal{Z}_r(t), \tau \in (t, T], \\
        & s.t., \rho\left(\mathcal{X}^{t,x,\eta, \zeta}_{\tau}\right) = +\infty\}. 
\end{aligned}
\vspace{-0.05in}
\end{equation*}
$\mathcal{Q}(t):= \mathbb{R}^{d_x}\setminus \bar{\mathcal{Q}}(t)$ is the set of \textit{feasible states}. $\mathcal{Q}(T) = \mathcal{C}$. Lastly, we use $\bar{\rho}(t,x) = 1$ if $x \in \mathcal{Q}(t)$ and otherwise $\bar{\rho}(t,x) = +\infty$.

\vspace{-0.05in}
\paragraph{Payoff and value} We define the expected payoff of player type $i$ for taking behavioral strategies $(\eta, \zeta)$ as
\vspace{-0.05in}
\begin{equation*}
\small
\begin{aligned}
    G_i(t_0, x_0, \eta, \zeta) &:= \mathbb{E}_{\eta, \zeta} \left[g_i(X_{T}^{t_0,x_0,\eta,\zeta})\rho(\mathcal{X}^{t_0,x_0,\eta, \zeta}_{\tau})\right] \\ 
    &= \int_{\Omega_{\eta} \times \Omega_{\zeta}}  g_i\left(X_{T}^{t_0,x_0,\alpha_{\omega},\delta_{\omega}}\right)\\ 
    &\quad \quad \quad \quad \rho(\mathcal{X}^{t_0,x_0,\alpha_{\omega}, \delta_{\omega}}_{\tau}) d \textbf{P}_{\eta} \otimes \textbf{P}_{\zeta}(\omega).
\end{aligned}
\vspace{-0.05in}
\end{equation*}
The payoff of Player 1 is $\sum_{i=1}^I p_i G_i(t_0, x_0, \eta_i, \zeta)$.
With strategys $(\eta_i)\in (\mathcal{H}_r(t_0))^I$ and $\zeta \in \mathcal{Z}_r(t_0)$, the upper value function is defined by
\begin{equation*}
\small
        V^+(t_0, x_0, p) := \inf_{(\eta_i)} \sup_{\zeta} \sum_{i=1}^I p_i G_i(t_0, x_0, \eta_i, \zeta),
\label{eq:valueplus}
\end{equation*}
and the lower value function is given by
\begin{equation*}
\small
        V^-(t_0, x_0, p) :=\sup_{\zeta} \inf_{(\eta_i)} \sum_{i=1}^I p_i G_i(t_0, x_0, \eta_i, \zeta).
\label{eq:valueminus}
\end{equation*}

\vspace{-0.1in}
\paragraph{The existence of the value} While the existence of value is proven for both zero-sum complete-information state-constrained differential games~\cite{lee2022safety} and zero-sum differential games with one-sided information~\cite{cardaliaguet2009numerical}, the proof for games with both one-sided information and state constraints is missing. Our main theoretical result fills in this gap (see Appendix \ref{sec:proofexist} for the proof):

\begin{theorem}
\label{theorem:existence}
    If assumptions 1-5 hold, we have
    $V^+(t,x,p)=V^-(t,x,p)$ for all $(t,x,p) \in [0,T] \times \mathbb{R}^{d_x} \times \Delta(I)$.
    \vspace{-0.05in}
\end{theorem}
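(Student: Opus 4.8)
The plan is to establish the two inequalities $V^+ \le V^-$ and $V^- \le V^+$ separately, where the reverse inequality $V^- \le V^+$ is immediate from the minimax-maximin definitions, so the substance lies in showing $V^+ \le V^-$. My approach follows the template of Cardaliaguet~\yrcite{cardaliaguet2009numerical} but must accommodate the discontinuity introduced by the constraint penalty $\rho$ and the feasibility indicator $\bar\rho$. The central object will be the convexified-in-belief value: I would first show that both $V^+$ and $V^-$ are convex as functions of $p \in \Delta(I)$, by exploiting the fact that Player 1's behavioral strategies $(\eta_i)$ are indexed by type and that mixing strategies across types convexifies the objective $\sum_i p_i G_i$. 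Convexity in $p$ is what allows belief-splitting arguments to go through and is the structural feature inherited from the information-asymmetry setting.

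\textbf{Step one: reduction to constrained dynamics.} I would absorb the state constraint into the payoff by working with the augmented terminal cost $g_i(x)\,\bar\rho(t,x)$ and treating $\bar{\mathcal{Q}}(t)$ as the region where Player 1 is forced to concede $+\infty$. The key preliminary is to verify that $\mathcal{Q}(t)$ (equivalently $\bar{\mathcal{Q}}(t)$) is well-behaved: using Assumptions 1--3, the backward reachable infeasible set should be closed-valued and vary in a controlled way in $t$, so that the effective payoff is lower semicontinuous where it matters. This is the analogue of the machinery in Lee and Tomlin~\yrcite{lee2022safety} for the complete-information constrained case, and it is what lets the constrained value inherit the comparison-principle structure of the unconstrained one.

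\textbf{Step two: the comparison/Isaacs argument.} On the feasible set $\mathcal{Q}(t)$, I would run the standard dynamic-programming-plus-comparison argument: derive a sub/super-dynamic programming principle for $V^+$ and $V^-$ respectively over a short time interval $[t, t+h]$, invoke Isaacs' condition (Assumption 4) to interchange the $\min_u\max_v$ ordering in the local Hamiltonian, and use Lemma~\ref{lemma:strategy} to convert the behavioral strategies into the open-loop controls $(\alpha_\omega,\delta_\omega)$ that actually drive the trajectory. The nonanticipativity-with-delay is essential here precisely because it guarantees, via Lemma~\ref{lemma:strategy}, the existence of the unique fixed-point control pair needed to evaluate $G_i$. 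The convexity in $p$ established above, combined with Isaacs' condition, is what forces the upper and lower local Hamiltonians to coincide, and an iteration over a partition of $[t_0,T]$ then propagates the equality $V^+ = V^-$ backward from the terminal condition $V^\pm(T,x,p)=\sum_i p_i g_i(x)\bar\rho(T,x)$.

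\textbf{The main obstacle} I anticipate is handling the interface between the finite-value region $\mathcal{Q}(t)$ and the $+\infty$ region $\bar{\mathcal{Q}}(t)$: the value is genuinely discontinuous across $\partial\mathcal{Q}(t)$, so the usual continuity estimates that underpin the comparison principle break down there, and one cannot directly quote the unconstrained proof. I would deal with this by a penalization/approximation scheme, replacing $\rho$ with a sequence of bounded Lipschitz penalties $\rho^\epsilon \uparrow \rho$, proving $V^{+,\epsilon}=V^{-,\epsilon}$ for each $\epsilon$ by the argument above where everything is continuous, and then passing to the limit $\epsilon \to 0$. The delicate point in the limit is ensuring that the convexity in $p$ and the value equality are both preserved under the monotone limit, which should follow from monotone convergence of the payoffs together with the fact that the infimum over Player 1's strategies and supremum over Player 2's strategies commute appropriately in the limit on the feasible set.
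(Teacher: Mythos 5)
Your overall skeleton is right in two places: $V^-\le V^+$ is indeed immediate from the definitions, and convexity of $V^\pm$ in $p$ via type-indexed strategy splitting is exactly the paper's Lemma~\ref{lemma:convex}. But your Step two contains the gap that sinks the argument. You propose to derive dynamic programming principles for $V^+$ and $V^-$ directly, let Isaacs' condition make the local Hamiltonians coincide, and iterate the equality backward over a partition of $[t_0,T]$. In a game with one-sided information this is precisely what cannot be done: the lower value $V^-$ does not satisfy a usable (primal) dynamic programming principle, because the uninformed player's optimal continuation after $[t,t+h]$ depends on the updated posterior, which in turn depends on the informed player's strategy; there is no single belief $p$ at which to evaluate $V^-(t+h,\cdot,\cdot)$ on the right-hand side. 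Convexity in $p$ plus Isaacs' condition does not repair this --- the obstruction is informational, not a min/max-ordering issue. The paper's route (following Cardaliaguet and De Meyer) is to pass to the Fenchel conjugate $V^{-*}$ in the dual variable $\hat p$, reformulate it as the value of a dual game (Lemma~\ref{lemma:reformulation}), prove a subdynamic principle for $V^{-*}$ (Lemma~\ref{lemma:subdynamic}) and hence that $V^{-*}$ is a viscosity subsolution of a dual HJ equation (Theorem~\ref{lemma:dualsubsolution}), i.e.\ $V^-$ is a supersolution \emph{in the dual sense}; separately, $V^+$ satisfies a primal subdynamic principle (Lemma~\ref{lemma:subdynamic2}) and is a subsolution in the dual sense (Lemma~\ref{lemma:primalsubsolution}); a comparison principle for this dual notion of sub/supersolution then yields $V^+\le V^-$. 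None of this machinery --- the conjugate, the dual game, the comparison in the dual sense --- appears in your proposal, and without it the backward iteration you describe has nothing to iterate on the $V^-$ side.

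A secondary divergence: you handle the constraint by monotone penalization $\rho^\epsilon\uparrow\rho$, whereas the paper keeps $\rho$ exact and absorbs it into an obstacle-type HJ equation of the form $\min\{\rho\, U^{-*}-w,\; w_t+H^*\}=0$, tracking the indicator through the subdynamic principle by a case analysis on whether violation occurs before or after the intermediate time $t_1$ (the $\rho_0$ versus $\rho_1^l$ cases in Lemma~\ref{lemma:subdynamic}). Your penalization idea is not unreasonable, but, as you yourself note, the passage to the limit requires exchanging a monotone limit with $\inf_{(\eta_i)}\sup_\zeta$ while preserving both convexity and the value equality; you do not supply that argument, and it is not obviously easier than the paper's direct treatment. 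Even granting it, the limit step cannot rescue Step two.
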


\vspace{-0.05in}
\paragraph{Characterization of the value}
We need to first characterize the value of the unconstrained game since this value will later appear in that of the state-constrained game. 

Let $U: [0,T] \times \mathbb{R}^{d_x} \times \Delta(I) \rightarrow \mathbb{R}$ be the value of the unconstrained version of the game, and $U^*: [0,T] \times \mathbb{R}^{d_x} \times \mathbb{R}^I \rightarrow \mathbb{R}$ its convex conjugate: 
\begin{equation*}
\begin{aligned}
    U^*(t, x, \hat{p}) := &\sup_{p \in \Delta(I)} \hat{p}^Tp - U(t, x, p) \quad \\
    & \forall (t, x, \hat{p}) \in [0, T] \times \mathbb{R}^{d_x} \times \mathbb{R}^{I}. 
    \end{aligned}
\end{equation*}
We have the following properties for $U$ and $U^*$:
\begin{enumerate}[noitemsep]
\vspace{-0.1in}
    \item $U$ is Lipschitz continuous in $(t,x,p)$ and convex to $p$. $U(T,x,p) = \sum_{i=1}^I p_i g_i(x)$, $\forall (x,p) \in \mathbb{R}^{d_x} \times \Delta(I)$.
    $U^*$ is Lipschitz continuous in $(t,x,\hat{p})$ and convex to $\hat{p}$. $U^*(T,x,\hat{p}) = \max_{i \in [I]} \hat{p}_i - g_i(x)$, $\forall (x,\hat{p}) \in \mathbb{R}^{d_x} \times \mathbb{R}^I$. 
    \item For any $p \in \Delta(I)$, $(t,x) \rightarrow U(t,x,p)$ is a viscosity subsolution to the primal HJ equation
    \vspace{-0.05in}
    \begin{equation*}
        w_t + H(x, Dw) = 0,
        \vspace{-0.05in}
    \end{equation*}
    where $H$ is defined by Eq.~\eqref{eq:hamiltonian}.
    \item For any $\hat{p} \in \mathbb{R}^I$, $(t,x) \rightarrow U^*(t,x,p)$ is a viscosity subsolution to the dual HJ equation
    \vspace{-0.05in}
    \begin{equation*}
        w_t + H^*(x, Dw) = 0,
        \vspace{-0.05in}
    \label{eq:dualHJ_uncons}
    \end{equation*}
    where $H^*(x, \xi) = -H(x, -\xi)$ $\forall~(x, \xi) \in \mathbb{R}^{d_x} \times \mathbb{R}^{d_x}$.
    \vspace{-0.2in}
\end{enumerate}

The conjugate $U^*$ defines the value of a \textit{dual game} where Player 2 minimizes her payment, $U^*(T,x,\hat{p})$, to Player 1 where $\hat{p}$ is common knowledge. We note that by definition (see \cite{de1996repeated}), the dual variables $\hat{p}$ are the info-state values defined in Brown et al.~\yrcite{brown2020combining}, i.e., $\hat{p}[i]$ captures the value of Player 1 if he is of type $i$ and plays the best response to Player 2's equilibrium strategy. De Meyer~\yrcite{de1996repeated} showed that when $\hat{p}\in \partial_p U(0,x,p)$, Player 2's strategy in the dual game is her equilibrium in the primal game. We note that Player 2's strategy cannot be derived from the primal subdynamic principle because her best response is dependent on Player 1's type which is unknown to her. The introduction of the dual game allows us to derive a subdynamic principle of the conjugate value from where her equilibrium strategy can be computed.

For the state-constrained game, the following corollary is a result of the subdynamic principles derived from Theorem~\ref{theorem:existence}, and will guide the numerical approximation of values for the state-constrained game (Sec.~\ref{sec:numerical}): 

\begin{corollary}
    Under assumptions 1-5, the value function $V := V^+ = V^-$ (resp. $V^*$) is a unique function defined on $[0,T] \times \mathbb{R}^{d_x} \times \Delta(I)$ (resp. $[0,T] \times \mathbb{R}^{d_x} \times \mathbb{R}^{I}$) such that:
    \begin{enumerate}[noitemsep]
        \item $V$ is convex respect to $p$ and
        \begin{equation}
            V(T, x, p) = \rho(x) U(T,x,p) \; \forall (x, p) \in \mathbb{R}^{d_x} \times \Delta(I);
            \label{eq:primalBC}
        \end{equation}
        $V^*$ is convex respect to $\hat{p}$ and
        \begin{equation}
            V^*(T, x, \hat{p}) = \max_{i \in [I]} \hat{p}_i - \rho(x)g_i(x)  \; \forall (x, \hat{p}) \in \mathbb{R}^{d_x} \times \mathbb{R}^I
            \label{eq:dualBC}
        \end{equation}
        \item For all $p \in \Delta(I)$, $(t,x) \rightarrow V(t,x,p)$ is a viscosity subsolution to the primal HJ equation
        \begin{equation}
            \min\{\rho(x)U(t,x,p)-w, w_t + H(x, Dw)\} = 0.
        \label{eq:primalHJ}
        \end{equation}
        
        \item For all $\hat{p} \in \mathbb{R}^I$, $(t,x,z) \rightarrow V^*(t,x,z,p)$ is a viscosity subsolution to the dual HJ equation
        \begin{equation}
            \min\{\rho(x)U^*(t,x,\hat{p}/\rho(x)), w_t + H^*(x, Dw)\} = 0.
        \label{eq:dualHJ}
        \end{equation}
    \end{enumerate}
    \vspace{-0.1in}
\end{corollary}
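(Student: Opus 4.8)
The plan is to follow the template that characterizes the unconstrained value $U$ and its conjugate $U^*$ (the three properties stated just above, which are Cardaliaguet's result~\cite{cardaliaguet2007differential}) and to superimpose the state constraint as the obstacle terms in~\eqref{eq:primalHJ} and~\eqref{eq:dualHJ}, using Theorem~\ref{theorem:existence} and the subdynamic programming principles it yields for $V$ and $V^*$. Two structural facts will carry much of the weight. First, since replacing the terminal payoff by its $+\infty$-penalized version can only raise the payoff along every realized trajectory, a fixed pair of behavioral strategies gives $G_i^{\text{cons}}\ge G_i^{\text{unc}}$; passing to the inf--sup that defines the value (equal to the sup--inf by Theorem~\ref{theorem:existence}) yields the monotonicity
\[
V(t,x,p)\ge U(t,x,p),\qquad V^*(t,x,\hat p)\le U^*(t,x,\hat p).
\]
Second, I would identify the finiteness domain of $V(t,\cdot,p)$ with the feasible set $\mathcal{Q}(t)$ of the backward-reachability analysis: off $\mathcal{Q}(t)$ the value is $+\infty$ and the subsolution conditions are vacuous, while $\mathcal{Q}(T)=\mathcal{C}$ supplies the terminal data. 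These monotonicity and finiteness facts will both dispatch one branch of each variational inequality in the region where the constraint is slack and serve as inputs to the comparison principle.

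For the boundary conditions, at $t=T$ no motion occurs, so the payoff of type $i$ is $g_i(x)$ when $x\in\mathcal{C}$ and $+\infty$ otherwise, i.e. $\rho(x)U(T,x,p)=\rho(x)\sum_i p_i g_i(x)$, which is~\eqref{eq:primalBC}; the dual condition~\eqref{eq:dualBC} then follows by Fenchel conjugation in $p$ over $\Delta(I)$, since the conjugate of $p\mapsto\rho(x)\sum_i p_i g_i(x)$ is $\hat p\mapsto\max_{i}\big(\hat p_i-\rho(x)g_i(x)\big)$. Convexity of $V(t,x,\cdot)$ in $p$ is obtained by transcribing the Aumann--Maschler splitting argument: given $p=\lambda p^1+(1-\lambda)p^2$, Player 1 uses the common probability space of Lemma~\ref{lemma:strategy} to randomize, per type, which of his $p^1$- and $p^2$-optimal behavioral strategies to deploy so that the induced posteriors reproduce $p^1,p^2$; because the factor $\rho$ multiplies each realized payoff without coupling to this auxiliary randomization, the unconstrained estimate carries over verbatim and gives $V(t,x,p)\le\lambda V(t,x,p^1)+(1-\lambda)V(t,x,p^2)$. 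Convexity of $V^*$ in $\hat p$ is automatic, being a supremum of affine functions.

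The core step is the subsolution property, which I would prove by the same two-case template for both~\eqref{eq:primalHJ} and~\eqref{eq:dualHJ}. Fix $p$ (resp. $\hat p$) and let $\phi\in C^1$ touch $V(\cdot,\cdot,p)$ (resp. $V^*(\cdot,\cdot,\hat p)$) from above at a point $(t_0,x_0)$ with $x_0\in\mathcal{Q}(t_0)$. If the obstacle argument of the minimum is already non-positive there, the subsolution inequality holds with nothing to prove; in the interior of $\mathcal{C}$ the monotonicity relations fix the sign of this obstacle argument, so the verification reduces to a single one of the two cases. Otherwise the constraint is locally inactive, and I invoke the subdynamic programming principle delivered by Theorem~\ref{theorem:existence}: over a short horizon $[t_0,t_0+h]$ the informed player can keep the trajectory feasible (freezing $\rho$ at $1$) while playing $\varepsilon$-optimally, producing a one-sided dynamic-programming inequality; inserting $\phi$, using Isaacs' condition~\eqref{eq:hamiltonian}, dividing by $h$ and letting $h\downarrow0$ yields $\phi_t+H(x_0,D\phi)\le0$ in the primal case and $\phi_t+H^*(x_0,D\phi)\le0$ in the dual case. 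The dual argument replaces Player 1's splitting by Player 2's and uses the conjugate boundary data and $H^*$ (cf.~\cite{de1996repeated,souquiere2010approximation}).

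Uniqueness -- the assertion that $V$ (resp. $V^*$) is the \emph{unique} function with these properties -- is where I expect the real difficulty. The plan is to adapt Cardaliaguet's coupled comparison argument~\cite{cardaliaguet2007differential}: one compares the primal subsolution of one candidate against the \emph{dual} subsolution of another through a doubling-of-variables estimate intertwined with the Legendre transform in the belief variable, exploiting convexity to turn a bound on the conjugates into a bound on the functions themselves. The new obstacles are genuinely two. The values are only locally bounded on $\mathcal{Q}(t)$ and blow up at $\partial\mathcal{Q}(t)$, so the comparison machinery must be localized to the interior of the feasible set and supplemented by a separate proof that any two candidates share the same finiteness domain $\mathcal{Q}(t)$ -- this is where the backward-reachability characterization and assumptions 1--4 enter, adapting state-constrained comparison techniques (cf.~\cite{lee2022safety}). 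And the obstacle terms $\rho U,\rho U^*$ must be carried through the doubling argument, which needs the Lipschitz regularity of $U,U^*$ (their stated property~1) and the equicontinuity of $H,H^*$ furnished by Isaacs' condition~\eqref{eq:hamiltonian}. Closing this comparison despite the discontinuity at the reachability boundary is the crux; the remaining steps follow the unconstrained template.
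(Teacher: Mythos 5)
Your architecture is the one the paper uses: convexity by the Aumann--Maschler splitting argument (the paper's Lemma~\ref{lemma:convex}), terminal identities \eqref{eq:primalBC}--\eqref{eq:dualBC} by inspection and Fenchel conjugation over $\Delta(I)$, the viscosity properties extracted from subdynamic programming principles via a test-function computation with $h\downarrow 0$ and Isaacs' condition, and uniqueness deferred to Cardaliaguet's comparison principle (which the paper also only cites, without re-proving it in the constrained setting). The corollary is assembled in the paper from Lemmas~\ref{lemma:regularity}, \ref{lemma:convex}, \ref{lemma:subdynamic}, \ref{lemma:subdynamic2} and the viscosity arguments of Appendix~\ref{sec:proofexist}, so at the level of structure there is no disagreement, and your identification of the finiteness domain with $\mathcal{Q}(t)$ and of the comparison principle as the crux is accurate.

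The gap is in how you propose to obtain the obstacle branches of \eqref{eq:primalHJ} and \eqref{eq:dualHJ}. You want to dispatch them with the a priori monotonicity $V\ge U$, $V^*\le U^*$ plus a case split on the sign of the obstacle argument. This cannot substitute for the actual derivation: for the primal inequality the monotonicity makes $\rho(x)U-V\le 0$ throughout $\mathcal{C}$, so the obstacle branch is either rendered vacuous (standard subsolution convention) or contradicted (the convention actually used in Appendix~\ref{sec:proofexist}, where \emph{both} arguments of the min are shown nonnegative against test functions touching from above); and the dual obstacle is not $U^*-w$ but the rescaled $\rho(x)U^*(t,x,\hat p/\rho(x))-w$, whose form your proposal never derives. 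In the paper these branches are not consequences of monotonicity at all: they fall out of the \emph{constrained} subdynamic principles (Lemmas~\ref{lemma:subdynamic} and \ref{lemma:subdynamic2}), whose proof is the real technical content of the extension --- the responding player's $\epsilon$-optimal strategy is concatenated over a Borel partition of the reachable set with a delay, the feasibility indicator of the glued trajectory is factored as $\max\{\rho_0,\rho_1^l\}$, and a case analysis on whether violation is forceable before or after the intermediate time $t_1$ produces \emph{simultaneously} the continuation bound in $V^{-*}$ and the unconstrained bound $\rho_0\,U^{-*}(\cdot,\hat p/\rho_0)$. Your sentence that the informed player ``can keep the trajectory feasible (freezing $\rho$ at $1$) while playing $\varepsilon$-optimally'' assumes away exactly this construction, which is the step that distinguishes the constrained corollary from Cardaliaguet's unconstrained characterization of $U$ and $U^*$. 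The remainder of your plan (splitting, boundary data, the Hamiltonian branch of the test-function argument, and the flagged difficulties in the comparison step) tracks the paper.
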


\vspace{-0.1in}
\section{Bellman Backup and Behavioral Strategies}\label{sec:numerical}
\vspace{-0.05in}
Discrete-time Bellman backup computes an approximated value $V_{\tau}(t_k, \cdot, \cdot): \mathbb{R}^{d_x} \times \Delta(I) \rightarrow \mathbb{R}$, with time step $\tau = T/L$ for some large $L$ and $t_k = k\tau$ for $k = 0,...,L$:
\begin{enumerate}[label=(\roman*)]
\vspace{-0.05in}
    \item At the terminal time, set $V_{\tau}(T, x, p) = \rho(x)\sum_i p_i g_i(x)$.
    \item At $k \in \{0, ..., L-1\}$
    \begin{equation}\label{eq:backup_cons}
    V_\tau(t_k, x, p) = \rho(x)\text{Vex}_p \left(\min_u \max_v V_\tau (t_{k+1}, x', p)\right),
    \end{equation}
    where $x' = x + \tau f(x, u, v)$ and $\text{Vex}_p(\cdot)$ is the convex hull with respect to $p$. 
\end{enumerate}

Let $l: \mathcal{U}\times\mathcal{V} \rightarrow \mathbb{R}$ be a Lipschitz continuous and bounded function that represents the instantaneous payoff of the game. To incorporate $l$, Eq.~\eqref{eq:backup_cons} becomes
\begin{equation}
\label{eq:backup_cons_instantaneous}
\small
    V_\tau(t_k, x, p) = \rho(x)\text{Vex}_p \left(\min_u \max_v V_\tau (t_{k+1}, x', p) + \tau l(u,v)\right)
\end{equation}
Theorem~\ref{thm:value_convergence} states that $V_{\tau}$ uniformly converges to $V$ as $\tau \rightarrow 0^+$ (see proof in Appendix~\ref{sec:value_convergence}):

\begin{theorem}\label{thm:value_convergence}
    If assumptions 1-5 hold,
    $V_\tau$ converges uniformly to $V$ on compact subsets of $[0, T] \times \mathbb{R}^{d_x} \times \Delta(I)$:
    \begin{align*}
        &\lim_{\scriptstyle \begin{array}{l}
             \tau \rightarrow 0^+,\; t_k \rightarrow t,\\
             x' \rightarrow x, p' \rightarrow p 
        \end{array}} V_\tau(t_k, x', p') = V(t, x, p)\\ &\forall (t, x, p) \in [0, T] \times \mathbb{R}^{d_x} \times \Delta(I).
    \end{align*}
    \vspace{-0.2in}
\end{theorem}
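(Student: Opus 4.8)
The plan is to adapt the half-relaxed-limit method of Barles--Souganidis to the incomplete-information, state-constrained scheme \eqref{eq:backup_cons}, following the template of Cardaliaguet~\yrcite{cardaliaguet2009numerical} for the unconstrained game, but replacing the classical scalar comparison principle by the primal--dual comparison underlying the preceding corollary and Theorem~\ref{theorem:existence}. The scheme is monotone (the operators $\min_u\max_v$, $\text{Vex}_p$, and multiplication by $\rho(x)\in\{1,+\infty\}$ are all order-preserving), which is the ingredient that makes relaxed limits well behaved.

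First I would establish $\tau$-uniform a priori estimates. Since each $g_i$ is bounded (Assumption~3) and $\min_u\max_v$ and $\text{Vex}_p$ are non-expansive in the sup norm, the terminal bound $\max_i\|g_i\|_\infty$ propagates backward, so $V_\tau$ is uniformly bounded wherever it is finite. Using that $x\mapsto x+\tau f(x,u,v)$ is Lipschitz (Assumption~2), a discrete Gr\"onwall argument together with the Lipschitz continuity of $g_i$ and $c$ (Assumption~3) gives an $x$-Lipschitz estimate with constant growing at most like $e^{CT}$, uniform in $\tau$, on compact subsets of the interior of the feasible region; convexity in $p$ is produced and preserved exactly by the terminal $\text{Vex}_p$, and $p$-Lipschitzness follows from that of the $g_i$. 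The delicate point is the factor $\rho(x)$: these estimates degenerate across the free boundary where the value jumps to $+\infty$, so I would separately track the discrete infeasible set $\{x:V_\tau(t_k,x,p)=+\infty\}$ and show it converges, in the Hausdorff sense on compacts, to the backward reachable set $\bar{\mathcal{Q}}(t_k)$.

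On compact subsets of the interior of the feasible region the estimates give relative compactness (Arzel\`a--Ascoli), so I would pass to the relaxed limits $\overline V=\limsup^{*}V_\tau$ and $\underline V=\liminf_{*}V_\tau$ and identify them through consistency. A Taylor expansion of $V_\tau(t_{k+1},x+\tau f,p)$ in \eqref{eq:backup_cons}, combined with the Isaacs identity (Assumption~4) that collapses $\min_u\max_v$ into $H$ from \eqref{eq:hamiltonian}, yields consistency with $w_t+H(x,Dw)=0$, while the leading $\rho(x)$ factor reproduces the obstacle term of \eqref{eq:primalHJ}; hence $\overline V$ is a convex-in-$p$ subsolution of the primal obstacle equation with terminal data \eqref{eq:primalBC}. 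Conjugating the scheme in $p$ (using $\text{Vex}_p W=W^{**}$ and $H^{*}(x,\xi)=-H(x,-\xi)$) produces a scheme for $V_\tau^{*}$ consistent with the dual equation \eqref{eq:dualHJ}, whose upper relaxed limit is a subsolution of the dual obstacle equation with data \eqref{eq:dualBC}; since $V_\tau$ is convex in $p$, conjugation passes to the limit and this limit is the conjugate of $\underline V$.

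I would then conclude by the primal--dual comparison lemma behind the corollary and Theorem~\ref{theorem:existence}, which dominates any convex-in-$p$ primal subsolution by any function whose conjugate is a dual subsolution, given ordered terminal data. Comparing $\overline V$ against $V$ (whose conjugate is a dual subsolution by the corollary) gives $\overline V\le V$; comparing $V$ against $\underline V$, whose conjugate is a dual subsolution by the previous step, gives $V\le\underline V$; since $\underline V\le\overline V$ by definition, all three coincide and $V_\tau\to V$ locally uniformly. I expect this comparison step, made compatible with the $+\infty$ discontinuity of $V$ across $\partial\bar{\mathcal{Q}}(t)$, to be the main obstacle: a single primal subsolution inequality plus convexity does \emph{not} pin the value down, so the argument must be run simultaneously on $V_\tau$ and its $p$-conjugate and only in the interior of the feasible region. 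The behavior near the free boundary is then controlled through the reachable-set approximation of the first step, which is what lets the interior convergence extend to the compact subsets in the statement on which $V$ itself may equal $+\infty$.
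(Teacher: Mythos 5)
Your proposal is correct and follows essentially the same route as the paper's proof: a Barles--Souganidis-type argument for the monotone scheme that identifies the limit of $V_\tau$ as a viscosity sub- and supersolution (in the dual sense) of the obstacle Hamilton--Jacobi equation and then concludes via the primal--dual comparison principle behind Theorem~\ref{theorem:existence}. The only notable difference is that the paper handles the $+\infty$ values by truncating them at a large constant $K$, which turns the constraint into the obstacle term $\min\{K-w,\ w_t+H(x,Dw)\}=0$ on the infeasible region, rather than tracking the discrete infeasible set and its Hausdorff convergence to $\bar{\mathcal{Q}}(t)$ separately as you propose.
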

\vspace{-0.1in}
\paragraph{Behavioral strategy for Player 1 and the belief dynamics} Player 1's behavioral strategy is a probability distribution over $\mathcal{U}$ conditioned on $(t, x, p)$. 
At time $t_k$, Player 1 resigns if $x_k \in \bar{\mathcal{Q}}(t_k)$; otherwise, he determines his strategy using the following steps: First he finds $\lambda = \{\lambda_1,...,\lambda_I\} \in \Delta(I)$ and $p^j \in \Delta(I)$ for $j=1,...,I$, such that
\vspace{-0.1in}
\begin{equation}\label{eq:strategy}
\begin{aligned}
    V_{\tau}(t_k, x_k, p_k) &= \sum_{j=1}^I \lambda_j \Big(\min_{u \in \mathcal{U}} \max_{v \in \mathcal{V}} V_{\tau}(t_{k+1}, x'_k, p^j)\Big), \\
    \sum_{j=1}^I \lambda_j p^j &= p_k.
\end{aligned}
\vspace{-0.05in}
\end{equation}
Then he computes $u^j$ as the minimax solution corresponding to $p^j$,
and chooses $u_k = u^j$ with probability $\lambda_j p^j[i]/p_k[i]$, where $i$ is its true type. It is proved that this behavioral strategy of Player 1 is $\epsilon$-optimal for small enough $\tau$~\cite{cardaliaguet2009numerical}. 
Importantly, $\{p^j\}_{j=1}^I$ are vertices of the value convex hull. Thus by announcing his strategy, and by assuming that players use the same Bayes belief update, Player 1 controls the belief dynamics to follow a martingale that optimizes his gain, i.e., $p_{k+1} = p^j$ if $u^j$ is chosen. Note that the introduction of state constraints changes the minimax solutions, the value convex hulls, and thus the behavioral strategies. Lastly, Eq.~\eqref{eq:strategy} will be modified according to Eq.~\eqref{eq:backup_cons_instantaneous} when instantaneous loss is present. 

\vspace{-0.1in}
\paragraph{The dual game and behavioral strategy for Player 2} 
Player 2's strategy is determined by a dual game for which the conjugate value is approximated by $V^*_{\tau}(t_k, \cdot, \cdot): \mathbb{R}^{d_x} \times \mathbb{R}^{I} \rightarrow \mathbb{R}$. Specifically,
\vspace{-0.05in}
\begin{enumerate}[label=(\roman*), noitemsep]
\vspace{-0.1in}
    \item At the terminal time, set $V^*_{\tau}(T, x, \hat{p}) = \max_i\{\hat{p}_i - \rho(x)g_i(x)\}$.
    \item At $k \in \{0,...,L-1\}$
    \vspace{-0.05in}
    \begin{equation}
    V^*_{\tau}(t_k, x, \hat{p}) =
        \text{Vex}_{\hat{p}} \left(\min_v \max_u V^*_\tau (t_{k+1}, x', \hat{p})\right),
    \vspace{-0.05in}
    \label{eq:backup_dual}
    \end{equation}
     if $\bar{\rho}(t_k,x)=1$; otherwise $V^*_{\tau}(t_k, x, \hat{p}) = -\infty$.
     \vspace{-0.1in}
\end{enumerate}
Similar to Theorem~\ref{thm:value_convergence}, Theorem~\ref{thm:dual_convergence} proves that $V^*_{\tau}$ uniformly converges to $V^*$ as $\tau \rightarrow 0^+$ (proof omitted).

\begin{theorem}\label{thm:dual_convergence}
    If assumptions 1-5 holds,
    $V^*_\tau$ converges uniformly to $V^*$ on compact subsets of $[0, T] \times \mathbb{R}^{d_x} \times \mathbb{R}^{I}$:
    \begin{align*}
        &\lim_{\scriptstyle \begin{array}{l}
             \tau \rightarrow 0^+,\; t_k \rightarrow t,\\
             x' \rightarrow x, \hat{p}' \rightarrow \hat{p} 
        \end{array}} V^*_\tau(t_k, x', \hat{p}') = V^*(t, x, \hat{p})\\ &\forall (t, x, \hat{p}) \in [0, T] \times \mathbb{R}^{d_x} \times \mathbb{R}^{I}.
    \end{align*}
    \vspace{-0.1in}
\end{theorem}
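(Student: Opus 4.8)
The plan is to mirror the proof of Theorem~\ref{thm:value_convergence}, transposing every ingredient to the dual setting: the roles of $u$ and $v$ are swapped, the convexification acts on $\hat p$ rather than on $p$, the Hamiltonian $H$ is replaced by $H^*(x,\xi)=-H(x,-\xi)$, and the terminal data becomes the conjugate boundary condition~\eqref{eq:dualBC}. Since the scheme~\eqref{eq:backup_dual} is a composition of a one-step minimax Bellman operator with the convex-hull operator $\mathrm{Vex}_{\hat p}$, I would follow the Barles--Souganidis template: establish a priori regularity, pass to half-relaxed limits, and pin down the limit through the uniqueness furnished by the Corollary.

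First I would derive a priori estimates uniform in $\tau$ on the feasible region. On states with $\bar\rho(t_k,x)=1$, the one-step minimax operator is order preserving and nonexpansive in the sup norm, and under Assumptions 1--3 it propagates a Lipschitz bound in $x$ with the usual Gronwall factor $e^{L_f(T-t_k)}$; the terminal map $\hat p\mapsto\max_i\{\hat p_i-\rho(x)g_i(x)\}$ is $1$-Lipschitz and convex in $\hat p$, and both properties survive $\mathrm{Vex}_{\hat p}$, which never increases the Lipschitz constant and enforces convexity at every step. Boundedness follows from boundedness of $g_i$ on $\mathcal{C}$. The more laborious estimate is a modulus of continuity in time uniform in $\tau$, which I would obtain from the dynamic-programming structure together with boundedness of $f$, yielding the usual $O(\sqrt\tau)$-type control. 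These estimates give equicontinuity, hence by Arzel\`a--Ascoli relative compactness of $\{V^*_\tau\}$ on compact subsets of the open feasible region.

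Next I would take the upper and lower half-relaxed limits $\overline W$ and $\underline W$ of $V^*_\tau$ and show that either one is convex in $\hat p$ (preserved under the limit), matches~\eqref{eq:dualBC} at $t=T$, and is a viscosity subsolution of the dual obstacle equation~\eqref{eq:dualHJ}. The interior inequality with $H^*$ comes from a standard consistency expansion: testing a smooth $\phi$ touching $\overline W$ from above, substituting $x'=x+\tau f(x,u,v)$ into~\eqref{eq:backup_dual}, Taylor-expanding, and using Isaacs' condition~\eqref{eq:hamiltonian} to convert the discrete $\min_v\max_u$ into $H^*(x,D\phi)$. The obstacle term $\rho(x)U^*(t,x,\hat p/\rho(x))$ enters through the constraint bookkeeping: on feasible states the scheme is dominated by the one-step value, while the $-\infty$ assignment on infeasible states forces the min-with-obstacle structure in the limit. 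Convexity in $\hat p$ then guarantees that the relaxed limits coincide with the function characterized in the Corollary, so $\overline W=\underline W=V^*$ and the whole family converges uniformly on compacts.

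The main obstacle I anticipate is not the HJ consistency computation but the coupling of $\mathrm{Vex}_{\hat p}$ with the state constraint. Because $V^*_\tau=-\infty$ on infeasible states, the sup-norm estimates and Arzel\`a--Ascoli apply only on compact subsets of the \emph{open} feasible set, so I would argue separately that the relaxed limits behave correctly across $\partial\mathcal{Q}(t)$ and that both $V^*_\tau$ and $V^*$ equal $-\infty$ (trivially convergent) well inside the infeasible region, using the backward-reachable-set construction from the proof of Theorem~\ref{theorem:existence}. I must also verify that $\mathrm{Vex}_{\hat p}$ commutes appropriately with the finite/infinite dichotomy, i.e.\ that convexifying after the minimax step reproduces exactly the obstacle $\rho(x)U^*(t,x,\hat p/\rho(x))$ rather than a smeared version, which is the genuinely new point relative to the unconstrained dual convergence of~\cite{cardaliaguet2009numerical}. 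An alternative shortcut worth recording is that the dual scheme is, asymptotically, the Legendre conjugate in the belief variable of the primal scheme~\eqref{eq:backup_cons_instantaneous}, so that uniform convergence of $V_\tau$ from Theorem~\ref{thm:value_convergence} would transfer to $V^*_\tau$ by stability of the conjugation operation on convex functions; this avoids re-running the Barles--Souganidis argument but requires checking the exactness of the discrete conjugacy under the $\min$--$\max$ swap.
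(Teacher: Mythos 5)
Your proposal matches the paper's intended argument: the paper omits this proof precisely because it is the mirror image of Theorem~\ref{thm:value_convergence}, whose proof takes a cluster point of the scheme on compact subsets, verifies consistency with the (dual) obstacle HJ equation via smooth test functions and Isaacs' condition, and concludes by uniqueness --- exactly the Barles--Souganidis route you outline. The only presentational difference is that the paper handles the infinite values by capping them at a finite constant $K$ (so that bounded test functions apply everywhere) rather than by restricting to the open feasible region and arguing separately across $\partial\mathcal{Q}(t)$, but this is a cosmetic variant of the same treatment of the constraint.
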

\vspace{-0.1in}
With instantaneous loss $l$, the Bellman backup in Eq.~\eqref{eq:backup_dual} becomes
\vspace{-0.05in}
\begin{equation}
\small
    V^*_{\tau}(t_k, x, \hat{p}) =
        \text{Vex}_{\hat{p}} \left(\min_v \max_u V^*_\tau (t_{k+1}, x', \hat{p} - \tau l(u,v))\right)
\end{equation}
We explain this modification in detail in Appendix~\ref{sec:modify}. An intuitive explanation is as follows: Recall that each element of $\hat{p}$ represents Player 1's value for the corresponding type in the primal game. Hence $\hat{p}$ at the next time step should discount the common instantaneous loss incurred at the current time step. 

The behavioral strategy of Player 2 defines a probability distribution over $\mathcal{V}$ conditioned on $(t, x(t), \hat{p}(t))$, with the dual variable $\hat{p}(t_0) \in \partial_{p} V(t_0, x_0, p(t_0))$. At time $t_k$, if $x_k \in \bar{\mathcal{Q}}(t_k)$, Player 2 plays according to a pursuit-evasion game since she can always catch Player 1 according to the definition of $\bar{\mathcal{Q}}(t_k)$; otherwise, Player 2 determines her strategy using the following steps: First she finds $\lambda = \{\lambda_1,...,\lambda_{I+1}\} \in \Delta(I+1)$ and $\hat{p}^j \in \mathbb{R}^{I}$ for $j = 1,..., I+1$, such that
\vspace{-0.05in}
\begin{equation}
\begin{aligned}
\small
    V^*_{\tau}(t_k, x_k, \hat{p}) &= \sum_{j=1}^{I+1} \lambda_j \Biggl( \min_{v \in \mathcal{V}} \max_{u \in \mathcal{U}} V^*_{\tau}(t_{k+1}, x'_k, \hat{p}^j)\Biggr),\\
    \sum_{j=1}^{I+1} \lambda_j \hat{p}^j &= \hat{p}_k. 
    \label{eq:backup_cons_dual}
\end{aligned}
\vspace{-0.05in}
\end{equation}
Then she computes the minimax solution $v^j$, and chooses $v_k = v^j$ with probability $\lambda_j$. It is proved that this behavioral strategy of Player 2 is $\epsilon$-optimal for small enough $\tau$~\cite{cardaliaguet2009numerical}. $\hat{p}$ follows a martingale $\hat{p}_{k+1} = \hat{p}^j$ if $v^j$ is chosen by Player 2, or $\hat{p}_{k+1} = \hat{p}^j - \tau l(u^j,v^j)$ if $l$ is present, where $u^j$ is the best response to $v^j$ in the dual game.  
Notice that due to her lack of information, Player 2 solves harder value approximation and control synthesis problems of belief dimension $I+1$ rather than $I$.   

To help readers better understand the mechanisms described in this section, we provide detailed derivation of behavioral strategies for two sample problems in Appendix~\ref{sec:example} (\ref{sec:beerquiche} for a zero-sum version of the beer-quiche game and \ref{sec:hexner} for Hexner's game).




\vspace{-0.05in}
\section{Numerical Methods}
\label{sec:value}
\vspace{-0.05in}
\subsection{Primal and dual value approximation}
\vspace{-0.05in}
We use backward induction to solve Eq.~\eqref{eq:backup_cons} and \eqref{eq:backup_cons_dual}, and discuss treatments that alleviate error propagation. We focus the discussion on the primal problem for brevity.   

\vspace{-0.1in}
\paragraph{Value discontinuity} At each time step $t$, $V_{\tau}(t,\cdot,\cdot)$ (resp. $V_{\tau}^*(t,\cdot,\cdot)$) can be approximated separately in $\bar{\mathcal{Q}}(t)$ and $\mathcal{Q}(t)$: the primal (resp. dual) value in the former is set to $+\infty$ (resp. $-\infty$) and value in the latter will be approximated using a neural network $\hat{V}_{\tau}(t,\cdot,\cdot)$ (resp. $\hat{V}^*_{\tau}(t,\cdot,\cdot)$). This avoids fitting the value networks to functions with large Lipschitz constants during numerical implementation. $\bar{\mathcal{Q}}(t)$ for $t \in [0,T]$ can be approximated by a physics-informed neural network (PINN) solver~\cite{bansal2021deepreach} (see details in Appendix~\ref{sec:BRT}), by recognizing that $\bar{\mathcal{Q}}(t)$ can be defined by pure strategies instead of behavioral ones using Lemma~\ref{lemma:strategy}. PINN alleviates CoD in solving HJ equations with Lipschitz continuous solutions~\cite{pinn_convergence}, and here it results in a separate value network $\tilde{V}(\cdot,\cdot): [0,T]\times \mathbb{R}^{d_x}$ that approximates $\bar{\mathcal{Q}}(t)$ as $\{x \in \mathbb{R}^{d_x} | \tilde{V}(t,x) \leq 0\}$.  

\vspace{-0.1in}
\paragraph{Partially convex values} At each $t_k$ and for uniformly sampled $\mathcal{S}(t) \subset \mathcal{Q}(t)$, we scan a lattice $\mathcal{P} \subset \Delta(I)$ to obtain the minimax solution of the RHS of Eq.~\eqref{eq:backup_cons} (denoted by $\vartheta^0(t,x,p)$ for $(x,p) \in \mathcal{S}(t)\times \mathcal{P}$), resulting in a dataset $\{(p, \vartheta^0(t,x,p))\}_{p \in \mathcal{P}}$. Value convexification is then obtained by applying the Monotone Chain Convex Hull algorithm to this dataset for each $x \in \mathcal{S}(t)$ and taking the lower hull of the resulting convex hull. Let $\vartheta(t,x,p)_{\mathcal{S}(t)}$ be the resultant value on the convex hull. A value network $\hat{V}_{\tau}(t,\cdot,\cdot)$ is then trained using data $\{(x, \vartheta_{\mathcal{S}(t)}(t,x,p) | (x,p) \in \mathcal{S}(t) \times \mathcal{P}\}$ so that during the Bellman backup at $t-1$, we can predict convexified values at previously unvisited states at $t$. 
We use a Partially Input Convex Neural Network (PICNN)~\cite{amos2017input} to ensure that $\hat{V}_{\tau}(t,\cdot,\cdot)$ is convex in $p$. Alg.~\ref{alg:value} summarizes the value approximation algorithm. Optionally, we also train a separate value network to predict the minimax values using $\{(p, \vartheta^0(t,x,p))\}_{p \in \mathcal{P}}$. This network helps remove the nested minimax problem during control synthesis.  

\vspace{-0.1in}
\paragraph{Convexification error.} Backward induction suffers from error propagation, where errors at each time step are originated from (i) value approximation through neural networks, (ii) backward reachable set approximation, (iii) convex hull approximation, and (iv) finite time discretization (and Euler method for ODE). Here we discuss control of the error resulted from convex hull approximation, which is unique to incomplete-information games. We leave a full analysis of data complexity for error control to future studies.
At each $t\in [0,T]$ and $x \in \mathcal{Q}(t)$, let $\vartheta(t,x,\cdot)$ be the RHS of Eq.~\eqref{eq:backup_cons} after convexification, and the convexification error be $\varepsilon_{vex}(t,x) := \max_{p \in \Delta(I)} \|\vartheta(t,x,p) - \vartheta_{\mathcal{S}(t)}(t,x,p)\|_{\infty}$. Proposition~\ref{prop:convex_error} shows that $\varepsilon_{vex}(t,x)$ can be controlled by refining $\mathcal{P}$ (see proof in Appendix~\ref{sec:convex_error}):

\begin{proposition}\label{prop:convex_error}
    For given $(t,x)$, let the Lipschitz constant of $\vartheta(t,x,\cdot)$ be $L$, and $d_{\mathcal{P}}$ be the minimum distance between two neighboring nodes of the lattice $\mathcal{P}$. $\varepsilon_{vex}(t,x) \leq 2d_{\mathcal{P}}L$.
\end{proposition}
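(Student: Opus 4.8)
The plan is to prove the inequality in two stages — first reduce it to a one-sided estimate, then bound that estimate by rounding onto the lattice. Throughout I fix $(t,x)$ and abbreviate $\vartheta=\vartheta(t,x,\cdot)$, $\vartheta_{\mathcal S}=\vartheta_{\mathcal S(t)}(t,x,\cdot)$, and write $\vartheta^0=\vartheta^0(t,x,\cdot)$ for the pre-convexification minimax value whose lower hull defines both. By definition $\vartheta=\text{Vex}_{\Delta(I)}\vartheta^0$ is the largest convex minorant of $\vartheta^0$ over the \emph{whole} simplex, whereas $\vartheta_{\mathcal S}$ is the lower hull of the finite sample $\{(p,\vartheta^0(p))\}_{p\in\mathcal P}$, i.e.\ the envelope computed over convex combinations of nodes in $\mathcal P\subset\Delta(I)$ only. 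Since restricting the admissible convex combinations can only raise the infimum, I would first record the one-sided bound $\vartheta(p)\le\vartheta_{\mathcal S}(p)$ for all $p$; consequently $\varepsilon_{vex}(t,x)=\max_{p}\big(\vartheta_{\mathcal S}(p)-\vartheta(p)\big)$, and it remains to bound this nonnegative gap by $2d_{\mathcal P}L$.

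For the second stage, let $p^\star$ attain the maximum and take an optimal contact decomposition of the true envelope, $p^\star=\sum_{j}\mu_j q_j$ with $\sum_j\mu_j=1$, $\vartheta(p^\star)=\sum_j\mu_j\vartheta^0(q_j)$, and $\vartheta^0(q_j)=\vartheta(q_j)$ at the contact points $q_j\in\Delta(I)$; such a decomposition exists by Carath\'eodory (at most $I$ points) together with attainment of the infimum, which follows from compactness of $\Delta(I)$ and continuity of $\vartheta^0$. I would then round each $q_j$ to a nearest node $\hat q_j\in\mathcal P$, so that $\|\hat q_j-q_j\|$ is at most the covering radius of $\mathcal P$, and set $\hat p=\sum_j\mu_j\hat q_j$, whence $\|\hat p-p^\star\|\le\max_j\|\hat q_j-q_j\|$ as well. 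Using $\hat p=\sum_j\mu_j\hat q_j$ as a \emph{feasible} lattice combination yields $\vartheta_{\mathcal S}(\hat p)\le\sum_j\mu_j\vartheta^0(\hat q_j)$, and the estimate then closes by two transports across distance at most $d_{\mathcal P}$: one comparing the node values $\vartheta^0(\hat q_j)$ to the contact values $\vartheta^0(q_j)=\vartheta(q_j)$, and one comparing $\vartheta_{\mathcal S}(p^\star)$ to $\vartheta_{\mathcal S}(\hat p)$. Each transport contributes a term $Ld_{\mathcal P}$, and summing against the weights $\mu_j$ gives $\vartheta_{\mathcal S}(p^\star)\le\vartheta(p^\star)+2d_{\mathcal P}L$, as claimed.

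The hard part will be making the first transport legitimate with the constant $L$ stated in the proposition. The lattice hull is assembled from the \emph{pre-convexification} values $\vartheta^0$, and $\vartheta^0$ need not inherit the Lipschitz constant $L$ of its envelope $\vartheta$ — a sharp, narrow dip of $\vartheta^0$ between nodes can have arbitrarily large slope while being smoothed into a moderate slope of $\vartheta$. A naive bound on $\vartheta^0(\hat q_j)-\vartheta^0(q_j)$ therefore only produces the weaker constant $\mathrm{Lip}(\vartheta^0)\ge L$. To recover $L$ one must exploit that the binding nodes can be taken on (or adjacent to) the lattice lower hull near the contact points $q_j$, where $\vartheta^0$ agrees with $\vartheta_{\mathcal S}$ and hence lies within $O(Ld_{\mathcal P})$ of the envelope $\vartheta$; this is precisely the step that upgrades $\mathrm{Lip}(\vartheta^0)$ to $\mathrm{Lip}(\vartheta)=L$ and pins down the factor two. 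A secondary, routine point is to verify that the covering radius of the simplex lattice $\mathcal P$ is controlled by its minimum neighbor spacing $d_{\mathcal P}$, so that both transport distances are genuinely $\le d_{\mathcal P}$.
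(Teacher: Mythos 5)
Your route is genuinely different from the paper's, and with the right reading of $L$ it closes. The paper never touches the maximizing point or a Carath\'eodory contact decomposition: it fixes a maximal affine cell of the lattice hull $\vartheta_{\mathcal S}$ (with vertices among the lattice nodes), subdivides it by the lattice nodes it contains, and on each sub-cell $e$ with nodes $u^{e_i}$ chains $\vartheta^0(p)\ge\min_i\vartheta^0(u^{e_i})-\Delta L\ge\min_i\vartheta_{\mathcal S}(u^{e_i})-\Delta L\ge\vartheta_{\mathcal S}(p)-2\Delta L$, concluding that $\vartheta_{\mathcal S}-2\Delta L$ is a \emph{global convex minorant} of $\vartheta^0$ and hence lies below $\vartheta=\mathrm{Vex}(\vartheta^0)$; together with your first observation $\vartheta\le\vartheta_{\mathcal S}$ this sandwiches the error. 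You instead perturb the optimal decomposition of the true envelope at the worst-case $p^\star$ and round its contact points onto the lattice, paying one Lipschitz transport through $\vartheta^0$ and one through $\vartheta_{\mathcal S}$. Both arguments spend the same two factors of $Ld_{\mathcal P}$; yours is more self-contained in that it never needs the facial structure of the lattice hull, while the paper's yields a uniform (pointwise, not just worst-case) sandwich for free.

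The issue you flag at the end is real, but the resolution is to abandon the upgrade, not to engineer it. The bound is simply false with $L=\mathrm{Lip}(\vartheta)$, the constant of the \emph{post}-convexification envelope: in one dimension take $\vartheta^0\equiv 0$ except for a spike down to $-\delta$ of width much smaller than $d_{\mathcal P}$, placed strictly between two lattice nodes near $p=1/2$. Then $\vartheta_{\mathcal S}\equiv 0$ while $\vartheta$ is the tent dipping to $-\delta$, so $\varepsilon_{vex}=\delta$ and $\mathrm{Lip}(\vartheta)\approx 2\delta$, yet $2d_{\mathcal P}\cdot 2\delta\ll\delta$ for small $d_{\mathcal P}$. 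Your proposed repair also fails on this example: the contact point sits at the bottom of the spike, exactly where $\vartheta^0$ is far from $\vartheta_{\mathcal S}$. The paper's own proof quietly sidesteps this the same way you should: in the appendix, $L$ is explicitly taken to be the Lipschitz constant of $f^0$, i.e.\ of the pre-convexification minimax value $\vartheta^0$, not of $\vartheta$ as the proposition's wording suggests. Reading $L=\mathrm{Lip}(\vartheta^0)$ (which dominates $\mathrm{Lip}(\vartheta)$, and controls $\mathrm{Lip}(\vartheta_{\mathcal S})$ at the same level of rigor the paper itself uses for the oscillation of the lattice hull over a cell), both of your transports are legitimate and your argument is complete, modulo the routine covering-radius remark you already make.
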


\vspace{-0.15in}
\paragraph{Approximation of the conjugate value.} Recall that the dual game is initialized by the dual variables $\hat{p} \in \partial_p V(0,x,p)$ when the primal game starts at $(x,p)$. Since $\hat{V}(0,x,\cdot)$ is a differentiable neural network defined on a simplex, subgradients can be found using $\hat{p}^Tp = V(0,x,p)$ and $\hat{p}^Tq \leq V(0,x,q)$ for all $q \in \Delta(I)$ and $q \neq p$. Specific to the case study where $I=2$ and $\hat{V}:=\hat{V}_{\tau}(0,x,p[1])$ is modeled to be a function of the first element of $p$ to reduce dimensionality, we have $\hat{p} = (\hat{V} + \nabla_{p[1]}\hat{V}(1-p[1]), \hat{V} - \nabla_{p[1]}\hat{V}p[1])^T$.

\vspace{-0.05in}
\subsection{Synthesis of strategies}
\vspace{-0.05in}
Given $(t,x,p) \in [0,T]\times \mathbb{R}^{d_x}\times \Delta(I)$, Player 1 computes his behavioral strategy by finding $\lambda\in \Delta(I)$ and splitting beliefs $\{p^j \in \Delta(I)\}_{j=1}^I$ that best satisfy Eq.~\eqref{eq:strategy} in $L^2$, if $x \in \mathcal{Q}(t)$ (otherwise he surrenders). Given $(t,x,\hat{p}) \in [0,T]\times \mathbb{R}^{d_x}\times \mathbb{R}^{I}$, Player 2 finds $\lambda\in \Delta(I+1)$ and $\{\hat{p}^j \in \mathbb{R}^{I}\}_{j=1}^{I+1}$ that best satisfy Eq.~\eqref{eq:backup_cons_dual} in $L^2$. When $I=2$ as in the case study, the splitting beliefs and resultant strategies for Player 1 can be approximated through sweeping $p[1] \in [0,1]$. For Player 2, we use gradient descent to solve a 6D optimization problem with the initial guess $\lambda = [1/3,1/3,1/3]^T$ and $\hat{p}^j = \hat{p}$ for $j=[3]$.

\begin{algorithm}[!ht]
\caption{Value Approximation}
\label{alg:value}
\textbf{Inputs:} current time step $t$, time discretization $\tau$, sample size $N$, admissible action spaces $(\mathcal{U}(t), \mathcal{V}(t))$, value approximation at $t + \tau$: $\hat{V}_{\text{next}}(\cdot,\cdot) := \hat{V}_{\tau}(t+\tau,\cdot,\cdot)$, feasible state set $\mathcal{Q}(t)$, instantaneous loss $l(\cdot,\cdot)$, terminal loss in Eq.~\eqref{eq:primalBC}\\
\textbf{Initialize:} $\hat{V}_\tau(t,\cdot,\cdot)$, $\vartheta^0 = \emptyset$
\begin{algorithmic}[1]
\STATE $\mathcal{S}(t) \leftarrow $ sample $N$ states from $\mathcal{Q}(t)$
\FOR {$x$ in $\mathcal{S}$}
    \FOR {$p$ in $\mathcal{P}$}
        \STATE $v(x,p) \leftarrow \min_{u \in \mathcal{U}(t)} \max_{v \in \mathcal{V}} \hat{V}_{\text{next}}(x', p) + \tau l(\mathcal{U}, \mathcal{V})$ \COMMENT{if $t + \tau = T$, $\hat{V}_{\text{next}}$ is given by Eq.~\eqref{eq:primalBC}}
        \STATE append $v(x, p)$ to $\vartheta^0$
    \ENDFOR
    \STATE $\vartheta_{\mathcal{S}(t)}(t,x,\cdot) \leftarrow$ compute $\text{Vex}_p(\vartheta^0(x,\cdot))$ via Eq.~\eqref{eq:backup_cons}
\ENDFOR
\STATE Update $\hat{V}_{\tau}$ to match $\vartheta_{\mathcal{S}(t)}$
\end{algorithmic}
\end{algorithm}


\vspace{-0.1in}
\section{Case Study}
\label{sec:cases}
\vspace{-0.05in}
\paragraph{Setup} We study a state-constrained version of Hexner's game that represents a simplified football game: Player 1 (P1)'s goal is to move closer to one of the two targets than P2 without being caught during the interaction (see Fig.~\ref{fig:2d_soccer}); P2's goal is to catch P1 if possible, or otherwise move close to P1's target. Each player has 4 state variables: x- and y- position and velocity; and their actions encode x- and y-acceleration.
\begin{figure}[!ht]
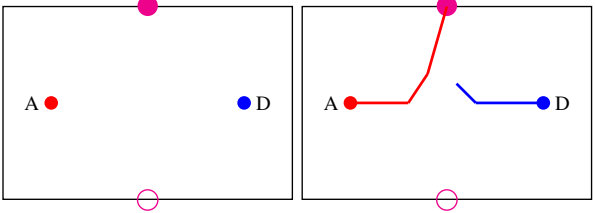

    \centering
    \vspace{-0.1in}
    \includestandalone[width=0.95\linewidth]{figures/state_constrained_game}
    \vspace{-0.1in}
    \caption{Schematics of a simplified football game with Player 1 (red) and Player 2 (blue). 
    Left: the initial configuration. Right: equilibrium trajectory. Magenta circles: two goals. The filled is the current type private to Player 1. Players move in a 2D space bounded by $[-1, 1] \times [-1, 1]$.}
    \label{fig:2d_soccer}
    \vspace{-0.1in}
\end{figure}
The parameters $R_A = diag([0.05, 0.025])$ and $R_D = diag([0.05, 0.1])$ are chosen so that P1 can afford to accelerate faster in the y-direction than P2. The state constraint is $c(x_1, x_2) = r - \|(d_{x_1}, d_{y_1}) - (d_{x_2}, d_{y_2})\|_2$, where $r = 0.05$. We note that due to the introduction of an (instantaneous) effort loss, the backward induction is modified as: $V_\tau(t_k, x, p) = \text{Vex}_p(\min_u \max_v V_\tau(t_{k+1}, x+\tau(f, x, u, v), p) + \tau l(u, v))$, where $l(u, v)$ is the integral term in $[t_k,t_{k+1}]$ in Eq.~\eqref{eq:value_def}.

\textbf{Value network architecture and training} 
The value network uses PICNN with 5 hidden layers and 256 neurons each and has 9-dimensional inputs (state and belief). 
We train 10 separate networks for each time step starting from $t=0.9$ with $\tau=0.1$, each being trained for 10 epochs.  
For each epoch, $\mathcal{S}(t)$ includes 5000 states sampled from $\mathcal{Q}(t)$. Since $I=2$, value networks can be considered as functions of $p[1]$ and thus we set $\mathcal{P} = \{p[1] = 0, 0.01, \dots, 0.99, 1\}$. $\hat{V}_{\tau}(t,\cdot,\cdot)$ is trained on data collected from $\mathcal{S}(t) \times \mathcal{P}$. For the conjugate $\hat{V}^*_{\tau}$, we set $\hat{\mathcal{P}} = \{\hat{p}[1] = \{-14, \dots, 14\}, \hat{p}[2] = \{-14, \dots, 14]\}$. 
More details can be found in Appendix~\ref{sec:case_details}. 

\begin{figure}
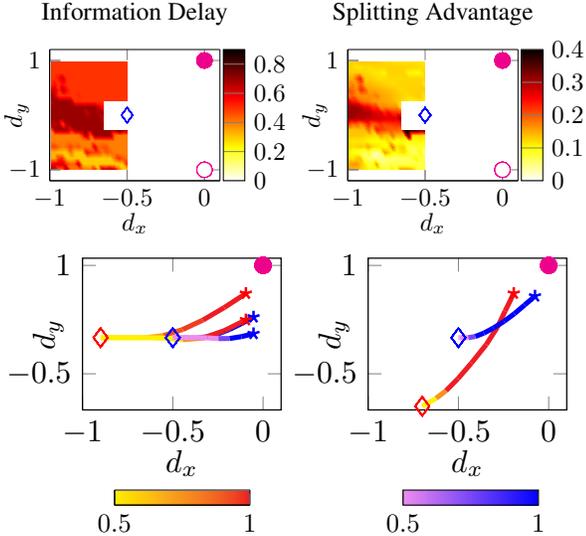

    \begin{minipage}{.48\linewidth}
    \centering
    \includestandalonewithpath[width=\linewidth]{figures/delay_heat_new}
    \end{minipage}%
    \begin{minipage}{.48\linewidth}
    \centering
    \includestandalonewithpath[width=\linewidth]{figures/adv_heat_new}
    \end{minipage}
    \begin{minipage}{.46\linewidth}
    \centering
    \includestandalonewithpath[width=\linewidth]{figures/traj_delay_new}
    \end{minipage}%
    \begin{minipage}{.46\linewidth}
    \centering
    \includestandalonewithpath[width=\linewidth]{figures/traj_delay_top_new}
    \end{minipage}
        \begin{minipage}{.48\linewidth}
        \hspace{0.3in}
    \centering
    \includestandalonewithpath{figures/only_colorbar_1_adv}
    \end{minipage}%
    \begin{minipage}{.48\linewidth}
    \hspace{0.2in}
    \centering
    \includestandalonewithpath{figures/only_colorbar_2_adv}
    \end{minipage}
    \vspace{-0.1in}
    \caption{\textbf{Top:} Average delay ($\mathcal{T}$) in information reveal (left) and average maximum advantage of playing the revealing strategy (right), keeping P2's location fixed at (-0.5, 0) and changing P1's location. \textbf{Bottom:} Trajectory with high delay and advantage (left) and with low delay and advantage (right). Color shades indicate current belief.}
    \vspace{-0.2in}
    \label{fig:delay_adv}
\end{figure}
\textbf{Constrained vs. unconstrained strategies}
Fig.~\ref{fig:fig_1} compares strategies with and without the state constraint, visualizing the equilibrium strategies of P1 and the best responses of P2 given P1's strategies. Note that the best responses of P2 give P2 an advantage since she does not know actions to be taken by P1 in reality. 
The analytical solution to the unconstrained game is given by Hexner's analysis, where P1's strategy is to start moving to the goal after the critical time $t_r=0.4$. This strategy no longer holds in the constrained case as it violates the state constraint. Instead, P1 actively tries to stay clear of P2 while pursuing the goal (see 2nd col. of Fig.~\ref{fig:fig_1}). Note that in this case, P1 resorts to a random strategy with the presence of incomplete information and state constraints, as the two contribute to value nonconvexity with respect to the belief.   

\textbf{Information delay and advantage of random strategies}
To understand how P1 uses information asymmetry, we examine the delay in information reveal, measured by the time at which the belief converges to the true type, i.e. $\mathcal{T} = \inf \{k \in [L] : p_k = \{0, 1\}\}$. We then take average of $\mathcal{T}$ for each initial state over 10 simulations. In Fig.~\ref{fig:delay_adv}, we visualize $\mathcal{T}$ over the space of P1's starting positions, while fixing P2's starting position and setting both players' initial speed to 0. The trajectories represent cases where P1 delays (bottom left) and does not delay (bottom right) the release of information. We also compute the advantage of following a belief manipulating strategy (that convexifies the value) as opposed to taking the non-revealing strategy (i.e. never split), expressed as $[\min_u \max_v V(t+\tau, x', p) - V(t, x, p)]$. Overall, P1 tends to conceal and deceive when it has equal distances to the possible targets. 

\vspace{-0.1in}
\paragraph{Equilibrium strategy of Player 2} Fig.~\ref{fig:primal_dual} shows sample trajectories when both players play their equilibrium strategies. Note that compared with P2's best responses to P1 in previous examples, P2's equilibrium strategy is more conservative, due to her lack of knowledge about P1's type. We also note that P2's dual game is one dimension higher than P1's primal game, and thus encounters higher numerical errors in value and strategy approximation (see Appendix~\ref{sec:case_details}). 

\begin{figure}[!ht]
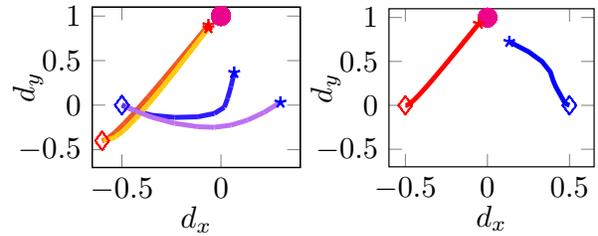

\vspace{-0.1in}
\begin{minipage}{.48\linewidth}
    \centering
    \includestandalonewithpath[width=\linewidth]{figures/traj_primal_dual_1_new}
    \end{minipage}%
    \begin{minipage}{.48\linewidth}
    \centering
    \includestandalonewithpath[width=\linewidth]{figures/traj_primal_dual_2_new}
    \end{minipage}
    \vspace{-0.1in}
    \caption{Trajectories where both players use their respective behavioral strategies. P1 keeps track of $p$, whereas P2 keeps track of $\hat{p}$.}
    \label{fig:primal_dual}
    \vspace{-0.1in}
\end{figure}
\vspace{-0.05in}
\section{Conclusion and Future Work}
\label{sec:conclusion}
\vspace{-0.05in}
We proved the existence of value for zero-sum differential games with state constraints and one-sided information and developed a backward induction scheme to approximate the value. Our method enables mechanistic synthesis of behavioral strategies and allows explanation of the resultant splitting of strategies and beliefs.
Future work will investigate more efficient learning+search methods that take advantage of value convexity and alleviate error propagation.




\section*{Acknowledgements}
This work was in part supported by NSF CMMI-1925403 and NSF CNS-2101052. The views and conclusions contained in this document are those of the authors and should not be interpreted as representing the official policies, either expressed or implied, of the National Science Foundation or the U.S. Government.

\section*{Impact Statement}
This paper is concerned with advancing the field of differential game theory and artificial intelligence. The developed theories and tools help understand how players play strategies by taking into account their information advantage and disadvantage. Similar to the development of imperfect-information games, protocols to mitigate the potential societal consequences and risks (e.g., deception by robots and machines) shall be comprehensively investigated. 
\bibliography{ref}
\bibliographystyle{icml2024}

\appendix
\onecolumn

\section{Proof of Theorem \ref{theorem:existence}}\label{sec:proofexist}
The following proofs extend results and techniques from Cardaliaguet~\yrcite{cardaliaguet2007differential} to zero-sum differential games with one-sided information and state constraints. 
To overview, we start by showing that the upper and lower values $V^{\pm}$ are Lipschitz continuous within the safe and unsafe state sets (Lemma~\ref{lemma:regularity}) and convex with respect to $p$ (Lemma~\ref{lemma:convex}). We then show that: (1) $V^{-*}$ satisfies a subdynamic principle and is therefore a subsolution of a dual HJ equation and hence $V^-$ is a dual supersolution of the corresponding primal HJ (Lemma~\ref{lemma:dualsubsolution}, Lemma~\ref{lemma:reformulation}, Lemma~\ref{lemma:subdynamic}); and (2) $V^+$ also satisfies a subdynamic principle (Lemma~\ref{lemma:subdynamic2}) and is therefore a dual subsolution of the primal HJ (Lemma~\ref{lemma:primalsubsolution}). We can then use a comparison principle (see \cite{cardaliaguet2007differential}) to show that since $V^-$ is a dual supersolution and $V^+$ is a dual subsolution of the primal HJ while both share the same terminal value, $V^- \geq V^+$. On the other hand, $V^- \leq V^+$ by definition and hence $V^- = V^+$.  


We start with the following regularity result 
(see proof of Lemma 3.1 in \cite{cardaliaguet2007differential}):
\begin{lemma}
\label{lemma:regularity}
    (regularity of $V^{\pm}$). 
    $V^{\pm}(t_0, x_0, p)$ are Lipschitz continuous for all $x_0 \in \mathcal{Q}(t_0)$. $V^{\pm}(t_0, x_0, p) = +\infty$ for all $x_0 \in \bar{\mathcal{Q}}(t_0)$.
\end{lemma}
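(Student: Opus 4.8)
The plan is to split the analysis according to the two regimes $x_0 \in \mathcal{Q}(t_0)$ and $x_0 \in \bar{\mathcal{Q}}(t_0)$, since the state constraint forces $V^{\pm}$ to be finite and regular on the feasible region but identically $+\infty$ on its complement. The unconstrained Lipschitz estimate of \cite{cardaliaguet2007differential} supplies the template for the feasible region, and the new ingredient is controlling the $\rho$-penalty and its interaction with the backward reachable set.

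For $x_0 \in \bar{\mathcal{Q}}(t_0)$ I would first observe that the safety subproblem is a \emph{complete-information} reachability game: neither $f$ nor the constraint $c$ depends on the type, so $\bar{\mathcal{Q}}(t)$ is type-independent and can be described purely through open-loop controls via Lemma~\ref{lemma:strategy}. By the existence-of-value theory for state-constrained zero-sum games under Isaacs' condition (cf. \cite{lee2022safety}), the $\inf\sup$ and $\sup\inf$ characterizations of $\bar{\mathcal{Q}}$ coincide, so the definition ``$\forall \eta\,\exists \zeta$ forcing a violation before $T$'' upgrades to ``$\exists \zeta\,\forall \eta$''. Hence, for any $p$, Player 2 can guarantee the catastrophic $\rho = +\infty$ outcome in both value orderings, and I conclude $V^+(t_0,x_0,p) = V^-(t_0,x_0,p) = +\infty$.

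For $x_0 \in \mathcal{Q}(t_0)$, the fact that $x_0 \notin \bar{\mathcal{Q}}$ means Player 1 has a safety-guaranteeing strategy against every $\zeta$; along any such play $\rho \equiv 1$, so $\sum_i p_i G_i = \mathbb{E}[\sum_i p_i g_i(X_T)] \in [-M, M]$ with $M := \max_i \|g_i\|_\infty$, giving finiteness and boundedness of $V^{\pm}$. Lipschitz continuity in $p$ is then immediate: for fixed feasible $(\eta,\zeta)$ the payoff is affine in $p$ with coefficients bounded by $M$, so $|\sum_i (p_i - q_i) G_i| \leq M\|p-q\|_1$, and this estimate is preserved under $\inf\sup$ and $\sup\inf$. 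Lipschitz continuity in $(t_0,x_0)$ is the classical Evans--Souganidis/Cardaliaguet argument: Gronwall's inequality together with the uniform Lipschitzness of $f$ yields $|X^{t_0,x_0,\alpha,\delta}_s - X^{t_0,x_0',\alpha,\delta}_s| \leq e^{L(s-t_0)}|x_0 - x_0'|$, boundedness of $f$ controls the displacement over a small time increment, and composing with the Lipschitzness of $g_i$ and passing to the $\inf\sup$ transfers the bound to $V^{\pm}$, where Lemma~\ref{lemma:strategy} again lets the open-loop trajectory estimates act on behavioral strategies.

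The \textbf{main obstacle} I anticipate is reconciling these trajectory estimates with the state constraint near $\partial\mathcal{Q}(t_0)$: a near-optimal strategy for $x_0$ need not keep the perturbed trajectory starting at $x_0'$ inside $\mathcal{C}$, and any loss of feasibility makes the payoff jump to $+\infty$, destroying the estimate. I would therefore restrict the Lipschitz claim to the finite region and show that a feasible near-optimal strategy for one initial condition transfers to a nearby one with only $O(|x_0-x_0'|)$ degradation. This relies on the regularity of $\mathcal{Q}(t)$ as the sublevel set of a continuous reachability value: states lying at distance $\gtrsim |x_0 - x_0'|$ inside $\mathcal{Q}(t_0)$ admit feasible strategies with a matching safety margin, and quantifying this margin to obtain a uniform local Lipschitz constant on compact subsets of $\mathcal{Q}(t_0)$ is the delicate point absent from the unconstrained analysis of \cite{cardaliaguet2007differential}.
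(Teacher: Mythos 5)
Your decomposition matches the only thing the paper actually provides here, namely a pointer to Lemma 3.1 of \cite{cardaliaguet2007differential} for the unconstrained Lipschitz estimate; the paper gives no argument for the constrained adaptations, so your proposal is already more explicit than the source. The infeasible-region half is essentially right, with one simplification available: for $V^+$ no inf/sup swap is needed at all, since $\bar{\mathcal{Q}}(t_0)$ is defined by ``$\forall \eta\,\exists\zeta$'' and $V^+$ places $\sup_\zeta$ inside $\inf_{(\eta_i)}$ --- for any profile $(\eta_i)$ pick an $i$ with $p_i>0$ and the violating $\zeta$ against $\eta_i$, and the expected payoff is already $+\infty$. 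The alternative theorem for the complete-information reachability game (which holds under Isaacs' condition for nonanticipative strategies with delay, and is what the paper implicitly relies on when it asserts that $\bar{\mathcal{Q}}$ can be described via pure strategies through Lemma~\ref{lemma:strategy}) is genuinely needed only for $V^-$, exactly as you use it. The Lipschitz-in-$p$ estimate via the affine structure of the payoff is also fine, provided one notes that $\rho$ is type-independent so infeasible strategy profiles are excluded simultaneously for $p$ and $q$.

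The genuine gap is the one you flag and then leave open: Lipschitz continuity in $(t_0,x_0)$ up to $\partial\mathcal{Q}(t_0)$. The Gronwall transfer argument requires that an $\epsilon$-optimal feasible strategy for $x_0$ remain feasible when replayed from $x_0'$; near the boundary this can fail, the payoff jumps to $+\infty$, and the estimate is destroyed, as you observe. Your proposed remedy --- extracting a quantitative safety margin from the continuity of the reachability value --- does not by itself yield a uniform Lipschitz constant on all of $\mathcal{Q}(t_0)$: for state-constrained problems the value is in general not even continuous up to the boundary of the feasible set unless a constraint qualification is imposed (an inward-pointing condition on $f$ at $\partial\mathcal{C}$, or small-time controllability of Player 1 into the interior), and no such hypothesis appears among the paper's assumptions 1--5. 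So either the claim must be read as local Lipschitz continuity on compact subsets of the interior of $\mathcal{Q}(t_0)$ --- which is all the rest of the paper actually uses, e.g.\ the $(2\epsilon)$-optimality transfer inside balls $B_{\rho}(x_l)$ with cells $E_l$ kept away from $\partial\mathcal{Q}(t_1)$ in the proof of Lemma~\ref{lemma:subdynamic} --- or an additional controllability hypothesis must be added to close your margin argument. As written, both your proof and the paper's citation stop short of this step.
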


The following convexity result was originally developed for repeated games with incomplete information~\cite{de1996repeated} and was later extended to differential games~\cite{cardaliaguet2007differential}. The same convexity result holds for imperfect-information dynamic games~\cite{brown2020combining}. 

\begin{lemma}
\label{lemma:convex}
    (convexity property of $V^{\pm}$). For any $(t, x) \in [0, T] \times \mathbb{R}^{d_x}$, $V^{\pm}$ are convex in $p$ on $\Delta(I)$.
\end{lemma}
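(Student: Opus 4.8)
The plan is to adapt the classical belief-splitting argument of De Meyer and Cardaliaguet, whose only genuinely new ingredient here is verifying that the state-constraint penalty $\rho$ does not interfere with the mixing construction. I would first dispose of the infeasible case: by Lemma~\ref{lemma:regularity}, $V^{\pm}(t,x,p)=+\infty$ for every $x\in\bar{\mathcal{Q}}(t)$, so the map $p\mapsto V^{\pm}(t,x,p)$ is the constant $+\infty$, which is convex in the extended-real sense. It then remains to treat $x\in\mathcal{Q}(t)$, where the same lemma gives finite Lipschitz values, so it suffices to establish the convexity inequality for interior beliefs and extend to $\partial\Delta(I)$ by continuity.

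\emph{Core construction.} Fix $(t,x)$ with $x\in\mathcal{Q}(t)$, take $p=\lambda p^1+(1-\lambda)p^2$ with $\lambda\in(0,1)$ and $p^1,p^2\in\Delta(I)$, and let $\epsilon>0$. Choose families $(\eta_i^1),(\eta_i^2)\in(\mathcal{H}_r(t))^I$ that are $\epsilon$-optimal for $p^1$ and $p^2$. Using the assumption that all component strategies live on a common probability space, I build a single family $(\eta_i)$ by augmenting the space with an independent device that, conditioned on the true type $i$, selects $\eta_i^1$ with probability $a_i:=\lambda p_i^1/p_i$ and $\eta_i^2$ with probability $b_i:=(1-\lambda)p_i^2/p_i$ (defined arbitrarily when $p_i=0$, which forces $p_i^1=p_i^2=0$ and renders type $i$ irrelevant). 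Since $a_i+b_i=1$, this is a legitimate behavioral strategy, and non-anticipativity with delay is inherited from its two components.

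\emph{Linearity through the penalty, and concluding each value.} Because each $G_i$ is an expectation of the per-realization quantity $g_i(X_T^{\cdots})\,\rho(\mathcal{X}^{\cdots}_\tau)$ over the product space, appending an independent type-conditioned coin simply produces a convex combination of expectations:
\[
p_i\,G_i(t,x,\eta_i,\zeta)=\lambda p_i^1\,G_i(t,x,\eta_i^1,\zeta)+(1-\lambda)p_i^2\,G_i(t,x,\eta_i^2,\zeta).
\]
This is the one place the state constraint could have caused trouble, and it does not: $\rho$ sits inside the expectation and is evaluated along each deterministic trajectory $(\alpha_\omega,\delta_\omega)$ separately (Lemma~\ref{lemma:strategy}), so it is unaffected by the outer mixing weights. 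Summing over $i$ gives $\sum_i p_i G_i(\eta_i,\zeta)=\lambda\sum_i p_i^1 G_i(\eta_i^1,\zeta)+(1-\lambda)\sum_i p_i^2 G_i(\eta_i^2,\zeta)$ for every $\zeta$. For $V^-$, fix $\zeta$ and set $W_\zeta(p):=\inf_{(\eta_i)}\sum_i p_i G_i(\eta_i,\zeta)$; the construction shows $W_\zeta(p)\le\lambda(W_\zeta(p^1)+\epsilon)+(1-\lambda)(W_\zeta(p^2)+\epsilon)$, so $W_\zeta$ is convex and $V^-=\sup_\zeta W_\zeta$ is convex as a supremum of convex functions. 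For $V^+$, I apply the combined strategy inside $\inf_{(\eta_i)}\sup_\zeta$ and invoke subadditivity of the supremum, $\sup_\zeta[\lambda A(\zeta)+(1-\lambda)B(\zeta)]\le\lambda\sup_\zeta A(\zeta)+(1-\lambda)\sup_\zeta B(\zeta)$, to get $V^+(t,x,p)\le\lambda(V^+(t,x,p^1)+\epsilon)+(1-\lambda)(V^+(t,x,p^2)+\epsilon)$. Letting $\epsilon\to0$ yields convexity in both cases.

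\emph{Main obstacle.} The bulk of this is standard; the delicate points I would spell out are (i) that the augmented, type-correlated randomization is genuinely measurable and non-anticipative with delay, so that $(\eta_i)\in(\mathcal{H}_r(t))^I$, and (ii) that the penalty $\rho$ preserves the per-type linearity displayed above, i.e. that mixing behavioral strategies never introduces feasibility coupling across the two component strategies. I expect (ii) — the interaction between belief splitting and the hard state constraint — to be the only substantively new step relative to the unconstrained proof in Cardaliaguet~\yrcite{cardaliaguet2007differential}.
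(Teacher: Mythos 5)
Your proposal is correct and follows essentially the same route as the paper: the belief-splitting construction with type-conditioned mixing weights $\lambda p_i^1/p_i$ and $(1-\lambda)p_i^2/p_i$, the observation that $\rho$ sits inside the per-realization expectation and so does not disturb the resulting linearity of each $p_i G_i$, and the constant-$+\infty$ convexity on $\bar{\mathcal{Q}}(t)$. Your version is marginally more careful than the paper's in using $\epsilon$-optimal rather than exactly optimal strategy families and in handling $V^-$ via the fixed-$\zeta$ functions $W_\zeta$ and the supremum of convex functions, but these are refinements of the same argument, not a different one.
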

\begin{proof}
    Let $p^{\lambda} = (1-\lambda)p^0 + \lambda p^1$ for some $p^0$, $p^1 \in \Delta(I)$. Let $((\eta_i^0), \zeta^0)$ and $((\eta_i^1), \zeta^1)$ be the equilibrial strategies for $V(t,x,p^0)$ and $V(t,x,p^1)$, respectively. Introduce a set of ``splitting'' behavioral strategies $(\eta_i^{\lambda})$ for $(t, x, p^{\lambda})$ such that for any type $i$, $\eta_i^{\lambda} = \eta_i^0$ with probability $(1-\lambda)p_i^0/p_i^{\lambda}$ and $\eta_i^{\lambda} = \eta_i^1$ with probability $\lambda p_i^1/p_i^{\lambda}$. Then we have
    \begin{equation}
    \begin{aligned}
        & \sup_{\zeta} \sum_i p_i^{\lambda} 
        G_i(t,x,\eta_i^{\lambda},\zeta)
        \\
        =& \sup_{\zeta} \sum_i \left(p_i^{\lambda}\frac{(1-\lambda)p_i^0}{p_i^{\lambda}}
        G_i(t,x,\eta_i^0,\zeta)
        + p_i^{\lambda}\frac{\lambda p_i^1}{p_i^{\lambda}}
        G_i(t,x,\eta_i^1,\zeta)
        \right)\\
        \leq & (1-\lambda) \sup_{\zeta} \sum_i p_i^0 
        G_i(t,x,\eta_i^{0},\zeta)
        + \lambda \sup_{\zeta} \sum_i p_i^1
        G_i(t,x,\eta_i^{1},\zeta)
        .
    \end{aligned}
    \label{eq:proof_convexity}
    \end{equation}
    Since the inequality in Eq.~\eqref{eq:proof_convexity} holds for any ``splitting'' $(\eta_i^{\lambda})$, we have
    \begin{equation}
        V^{\pm}(t,x,p^\lambda) \leq (1-\lambda)V^{\pm}(t,x,p^0) + \lambda V^{\pm}(t,x,p^1)
    \end{equation}
    for any $t \in [0,T]$ and $x \in \mathcal{Q}(t)$. For $x \in \bar{\mathcal{Q}}(t)$, the equality holds since $V^{\pm}(t,x,\cdot) = +\infty$.
\end{proof}

\textbf{Remarks.} 
(1) The proof says that by playing a ``splitting'' strategy, the value at $(t,x,p^\lambda)$ should at least be as good as a linear interpolation between those at $(t,x,p^0)$ and at $(t,x,p^1)$. Hence the value is a convex hull in $\Delta(I)$ at any $(t,x) \in [0,T] \times \mathbb{R}^{d_x}$. (2) Assuming that the ``splitting'' strategy of Player 1 is known by Player 2, then the latter can perform Bayesian inference on Player 1's type based on his actions. For any type $i$, let $u_i^0$ and $u_i^1$ be two distinct actions to be taken at $(t,x)$ following $\eta_i^0$ and $\eta_i^1$, respectively, and let $p^\lambda$ be the current common belief. Then under observation of $u_i^0$ (resp. $u_i^1$), the common belief becomes $p^0$ (resp. $p^1$) with probability $(1-\lambda)$ (resp. $\lambda$). Since $p^\lambda = (1-\lambda)p^0 + \lambda p^1$, common belief is a martingale.


Next, we introduce a reformulation of $V^{-*}$ to facilitate the derivation of its subdynamic principle. The proof of this reformulation is extended from Lemma 4.1 of Cardaliaguet~\yrcite{cardaliaguet2007differential} to incorporate the state constraints.
\begin{lemma}
\label{lemma:reformulation}
    (reformulation of $V^{-*}$). We have
    \begin{equation}
        V^{-*}(t_0, x_0, \hat{p}) = 
        \inf_{\zeta \in \mathcal{Z}_r(t_0)} \sup_{\eta \in \mathcal{H}_r(t_0)} 
        \max_i \left\{ \hat{p}_i - 
        G_i(t_0,x_0,\eta,\zeta)
        \right\}
        \label{eq:reformulation}
    \end{equation}
\end{lemma}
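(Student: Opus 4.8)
The plan is to reduce the claimed identity to a single minimax interchange over the simplex and Player~2's behavioral strategies, after which the remaining steps are bookkeeping. First I would unfold the conjugate. By definition $V^{-*}(t_0,x_0,\hat p)=\sup_{p\in\Delta(I)}\big[\langle\hat p,p\rangle-V^-(t_0,x_0,p)\big]$, and since in $V^-=\sup_\zeta\inf_{(\eta_i)}\sum_i p_i G_i(\eta_i,\zeta)$ the inner minimization is over a separable objective whose $i$-th term depends only on $\eta_i$, the infimum decouples: writing $\Phi_i(\zeta):=\inf_{\eta_i\in\mathcal{H}_r(t_0)}G_i(t_0,x_0,\eta_i,\zeta)$ we get $V^-(t_0,x_0,p)=\sup_\zeta\sum_i p_i\Phi_i(\zeta)$. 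Substituting and pushing the minus sign through the supremum ($-\sup_\zeta=\inf_\zeta(-\,\cdot)$) yields
\[
V^{-*}(t_0,x_0,\hat p)=\sup_{p\in\Delta(I)}\ \inf_{\zeta\in\mathcal{Z}_r(t_0)}\ \Psi(p,\zeta),\qquad \Psi(p,\zeta):=\sum_{i}p_i\big(\hat p_i-\Phi_i(\zeta)\big).
\]

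The core of the proof is to exchange $\sup_p$ and $\inf_\zeta$. The payoff $\Psi$ is affine, hence continuous and concave, in $p$ on the compact convex simplex $\Delta(I)$. I would then argue that $\Psi(p,\cdot)$ is convex and lower semicontinuous on the convex set $\mathcal{Z}_r(t_0)$: given two behavioral strategies of Player~2 and a mixing weight $\lambda$, Lemma~\ref{lemma:strategy} together with a product-space construction of the mixture shows that $\zeta\mapsto G_i(\eta_i,\zeta)$ is affine, so $-\Phi_i=\sup_{\eta_i}(-G_i)$ is convex and lower semicontinuous (a supremum of affine, continuous maps), and $\Psi(p,\cdot)=\langle\hat p,p\rangle+\sum_i p_i(-\Phi_i)$ inherits these properties because $p_i\ge 0$. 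Sion's minimax theorem then applies and delivers $V^{-*}(t_0,x_0,\hat p)=\inf_\zeta\sup_p\Psi(p,\zeta)$.

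It remains to identify $\inf_\zeta\sup_p\Psi$ with the right-hand side of Eq.~\eqref{eq:reformulation}. For fixed $\zeta$, maximizing the affine map $p\mapsto\Psi(p,\zeta)$ over $\Delta(I)$ attains its value at a vertex, so $\sup_p\Psi(p,\zeta)=\max_i(\hat p_i-\Phi_i(\zeta))=\max_i\sup_{\eta_i}(\hat p_i-G_i(\eta_i,\zeta))$. Finally I would prove the elementary identity $\max_i\sup_{\eta_i}(\hat p_i-G_i(\eta_i,\zeta))=\sup_{\eta}\max_i(\hat p_i-G_i(\eta,\zeta))$ by two inequalities. For one direction, pick an index $i^\star$ attaining (up to $\varepsilon$) the outer maximum on the left and take the single strategy $\eta=\eta_{i^\star}$, which shows $\sup_\eta\max_i(\hat p_i-G_i(\eta,\zeta))\ge\sup_{\eta_{i^\star}}(\hat p_{i^\star}-G_{i^\star}(\eta_{i^\star},\zeta))$. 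For the reverse, for every single $\eta$ and each $i$ we have $\hat p_i-G_i(\eta,\zeta)\le\sup_{\eta_i}(\hat p_i-G_i(\eta_i,\zeta))\le\max_i\sup_{\eta_i}(\cdot)$, so taking $\max_i$ and then $\sup_\eta$ gives $\sup_\eta\max_i(\cdot)\le\max_i\sup_{\eta_i}(\cdot)$. Chaining the three displays yields exactly Eq.~\eqref{eq:reformulation}.

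The main obstacle is the minimax interchange, and within it the two features that are new relative to Lemma~4.1 of Cardaliaguet: (i) verifying that randomizing Player~2's behavioral strategy keeps the \emph{constrained} payoff $G_i$ — which now carries the factor $\rho(\mathcal{X}^{t_0,x_0,\eta,\zeta}_\tau)\in\{1,+\infty\}$ — affine and measurable under the product-space mixture, and (ii) handling the extended-real values produced by constraint violation. For $x_0\in\bar{\mathcal{Q}}(t_0)$ the regularity result (Lemma~\ref{lemma:regularity}) gives $V^-=+\infty$, hence $V^{-*}=-\infty$, and I would check separately that the right-hand side of Eq.~\eqref{eq:reformulation} is likewise $-\infty$ because Player~2 can force $\rho=+\infty$ against every response of Player~1; for $x_0\in\mathcal{Q}(t_0)$ all quantities are finite and Sion's theorem applies as above. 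Establishing the lower semicontinuity of $\Psi(p,\cdot)$ in the $L^1$ topology on the induced controls, so that the supremum defining $-\Phi_i$ is genuinely lower semicontinuous, is the remaining delicate point.
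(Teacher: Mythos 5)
Your skeleton is sound and shares two ingredients with the paper's proof: the decoupling $V^-(t_0,x_0,p)=\sup_\zeta\sum_i p_i\inf_{\eta}G_i(t_0,x_0,\eta,\zeta)$, and the elementary identity $\max_i\sup_{\eta_i}(\hat p_i-G_i(\eta_i,\zeta))=\sup_\eta\max_i(\hat p_i-G_i(\eta,\zeta))$, which you prove correctly. But your central step --- the interchange $\sup_p\inf_\zeta\Psi=\inf_\zeta\sup_p\Psi$ via Sion --- is where the proposal has a genuine gap, and you half-acknowledge it. Sion's theorem requires \emph{both} feasible sets to be convex subsets of linear topological spaces and requires the relevant semicontinuity in \emph{both} variables. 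The set $\mathcal{Z}_r(t_0)$ of behavioral strategies is a collection of pairs (probability space, non-anticipative map); it carries no canonical linear structure, and the ``mixture'' you build by a product-space construction only makes $\Psi(p,\cdot)$ \emph{convex-like} in the sense of Fan, not convex on a topological vector space. On top of that, the lower semicontinuity of $\sup_{\eta_i}(-G_i(\eta_i,\cdot))$ in any workable topology on $\mathcal{Z}_r(t_0)$ --- which you flag as ``the remaining delicate point'' --- is exactly the part that is hard and is never established. As written, the step would fail; to salvage it along these lines you would need a Kneser/Ky Fan type minimax theorem for concave--convex-like payoffs, which places compactness and upper semicontinuity only on the $p$ side (where $\Delta(I)$ is compact and $\Psi$ is affine continuous) and dispenses with any topology on the strategy side.

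The paper proves precisely this interchange by a different, topology-free device: it names the right-hand side $z(t_0,x_0,\hat p)$, shows $z$ is convex in the \emph{finite-dimensional} variable $\hat p$ by the same splitting construction as Lemma~\ref{lemma:convex}, computes the conjugate $z^*(t_0,x_0,p)$ in closed form (the inner $\sup_{\hat p}\min_i\{p^T\hat p-\hat p_i+\inf_\eta G_i\}$ is attained at $\hat p_i=\inf_\eta G_i$, giving $z^*=V^-$), and concludes $V^{-*}=z^{**}=z$ by biconjugation. The only analytic inputs are convexity and Lipschitz continuity of $z$ in $\hat p\in\mathbb{R}^I$, so the infinite-dimensional strategy spaces never need to be topologized. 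Your handling of the extended-real values is fine --- for $x_0\in\mathcal{Q}(t_0)$ Player 1 has a strategy that is safe against every $\zeta$, so $\Phi_i(\zeta)$ is finite and bounded, and the $x_0\in\bar{\mathcal{Q}}(t_0)$ case reduces to checking both sides equal $-\infty$ --- but the minimax swap itself needs to be replaced by the biconjugation argument (or by a convex-like minimax theorem) before the proof is complete.
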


\begin{proof}
    For later use, we first note that
    \begin{equation}
    \begin{aligned}
        V^-(t_0, x_0, p) & = \sup_{\zeta} \inf_{(\eta_i)} \sum_i p_i 
        G_i(t_0,x_0,\eta,\zeta)
        \\
        & = \sup_{\zeta} \sum_i p_i \inf_{\eta} 
        G_i(t_0,x_0,\eta,\zeta)
        .
    \end{aligned}
    \end{equation}
    Let the right-hand side of Eq.~\eqref{eq:reformulation} be $z=z(t_0,x_0,\hat{p})$. We can show that $z$ is convex with respect to $\hat{p}$ using a technique similar to that of Lemma~\ref{lemma:convex}. 
    
    Then by the definition of $z$:
    \begin{equation}
    \begin{aligned}
        z^*(t_0,x_0,p) & = \sup_{\hat{p}} p^T \hat{p} - \inf_{\zeta} \sup_{\eta} \max_i \left \{  \hat{p}_i - 
        G_i(t_0,x_0,\eta,\zeta) 
        \right\} \\
        & = \sup_{\hat{p}} p^T \hat{p} - \inf_{\zeta} \max_i \left \{  \hat{p}_i - \inf_{\eta}  
        G_i(t_0,x_0,\eta,\zeta) 
        \right\} \\
        & = \sup_{\zeta}\sup_{\hat{p}} \min_i \left \{ p^T \hat{p} -     \hat{p}_i + \inf_{\eta} 
        G_i(t_0,x_0,\eta,\zeta) 
        \right\}. \\
    \end{aligned}
    \end{equation}
    In this last expression, $\sup_{\hat{p}}$ is attained by setting $\hat{p}_i = \inf_{\eta}  
    G_i(t_0,x_0,\eta,\zeta) 
    $, in which case we have
    \begin{equation}
        z^*(t_0,x_0,p) = \sup_{\zeta} \sum_i p_i \inf_{\eta} 
        G_i(t_0,x_0,\eta,\zeta) 
        = V^-(t_0,x_0,p,z).
    \end{equation}
    Since $z$ is convex with respect to $\hat{p}$, we have $V^{-*} = z^{**} = z$.
\end{proof}

Next, to introduce the subdynamic principle of $V^{-*}$, we first introduce
\begin{equation}
    U^{-*}(t_0,x_0,\hat{p}) := \inf_{\zeta \in \mathcal{Z}_r(t_0)} \sup_{\eta \in \mathcal{H}_r(t_0)} \max_i \left\{ \hat{p}_i - \mathbb{E}_{\eta,\zeta}\left [g_i(X^{t_0,x_0,\eta,\zeta}_{T})\right] \right\}
\end{equation}
as the conjugate lower value of the \textit{unconstrained} version of the game, and $U^{\pm}$ as the corresponding upper and lower values. From Lemma~\ref{lemma:regularity} and Lemma~\ref{lemma:reformulation}, $U^{-*}$ is Lipschitz continuous and convex in $\hat{p}$.  

\begin{lemma}
\label{lemma:subdynamic}
    (subdynamic principle for $V^{-*}$). For any $(t_0, x_0, \hat{p}) \in [0,T) \times \mathbb{R}^{d_x} \times \mathbb{R}^I$ and any $t_1 \in (t_0, T]$, denote $x_1 = X^{t_0,x_0,\eta,\zeta}_{t_1}$ and $\mathcal{X}_1 = \mathcal{X}^{t_0,x_0,\eta,\zeta}_{t_1}$. We have
    \begin{equation}
    \begin{aligned}
        V^{-*}(t_0,x_0,\hat{p}) \leq \inf_{\zeta \in \mathcal{Z}(t_0)} \sup_{\eta \in \mathcal{H}(t_0)}  \min & \left\{ \rho(\mathcal{X}_1) U^{-*}\left(t_1,x_1,\frac{\hat{p}}{\rho(\mathcal{X}_1)}\right), V^{-*}\left(t_1,x_1,\hat{p}\right)\right\}
    \end{aligned}
        \label{eq:subdynamic}
    \end{equation}
\end{lemma}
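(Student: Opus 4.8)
The plan is to establish the inequality by a dynamic-programming (concatenation) argument that splits the horizon at $t_1$, mirroring the unconstrained dual subdynamic principle of Cardaliaguet but with the extra bookkeeping forced by the multiplicative penalty $\rho$. I would begin from the reformulation in Lemma~\ref{lemma:reformulation}, writing
\[
V^{-*}(t_0,x_0,\hat p)=\inf_{\zeta}\sup_{\eta}\max_i\{\hat p_i-G_i(t_0,x_0,\eta,\zeta)\},
\]
and fix $\epsilon>0$. Since the left-hand side is itself an infimum over $\zeta\in\mathcal{Z}_r(t_0)$, it suffices to exhibit one admissible Player~2 strategy whose guaranteed payoff is within $O(\epsilon)$ of the right-hand side, with the supremum over $\eta$ controlled uniformly. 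The two terms of the inner $\min$ should be read as the \emph{evolution} branch $B:=V^{-*}(t_1,x_1,\hat p)$ (Player~2 keeps playing the constrained game) and the \emph{obstacle} branch $A:=\rho(\mathcal{X}_1)U^{-*}(t_1,x_1,\hat p/\rho(\mathcal{X}_1))$ (the constraint has been, or is, forced to break).

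\textbf{Construction and the evolution branch.} I would build Player~2's strategy by concatenation at $t_1$: on $[t_0,t_1]$ she plays an $\epsilon$-optimal pure strategy $\bar\zeta\in\mathcal{Z}(t_0)$ for the right-hand side, and on $[t_1,T]$ she switches, adaptively in the realized pair $(x_1,\mathcal{X}_1)$, to an $\epsilon$-optimal continuation for the appropriate subgame. The crucial admissibility point is that the strategies are non-anticipative \emph{with delay}, so that, using Lemma~\ref{lemma:strategy} together with a measurable-selection argument, the family of continuations $\{\zeta^{x_1}\}$ can be glued into a single measurable, non-anticipative strategy on $[t_0,T]$ without peeking into the future. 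Now fix any response $\eta$. The induced trajectory on $[t_0,t_1]$ coincides with the one produced by $(\eta,\bar\zeta)$, so it yields the same $x_1$, $\mathcal{X}_1$, and hence the same $A$ and $B$. If $\mathcal{X}_1\subseteq\mathcal{C}$ then $\rho(\mathcal{X}_1)=1$, the payoff factorizes over the two sub-intervals, and choosing $\zeta^{x_1}$ $\epsilon$-optimal for the constrained subgame (again via Lemma~\ref{lemma:reformulation} at $t_1$) gives $\max_i\{\hat p_i-G_i\}\le B+\epsilon$; moreover $\rho=1$ forces $A=U^{-*}(t_1,x_1,\hat p)\ge V^{-*}(t_1,x_1,\hat p)=B$ by the elementary monotonicity $V^{-*}\le U^{-*}$ (the constraint raises the primal value, hence lowers its conjugate), so $\min\{A,B\}=B$ and the feasible case is bounded by $\min\{A,B\}+\epsilon$.

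\textbf{The obstacle branch and the main obstacle.} The remaining case is $\mathcal{X}_1\not\subseteq\mathcal{C}$, i.e. $\rho(\mathcal{X}_1)=+\infty$: the penalty then makes $G_i=+\infty$ and $\hat p_i-G_i=-\infty$, so the left-hand integrand is $-\infty$ and is dominated by anything. What must still be checked is that the right-hand side reports a consistent value in this branch, and this is exactly where the perspective term $\rho(\mathcal{X}_1)U^{-*}(t_1,x_1,\hat p/\rho(\mathcal{X}_1))$ enters: writing $\rho\,U^{-*}(t_1,x_1,\hat p/\rho)=\sup_{p}\{\hat p^\top p-\rho\,U^-(t_1,x_1,p)\}$ and letting $\rho\to+\infty$ identifies $A$ with the recession/perspective limit of the unconstrained conjugate, which I would show equals the value carried by an already-violated trajectory. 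Combining the two cases, for the constructed $\zeta$ and every $\eta$ the left-hand integrand is at most $\min\{A,B\}+\epsilon$ along the induced trajectory; taking $\sup_\eta$, then $\inf_\zeta$, and sending $\epsilon\to0$ yields the claim. I expect the main difficulty to be precisely this reconciliation of the multiplicative penalty across the split — establishing the factorization of $\rho$ over the two sub-intervals in the payoff (violation on either piece forcing $+\infty$) and matching the $\rho(\mathcal{X}_1)=+\infty$ case to the perspective scaling of $U^{-*}$ — together with guaranteeing that the adaptive, branch-dependent concatenation remains a legitimate non-anticipative-with-delay behavioral strategy; the reduction from $\mathcal{Z}_r,\mathcal{H}_r$ to the pure sets $\mathcal{Z},\mathcal{H}$ on the right-hand side is a further technical step, again handled through Lemma~\ref{lemma:strategy}.
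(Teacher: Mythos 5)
Your proposal is correct and follows essentially the same route as the paper's proof: reformulate $V^{-*}$ via Lemma~\ref{lemma:reformulation}, reduce Player~1 to pure strategies, concatenate at $t_1$ a pure $\epsilon$-optimal first leg with state-dependent $\epsilon$-optimal continuations made admissible by the delay and a Lipschitz covering of the reachable set, and split on whether the constraint is violated by $t_1$. The only organizational difference is that you collapse the feasible branch via the monotonicity $V^{-*}\le U^{-*}$ (so $\min\{A,B\}=B$ there), which neatly replaces the paper's explicit four-scenario comparison of the two terms following its ``Property of $\rho(\cdot)$'' computation.
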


\begin{proof}
    Denote $V_1^{-*}(t_0, t_1, x_0, \hat{p}):= \inf_{\zeta \in \mathcal{Z}(t_0)}\sup_{\eta \in \mathcal{H}(t_0)} V^{-*}\left(t_1,X^{t_0,x_0,\eta,\zeta}_{t_1},\hat{p}\right)$. $U_1^{-*}$ is similarly defined. We need the following preparations for the proof.
    
    \textbf{Player 1 plays a pure strategy in $V^{-*}$.} We show below that best responses are always pure. In particular, Player 1 can play in pure strategies in $V^{-*}$, namely,
    \begin{equation}
        V^{-*}(t_0,x_0,\hat{p}) = \inf_{\zeta \in \mathcal{Z}_r(t_0)} \sup_{\eta \in \mathcal{H}(t_0)} \max_i \left\{ \hat{p}_i - 
        G_i(t_0,x_0,\eta,\zeta)
        \right\}
    \end{equation}
    for any $(t_0,x_0,\hat{p})$. First from Theorem \ref{lemma:reformulation} and using $\mathcal{H}(t) \subset \mathcal{H}_r(t)$, we have
    \begin{equation}
    \begin{aligned}
        V^{-*}(t_0, x_0, \hat{p}) & = 
        \inf_{\zeta \in \mathcal{Z}_r(t_0)} \sup_{\eta \in \mathcal{H}_r(t_0)} 
        \max_i \left\{ \hat{p}_i - 
        G_i(t_0,x_0,\eta,\zeta)
        \right\} \\
        & \geq \inf_{\zeta \in \mathcal{Z}_r(t_0)} \sup_{\eta \in \mathcal{H}(t_0)} \max_i \left\{ \hat{p}_i - \mathbb{E}_{\zeta}\left [g_i(X^{t_0,x_0,\eta,\zeta}_T)\rho(\mathcal{X}^{t_0,x_0,\eta,\zeta}_T)\right]\right\}.
        \label{eq:subdynamic1}
    \end{aligned}
    \end{equation}
    For the reverse inequality, we first note that for any $\eta \in \mathcal{H}_r(t_0)$ and $\omega_1 \in \Omega_{\eta}$, $\eta(\omega_1, \cdot) \in \mathcal{H}(t_0)$. With a fixed $\zeta \in \mathcal{Z}_r(t_0)$, and by using the convexity of $\max_i$ (i.e., $\max_i \mathbb{E}_{\omega}[f_i(\omega)] \leq \mathbb{E}_{\omega}[\max_i f_i(\omega)]$), we have
    \begin{equation}
        \begin{aligned}
            &\sup_{\eta \in \mathcal{H}_r(t_0)} \max_i \left\{ \hat{p}_i - 
            G_i(t_0,x_0,\eta,\zeta)
            \right\} \\
            \leq & \sup_{\eta \in \mathcal{H}_r(t_0)} \int_{\Omega_{\eta}} \max_i \left\{ \hat{p}_i - \mathbb{E}_{\zeta}\left [
            g_i(X^{t_0,x_0,\eta(\omega_1,\cdot),\zeta}_T)\rho(\mathcal{X}^{t_0,x_0,\eta(\omega_1,\cdot),\zeta}_T)\right]
            \right\}d\textbf{P}_{\eta}(\omega_1) \\
            \leq &\sup_{\eta \in \mathcal{H}_r(t_0)} \sup_{\omega_1 \in \Omega_{\eta}} 
            \max_i \left\{ \hat{p}_i - \mathbb{E}_{\zeta}\left [g_i(X^{t_0,x_0,\eta(\omega_1,\cdot),\zeta}_T)\rho(\mathcal{X}^{t_0,x_0,\eta(\omega_1,\cdot),\zeta}_T)\right]\right\}\\
            \leq & \sup_{\eta \in \mathcal{H}(t_0)} \max_i \left\{ \hat{p}_i - \mathbb{E}_{\zeta}\left [
            g_i(X^{t_0,x_0,\eta(\omega_1,\cdot),\zeta}_T)\rho(\mathcal{X}^{t_0,x_0,\eta(\omega_1,\cdot),\zeta}_T)\right]
            \right\}.
        \end{aligned}
        \label{eq:3.5b}
    \end{equation}
    Since Eq.~\eqref{eq:3.5b} holds for any $\zeta$, together with Eq.~\eqref{eq:subdynamic1}, we have
    \begin{equation}
    \begin{aligned}
         V^{-*}(t,x,\hat{p}) & = \inf_{\zeta \in \mathcal{Z}_r(t_0)} \sup_{\eta \in \mathcal{H}(t_0)} \max_i \left\{ \hat{p}_i -  \mathbb{E}_{\zeta}\left [
         g_i(X^{t_0,x_0,\eta,\zeta}_T)\rho(\mathcal{X}^{t_0,x_0,\eta,\zeta}_T)
         \right]\right\}.
    \end{aligned}
    \end{equation}
    Note that one can reach the same conclusion for $U^{-*}$. 
    
    \textbf{$\epsilon$-optimal strategy of Player 2.} Let $\epsilon > 0$ and $\zeta^0 \in \mathcal{Z}(t_0)$ be some pure $\epsilon$-optimal strategy for $V_1^{-*}(t_0,t_1,x,\hat{p})$. For any $x_1 \in \mathbb{R}^{d_x}$, we can find some $\epsilon$-optimal strategy $\zeta^{x_1} \in \mathcal{Z}_r(t_1)$ for Player 2 in the game $V^{-*}(t_1, x_1, \hat{p})$. 
    Let $B_{\rho}(x)$ be a ball around $x$ with radius $\rho$, and let $\partial \mathcal{Q}(t)$ be the boundary of $\mathcal{Q}(t)$, i.e., for any $x \in \partial \mathcal{Q}(t)$ and $\rho>0$, there exist $y \in B_{\rho}(x) \bigcap \mathcal{Q}(t)$ and $y' \in B_{\rho}(x) \bigcap \bar{\mathcal{Q}}(t)$.
    
    For $x_1, y \in \mathcal{Q}(t_1) 
\setminus \partial \mathcal{Q}(t_1)$, from Lipschitz continuity of the map $y \rightarrow V^{-*}(t_1, y, \hat{p})$,
    $\zeta^{x_1}$ is also $(2\epsilon)$-optimal for $V^{-*}(t_1, y, \hat{p})$ if $y \in B_{\rho}(x_1)$ for some radius $\rho >0$.
    The same applies to $x_1, y \in \bar{\mathcal{Q}}(t_1)$ and $y \in B_{\rho}(x_1)$ since $V^{-*}(t_1, x, \hat{p}) = +\infty$ is constant for $x \in \bar{\mathcal{Q}}(t_1)$.
    
    Since the dynamics $f$ is bounded, we also know that the reachable states from $(t_0,x_0)$ is bounded in some ball $B_{R}(0)$. Let us set $M = \|f\|_{\infty}$ and some small $\sigma > 0$ such that $M\sigma \leq \rho/2$. Then we choose $(x_l)_{l=1,...,l_0}$ such that $\bigcup_{l=1}^{l_0} B_{\rho/2}(x_l)$ contains $B_R(0)$. Let $(E_l)_{l=1,...,l_0}$ be a Borel partition of $B_R(0)$ such that, for any $l$, $E_l \subset B_{\rho/2}(x_l)$. We also require $(x_l)$ to be chosen properly so that $E_l \subset \mathcal{Q}(t_1)$ or $E_l \subset \bar{\mathcal{Q}}(t_1)$.
    
    We set
    \begin{equation}
        \zeta^l = \zeta^{x_l},~ \Omega^l = \Omega_{\zeta^{x_l}},~ \mathcal{F}^l = \mathcal{F}_{\zeta^{x_l}},\text{ and } \textbf{P}^l = \textbf{P}_{\zeta^{x_l}}
    \end{equation}
    for $l = 1,...,l_0$. We choose some delay $\tau \in (0, \sigma]$ for all the strategies $\zeta^l$. Note that if for some open-loop control $(\alpha,\delta) \in \mathcal{A}(t_0) \times \mathcal{D}(t_0)$ and for some $l$, we have $X_{t_1-\tau}^{t_0,x_0,\alpha,\delta} \in E_l$, then
    \begin{equation}
        |X_{t_1-\tau}^{t_0,x_0,\alpha,\delta} - X_{t_1}^{t_0,x_0,\alpha,\delta}| \leq \|f\|_{\infty}\tau \leq M\sigma \leq \rho/2,
    \end{equation}
    so that $X_{t_1}^{t_0,x_0,\alpha,\delta}$ belongs to $B_{\rho}(x_l)$. Hence $\zeta^l$ is $(2\epsilon)$-optimal for $V^{-*}$ at $(t_1,X_{t_1}^{t_0,x_0,\alpha,\delta},\hat{p})$.

    Let us now define a new strategy $\zeta \in \mathcal{Z}_r(t_0)$ by setting
    \begin{equation}
        \Omega_{\zeta} = \prod_{l=1}^{l_0}\Omega^l,~\mathcal{F}_{\zeta} = \mathcal{F}^1 \otimes ...\otimes \mathcal{F}^{l_0},\text{ and } \textbf{P}_{\zeta} = \textbf{P}^1 \otimes ...\otimes \textbf{P}^{l_0}.
    \end{equation}
    For any $\omega = (\omega^1,...,\omega^{l_0}) \in \Omega_{\zeta}$ and $\alpha \in \mathcal{A}(t_0)$, set
    \begin{equation}
        \zeta(\omega, \alpha) = \left\{ \begin{array}{ll}
            \zeta^0(\alpha)(\tau) & \text{if } \tau \in [t_0,t_1) \\
            \zeta^l(\omega^l,\alpha)(\tau) & \text{if } \tau \in [t_1, T] \text{ and }  X_{t_1-\tau}^{t_0,x_0,\alpha,\zeta^0} \in E_l.
        \end{array}\right.
    \end{equation}
    For any pure strategy $\eta \in \mathcal{H}(t_0)$, we have
    \begin{equation}
        g_i(X^{t_0,x_0,\eta,\zeta}_T) = \sum_{l=1}^{l_0} g_i\left( X^{t_1,X^{t_0,x_0,\eta,\zeta^0}_{t_1},\eta,\zeta^l}_T \right)\textbf{1}_{O^l},
    \end{equation}
    where $O^l := \left\{X^{t_0,x_0,\eta,\zeta^0}_{t_1-\tau} \in E_l\right\}$.

    \textbf{Property of $\rho(\cdot)$.} Let
    $        \rho_0 := \rho\left(\mathcal{X}^{t_0,x_0,\eta,\zeta^0}_{t_1}\right)$ and 
    $\rho_1^l :=  \rho\left(\mathcal{X}^{t_1,x_1^l,\eta,\zeta^l}_{T}\right)$,
    where $x_1^l := X^{t_0,x_0,\eta,\zeta^0}_{t_1}$ if the state falls in $E_l$ following pure strategies $(\eta, \zeta^0)$.
    Then we have
    \begin{equation}
    \begin{aligned}
        g_i\left(X^{t_0,x_0,\eta,\zeta}_T\right) \rho\left(\mathcal{X}^{t_0,x_0,\eta,\zeta}_{T}\right) & = g_i(X^{t_0,x_0,\eta,\zeta}_T) \sum_l \max\{\rho_0, \rho_1^l\} \textbf{1}_{O^l} \\
        & = \sum_l g_i(X^{t_1,x_1^l,\eta,\zeta}_T) \textbf{1}_{O^l} \sum_l \max\{\rho_0, \rho_1^l\} \textbf{1}_{O^l} \\
        & = \sum_l g_i(X^{t_1,x_1^l,\eta,\zeta}_T) \max\{\rho_0, \rho_1^l\} \textbf{1}_{O^l}.
    \end{aligned}
    \end{equation}
    For any set $(\omega^l) \in (\Omega^l)$, let $a^l := g_i(X^{t_1,x_1^l,\eta,\zeta(\omega^l,\cdot)}_T) \geq 0$. Also note that $\rho_0$ and $\rho^l_1$ only take values from $\{1, +\infty\}$. 
    One can show that the following always holds:
    \begin{equation}
         \sum_l a^l \max\{\rho_0, \rho_1^l\} \textbf{1}_{O^l} = 
         \max\left\{ \sum_l a^l \rho_0 \textbf{1}_{O^l}, \sum_l a^l \rho_1^l \textbf{1}_{O^l}\right\}.
    \end{equation}
    Similarly we have
    \begin{equation}
    \begin{aligned}
         & \int_{\Omega^l} \sum_l \left [ g_i\left(X^{t_1,X^{t_0,x_0,\eta,\zeta^0}_{t_1},\eta,\zeta^l(\omega^l, \cdot)}_{T}\right) \max\{\rho_0, \rho_1^l \}\textbf{1}_{O^l}\right] d\textbf{P}^l(\omega^l)\\
        = & \max\left\{ 
        \begin{array}{l}
             \int_{\Omega^l} \sum_l \left [ g_i\left(X^{t_1,X^{t_0,x_0,\eta,\zeta^0}_{t_1},\eta,\zeta^l(\omega^l, \cdot)}_{T}\right) \rho_0 \textbf{1}_{O^l}\right] d\textbf{P}^l(\omega^l),  \\\\
             \int_{\Omega^l} \sum_l \left [ g_i\left(X^{t_1,X^{t_0,x_0,\eta,\zeta^0}_{t_1},\eta,\zeta^l(\omega^l, \cdot)}_{T}\right) \rho_1^l \textbf{1}_{O^l}\right] d\textbf{P}^l(\omega^l) 
        \end{array}
          \right\}    
    \end{aligned}
    \end{equation}
    
    Now we derive an upper bound of $\max_i \left\{ \hat{p}_i - \mathbb{E}_{\zeta}\left [g_i(X^{t_0,x_0,\eta,\zeta}_{T}) \rho(\mathcal{X}^{t_0,x_0,\eta,\zeta}_{T})\right]\right\}$:
    \begin{equation}
    \begin{aligned}
        & \max_i \left\{ \hat{p}_i -  \mathbb{E}_{\zeta}\left [g_i(X^{t_0,x_0,\eta,\zeta}_{T}) \rho(\mathcal{X}^{t_0,x_0,\eta,\zeta}_{T}) \right] \right\},\\
        = & \max_i \left\{\hat{p}_i - \int_{\Omega^l} \sum_l \left [ g_i\left(X^{t_1,X^{t_0,x_0,\eta,\zeta^0}_{t_1},\eta,\zeta^l(\omega^l, \cdot)}_{T}\right) \max\{\rho_0, \rho_1^l \}\textbf{1}_{O^l}\right] d\textbf{P}^l(\omega^l)\right\},\\  
        \leq & \sup_{\eta' \in \mathcal{H}(t_1)} \max_i \left\{\hat{p}_i - \sum_l \left [ \int_{\Omega^l} g_i\left(X^{t_1,X^{t_0,x_0,\eta,\zeta^0}_{t_1},\eta',\zeta^l(\omega^l, \cdot)}_{T}\right) \max\{\rho_0, \rho_1^l \}d\textbf{P}^l(\omega^l) \textbf{1}_{O^l}\right] \right\},\\ 
        = & \sup_{\eta' \in \mathcal{H}(t_1)} \max_i \left\{\min \left\{ 
        \begin{array}{l}
             \hat{p}_i - \sum_l \left [ \int_{\Omega^l} g_i\left(X^{t_1,X^{t_0,x_0,\eta,\zeta^0}_{t_1},\eta',\zeta^l(\omega^l, \cdot)}_{T}\right) \rho_0 d\textbf{P}^l(\omega^l) \textbf{1}_{O^l}\right],  \\\\
             \hat{p}_i - \sum_l \left [ \int_{\Omega^l} g_i\left(X^{t_1,X^{t_0,x_0,\eta,\zeta^0}_{t_1},\eta',\zeta^l(\omega^l, \cdot)}_{T}\right) \rho_1^l d\textbf{P}^l(\omega^l) \textbf{1}_{O^l}\right]
        \end{array} 
        \right\} \right\},\\ 
        \leq & \min \left\{ \begin{array}{l}
             \sup_{\eta' \in \mathcal{H}(t_1)} \max_i \left\{ \hat{p}_i - \sum_l \left [ \int_{\Omega^l} g_i\left(X^{t_1,X^{t_0,x_0,\eta,\zeta^0}_{t_1},\eta',\zeta^l(\omega^l, \cdot)}_{T}\right) \rho_0 d\textbf{P}^l(\omega^l) \textbf{1}_{O^l}\right] \right\},   \\\\
             \sup_{\eta' \in \mathcal{H}(t_1)} \max_i \left\{\hat{p}_i - \sum_l \left [ \int_{\Omega^l} g_i\left(X^{t_1,X^{t_0,x_0,\eta,\zeta^0}_{t_1},\eta',\zeta^l(\omega^l, \cdot)}_{T}\right) \rho_1^l d\textbf{P}^l(\omega^l) \textbf{1}_{O^l}\right]  \right\}
        \end{array} 
         \right\},\\ 
        \leq & \min \left\{ \begin{array}{l}
             \sum_l \left [\rho_0 \sup_{\eta' \in \mathcal{H}(t_1)}  \max_i \left\{ \hat{p}_i/\rho_0 -  \int_{\Omega^l} g_i\left(X^{t_1,X^{t_0,x_0,\eta,\zeta^0}_{t_1},\eta',\zeta^l(\omega^l, \cdot)}_{T}\right)  d\textbf{P}^l(\omega^l) \right\}\textbf{1}_{O^l}\right],   \\\\
             \sum_l \left [ \sup_{\eta' \in \mathcal{H}(t_1)} \max_i \left\{\hat{p}_i - \int_{\Omega^l} g_i\left(X^{t_1,X^{t_0,x_0,\eta,\zeta^0}_{t_1},\eta',\zeta^l(\omega^l, \cdot)}_{T}\right) \rho_1^l d\textbf{P}^l(\omega^l) \right\} \textbf{1}_{O^l}\right]
        \end{array} 
         \right\}.\\ 
    \end{aligned}
    \end{equation}

    Shorten the first and second terms in the above upper bound $\min\{\cdot, \cdot\}$ as $A$ and $B$, respectively. For $B$: 
    \begin{equation}
    \begin{aligned}
        & \sum_l \left [ \sup_{\eta' \in \mathcal{H}(t_1)} \max_i \left\{\hat{p}_i - \int_{\Omega^l} g_i\left(X^{t_1,X^{t_0,x_0,\eta,\zeta^0}_{t_1},\eta',\zeta^l(\omega^l, \cdot)}_{T}\right) \rho_1^l d\textbf{P}^l(\omega^l) \right\} \textbf{1}_{O^l}\right] \\
        \leq & \sum_l \left ( V^{-*}(t_1,X^{t_0,x_0,\eta,\zeta^0}_{t_1},\hat{p}) + 2\epsilon \right) \textbf{1}_{O^l} \\
        & \text{(because $\zeta^l$ is ($2\epsilon$)-optimal for $V^{-*}$ at $(t_1, x_1, \hat{p})$ for any $x_1 \in E_l$.)}\\
        = & V^{-*}(t_1,X^{t_0,x_0,\eta,\zeta^0}_{t_1},\hat{p}) + 2\epsilon \\
        \leq & V^{-*}_1(t_0,t_1,x_0,\hat{p}) + 3\epsilon\\
        & \text{(because $\zeta^0$ is $\epsilon$-optimal for $V^{-*}_1(t_0,t_1,x_0,\hat{p})$.)}\\
    \end{aligned}
    \end{equation}

    For $A$, we consider the following scenarios: (1) If $\inf_{\eta'\in \mathcal{H}(t_1)}\sum_l \rho_1^l \textbf{1}_{O^l} = +\infty$, i.e., there is always a chance for Player 2 to achieve constraint violation when the game starts at $(t_1, X^{t_0,x_0,\eta,\zeta^0}_{t_1})$, and $\rho_0 = 1$, i.e., $(\eta,\zeta^0)$ does not induce constraint violation, then
    \begin{equation}
        \begin{aligned}
            A = & \sum_l \left [\sup_{\eta'\in \mathcal{H}(t_1)} \max_i \left\{ \hat{p}_i -  \inf_{\eta' \in \mathcal{H}(t_1)} \int_{\Omega^l} g_i\left(X^{t_1,X^{t_0,x_0,\eta,\zeta^0}_{t_1},\eta',\zeta^l(\omega^l, \cdot)}_{T}\right)  d\textbf{P}^l(\omega^l) \right\}\textbf{1}_{O^l}\right] \\
            > & \sum_l \left [\sup_{\eta'\in \mathcal{H}(t_1)} \max_i \left\{ \hat{p}_i -  \inf_{\eta' \in \mathcal{H}(t_1)} \int_{\Omega^l} g_i\left(X^{t_1,X^{t_0,x_0,\eta,\zeta^0}_{t_1},\eta',\zeta^l(\omega^l, \cdot)}_{T}\right)\rho_1^l  d\textbf{P}^l(\omega^l) \right\}\textbf{1}_{O^l}\right] \\
            = & B = -\infty.
        \end{aligned}
    \end{equation}
    
    If $\inf_{\eta'\in \mathcal{H}(t_1)}\sum_l \rho_1^l \textbf{1}_{O^l} = \rho_0 = 1$, then $A = B$. This is the scenario where the game reduces to its unconstrained version. Applying the same analysis from $B$ to have $A \leq \rho_0 (U_1^{-*}(t_0,t_1,x_0,\hat{p}/\rho_0) + 3\epsilon)$. If $\inf_{\eta'\in \mathcal{H}(t_1)}\sum_l \rho_1^l \textbf{1}_{O^l} = \rho_0 = +\infty$, then $A = B = -\infty$.

    If $\inf_{\eta'\in \mathcal{H}(t_1)}\sum_l \rho_1^l \textbf{1}_{O^l} = 1$ and $\rho_0 = +\infty$, the game starting from $(t_1, X^{t_0,x_0,\eta,\zeta^0}_{t_1})$ will be played as an unconstrained one. $A < B$. Hence $A \leq \rho_0 (U_1^{-*}(t_0,t_1,x_0,\hat{p}/\rho_0) + 3\epsilon) = -\infty$.

    Combining these scenarios we have
    \begin{equation}
        \min\{A,B\} \leq \min\{\rho_0 (U_1^{-*}(t_0,t_1,x_0,\hat{p}/\rho_0) + 3\epsilon), V^{-*}_1(t_0,t_1,x_0,\hat{p}) + 3\epsilon \}.
    \end{equation}
    
    Since $\epsilon$ can be arbitrarily small, we have
    \begin{equation}
    \begin{aligned}
        V^{-*}\left(t_0, x_0, \hat{p}\right) & \leq 
        \min\left\{ \rho_0 U_1^{-*}(t_0,t_1,x_0,\hat{p}/\rho_0), V^{-*}_1(t_0,t_1,x_0,\hat{p}) \right\} \\
        & = \inf_{\zeta \in \mathcal{Z}(t_0)} \min\left\{ \sup_{\eta \in \mathcal{H}(t_0)} \rho_0 U^{-*}(t_1,X^{t_0,x_0,\eta,\zeta}_{t_1},\hat{p}/\rho_0), \sup_{\eta \in \mathcal{H}(t_0)} V^{-*}(t_1,X^{t_0,x_0,\eta,\zeta}_{t_1},\hat{p}) \right\} \\
    \end{aligned}
    \label{eq:subdynamic_final}
    \end{equation}
Lastly, if $x_0 \in \bar{\mathcal{Q}}(t_0)$, $\inf_{\zeta}\sup_{\eta}\rho_0 = -\infty$ by definition; otherwise, $\rho_0 = 1$ and $U^{-*} = V^{-*}$ at $t_1$. In both cases, the RHS of Eq.~\eqref{eq:subdynamic_final} becomes
\begin{equation}
    \inf_{\zeta \in \mathcal{Z}(t_0)}\sup_{\eta \in \mathcal{H}(t_0)} \min\left\{  \rho_0 U^{-*}(t_0,X^{t_0,x_0,\eta,\zeta}_{t_1},\hat{p}/\rho_0), V^{-*}(t_0,X^{t_0,x_0,\eta,\zeta}_{t_1},\hat{p}) \right\}.
\end{equation}

\end{proof}

\begin{theorem}
\label{lemma:dualsubsolution}
    ($V^{-*}$ is a subsolution of HJ). For any $\hat{p} \in \mathbb{R}^{I}$, the map $(t, x) \rightarrow V^{-*}(t, x, \hat{p})$ is a viscosity subsolution of the dual Hamilton-Jacobi equation:
    \begin{equation}
        \min\left\{\rho(x)U^{-*}(t,x,\hat{p}/\rho(x)) - w,  ~w_t + H^*(x, Dw) \right\} = 0 \text{ in } [0,T] \times \mathbb{R}^{d_x},
    \end{equation}
    where $H$ is defined by Eq.~\eqref{eq:hamiltonian} and $H^*(x, \xi) = - H (x, -\xi)$. $\bar{\rho}(t_0,x_0) = 1$ if $x_0 \in \mathcal{Q}(t_0)$. 
\end{theorem}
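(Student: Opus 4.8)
The plan is to follow the standard route from a subdynamic programming principle to a viscosity subsolution property, using Lemma~\ref{lemma:subdynamic} as the single structural input and Lemma~\ref{lemma:regularity} to guarantee that $(t,x)\mapsto V^{-*}(t,x,\hat p)$ is finite and locally Lipschitz on the interior of $\mathcal{Q}(t_0)$ where the test is performed. Fix $\hat p \in \mathbb{R}^I$ and a smooth test function $\phi$, and suppose $(t,x) \mapsto V^{-*}(t,x,\hat p) - \phi(t,x)$ attains a local maximum at an interior point $(t_0,x_0)$ with $x_0 \in \mathcal{Q}(t_0)\setminus\partial\mathcal{Q}(t_0)$, so that $\rho(\{x_0\})=1$ and $V^{-*}$ is locally Lipschitz there. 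Since the equation is of obstacle type, $\min\{\,G-w,\ w_t+H^*(x,Dw)\,\}=0$ with $G(t,x):=\rho(x)U^{-*}(t,x,\hat p/\rho(x))$, the subsolution property amounts to establishing the two inequalities $G(t_0,x_0)-\phi(t_0,x_0)\ge 0$ and $\phi_t(t_0,x_0)+H^*(x_0,D\phi(t_0,x_0))\ge 0$ simultaneously. I would extract one from each branch of the $\min$ on the right-hand side of Lemma~\ref{lemma:subdynamic}.

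For the obstacle inequality I would let $t_1\downarrow t_0$ in Lemma~\ref{lemma:subdynamic}. As the time increment vanishes, every reachable trajectory $\mathcal{X}_1=\mathcal{X}^{t_0,x_0,\eta,\zeta}_{t_1}$ collapses to the point $x_0$ (the dynamics $f$ being bounded), so $\rho(\mathcal{X}_1)\to\rho(\{x_0\})=1$, and by continuity of $U^{-*}$ and $V^{-*}$ (Lemmas~\ref{lemma:regularity} and~\ref{lemma:reformulation}) the right-hand side converges to $\min\{\,\rho(x_0)U^{-*}(t_0,x_0,\hat p/\rho(x_0)),\,V^{-*}(t_0,x_0,\hat p)\,\}$. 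This yields $V^{-*}(t_0,x_0,\hat p)\le \rho(x_0)U^{-*}(t_0,x_0,\hat p/\rho(x_0))=G(t_0,x_0)$; since $\phi(t_0,x_0)=V^{-*}(t_0,x_0,\hat p)$ at the contact point, the first branch $G-\phi\ge 0$ follows.

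For the Hamiltonian inequality I would use the second branch, $V^{-*}(t_0,x_0,\hat p)\le \inf_{\zeta}\sup_{\eta} V^{-*}(t_1,X^{t_0,x_0,\eta,\zeta}_{t_1},\hat p)$. The local-maximum property gives $V^{-*}(t_1,x_1,\hat p)\le V^{-*}(t_0,x_0,\hat p)-\phi(t_0,x_0)+\phi(t_1,x_1)$ for $(t_1,x_1)$ near the contact point, so that $0\le \inf_{\zeta}\sup_{\eta}\big[\phi(t_1,X^{t_0,x_0,\eta,\zeta}_{t_1})-\phi(t_0,x_0)\big]$. Writing the increment as $\int_{t_0}^{t_1}\big(\phi_t+D\phi^T f(x,u,v)\big)\,ds$, dividing by $t_1-t_0$ and passing to the limit, the two-player infinitesimal game collapses, via Lemma~\ref{lemma:strategy} (which ties behavioral strategies to admissible open-loop controls) and Isaacs' condition, to the pointwise value $\max_{u}\min_{v}f(x_0,u,v)^T D\phi(t_0,x_0)=-H(x_0,-D\phi(t_0,x_0))=H^*(x_0,D\phi(t_0,x_0))$. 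This produces exactly $\phi_t(t_0,x_0)+H^*(x_0,D\phi(t_0,x_0))\ge 0$, the second branch. Combining the two branches gives $\min\{\,G-\phi,\ \phi_t+H^*(x_0,D\phi)\,\}\ge 0$, i.e.\ the subsolution property.

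The main obstacle I expect is the rigorous justification of the infinitesimal-game limit in the third step: showing that the $\inf_\zeta\sup_\eta$ over nonanticipative strategies with delay converges, after normalization by $t_1-t_0$, to the single-point Hamiltonian $H^*$ uniformly in the strategies. This is the delicate classical argument (as in Cardaliaguet~\yrcite{cardaliaguet2007differential} and Evans--Souganidis) and relies on (i) the measurable selection of near-optimal open-loop controls afforded by Lemma~\ref{lemma:strategy}, (ii) the Lipschitz and boundedness estimates on $f$ and $\phi$ to control the error of replacing the trajectory by its tangent, and (iii) Isaacs' condition to identify the infinitesimal lower and upper values. A secondary subtlety is keeping the obstacle rescaling $\hat p/\rho(\mathcal{X}_1)$ well behaved as $t_1\downarrow t_0$; restricting the contact point to the open set $\mathcal{Q}(t_0)\setminus\partial\mathcal{Q}(t_0)$, where $\rho\equiv 1$, removes the free-boundary difficulty and lets both limits be taken cleanly.
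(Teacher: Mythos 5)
Your proposal is correct and follows essentially the same route as the paper: both extract the two branches of the viscosity inequality from the two branches of the $\min$ in the subdynamic principle (Lemma~\ref{lemma:subdynamic}), obtaining the obstacle inequality by sending the time increment to zero and the Hamiltonian inequality by a Taylor expansion of $\phi$ along trajectories. The ``infinitesimal-game limit'' you flag as the main obstacle is dispatched in the paper by a simpler device than a uniform limit over nonanticipative strategies: one fixes $\zeta$ to be the constant control $v$, bounds Player 1's response by the pointwise supremum $\sup_{u} D\phi(t_0,x_0)^{T} f(x_0,u,v)$, and only then takes the infimum over $v$, giving $\phi_t + \inf_{v}\sup_{u} D\phi^{T} f \geq 0$, which is $\phi_t + H^{*}(x_0, D\phi) \geq 0$ under Isaacs' condition.
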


\begin{proof}
    Let $\hat{p} \in \mathbb{R}^I$ be fixed, and let $\phi$ be a smooth test function such that
    \begin{equation}
        \phi(t, x) \geq V^{-*}(t, x, \hat{p}) \quad \forall (t, x) \in [0, T] \times \mathbb{R}^{d_x},
    \end{equation}
    with an equality at $(t_0, x_0)$, where $t_0 \in [0, T)$. For any $v \in \mathcal{V}$, define a pure strategy $\zeta \in \mathcal{Z}(t_0)$ by setting 
    \begin{equation}
        \zeta(\alpha)(t) = v \quad \forall \alpha \in \mathcal{A}(t_0),~t \in [t_0, T].
    \end{equation}
    Since $V^{-*}$ satisfies the subdynamic programming principle of Lemma~\ref{lemma:subdynamic}, these exist $\epsilon>0$, $h>0$, and a pure strategy $\eta_h \in \mathcal{H}(t_0)$ such that
    \begin{equation}
    \begin{aligned}
        V^{-*}(t_0, x_0, \hat{p}) \leq \min & \left\{\rho\left(\mathcal{X}^{t_0,x_0,\eta_h,\zeta}_{t_0+h}\right) U^{-*}\left(t_0+h,X^{t_0, x_0, \eta_h, \zeta}_{t_0+h},\hat{p}/\rho\left(\mathcal{X}^{t_0,x_0,\eta_h,\zeta}_{t_0+h}\right)\right), \right.\\
        & ~~ \left. V^{-*}\left(t_0+h, X^{t_0, x_0, \eta_h, \zeta}_{t_0+h}, \hat{p}\right) \right\} + \epsilon h,
    \end{aligned}
    \end{equation}
    or equivalently
    \begin{equation}
    \rho\left(\mathcal{X}^{t_0,x_0,\eta_h,\zeta}_{t_0+h}\right) U^{-*}\left(t_0+h,X^{t_0, x_0, \eta_h, \zeta}_{t_0+h},\hat{p}/\rho\left(\mathcal{X}^{t_0,x_0,\eta_h,\zeta}_{t_0+h}\right)\right) - V^{-*}\left(t_0, x_0, \hat{p}\right) + \epsilon h \geq 0,
    \label{eq:dualsub1}
    \end{equation}
    and
    \begin{equation}
    V^{-*}\left(t_0+h, X^{t_0, x_0, \eta_h, \zeta}_{t_0+h}, \hat{p}\right) - V^{-*}(t_0, x_0, \hat{p}) + \epsilon h \geq 0.
    \label{eq:dualsub2}
    \end{equation}

    Set the open-loop control $\alpha_h(s) := \eta_h(v)(s)$ and the trajectory $x_h(s) = X^{t_0, x_0, \eta_h, \beta}_s = X^{t_0, x_0, \alpha_h, v}_s$. Then
    \begin{equation}
        x_h(t_0 + h) = x_0 + \int_{t_0}^{t_0 + h} f(x_h(s), \alpha_h(s), v) ds = x_0 + \int_{t_0}^{t_0 + h} f(x_0, \alpha_h(s), v) ds + h\epsilon(h),
    \end{equation}
    where $\epsilon(h) \rightarrow 0$ as $h \rightarrow 0^+$. 
    For Eq.~\eqref{eq:dualsub1}, let $\epsilon \rightarrow 0^+$ and $h \rightarrow 0^+$, we have
    \begin{equation}
        \rho(x_0)U^{-*}(t_0,x_0,\hat{p}/ \rho(x_0)) - \phi(t_0, x_0) \geq 0.
    \end{equation}
    
    For Eq.~\eqref{eq:dualsub2}, we have
    \begin{equation}
    \begin{aligned}
        0 & \leq V^{-*}\left(t_0+h, X^{t_0, x_0, \eta_h, \zeta}_{t_0+h}, \hat{p}\right) - V^{-*}(t_0, x_0, \hat{p}) + \epsilon h \\  
        & \leq \phi\left(t_0+h, x_0 + \int_{t_0}^{t_0 + h} f(x_0, \alpha_h(s), v) ds + h\epsilon(h), z\right) - \phi(t_0, x_0) + \epsilon h \\
        & \leq h \phi_t(t_0, x_0) + \int_{t_0}^{t_0 + h} D\phi(t_0, x_0)^T f(x_0, \alpha_h(s), v)ds + h\epsilon_1(h) + \epsilon h \\
        & \leq h \phi_t(t_0, x_0) + h \sup_{u \in \mathcal{U}} D\phi(t_0, x_0)^T f(x_0, u, v) + h\epsilon_1(h) + \epsilon h,
    \end{aligned}
    \end{equation}
    where $\epsilon_1(h) \rightarrow 0$ as $h \rightarrow 0^+$. Dividing the last inequality by $h$, letting $h \rightarrow 0^+$, $\epsilon \rightarrow 0^+$, and taking the infimum over $v \in \mathcal{V}$ to have
    \begin{equation}
        \phi_t(t_0, x_0) + \inf_{v \in \mathcal{V}} \sup_{u \in \mathcal{U}} D\phi(t_0, x_0)^T f(x_0, u, v) \geq 0.
    \end{equation}
    Now notice that by definition
    \begin{equation}
        H^*(x, D\phi) = -H(x, -D\phi) =  \inf_{v \in \mathcal{V}} \sup_{u \in \mathcal{U}} f(x, u, v)^T D\phi.
    \end{equation}
    Hence
    \begin{equation}
        \phi_t(t_0, x_0) + H^*(x_0, D\phi(t_0, x_0)) \geq 0.
    \end{equation}
    
    Hence
    \begin{equation}
        \min\left\{\rho(x)U^{-*}(t,x,\hat{p}/\rho(x)) - \phi,  ~\phi_t + H^*(x, D\phi) \right\} \geq 0 \text{ in } [0,T] \times \mathbb{R}^{d_x}.
    \end{equation}
\end{proof}

Using the same proof techniques we can derive the subdynamic principle for $V^+$ (Lemma~\ref{lemma:subdynamic2}) and prove that $V^+$ is a viscosity subsolution to the primal HJ (Lemma~\ref{lemma:primalsubsolution}):

\begin{lemma}
    \label{lemma:subdynamic2}
    (subdynamic principle for $V^{+}$). We have for any $(t_0, x_0, p) \in [0,T) \times \mathbb{R}^{d_x} \times \Delta(I)$ and any $t_1 \in (t_0, T]$
    \begin{equation}
    \begin{aligned}
        V^{+}(t_0,x_0,p) \leq \inf_{\eta \in \mathcal{H}(t_0)} \sup_{\zeta \in \mathcal{Z}(t_0)} \max & \left\{ \rho(\mathcal{X}^{t_0,x_0,\eta,\zeta}_{t_1}) U^{+}\left(t_1,X^{t_0,x_0,\eta,\zeta}_{t_1},p\right), \right. \\
        & ~~\left.V^{+}\left(t_1,X^{t_0,x_0,\eta,\zeta}_{t_1},p\right)\right\}
    \end{aligned}
        \label{eq:subdynamic2}
    \end{equation}
\end{lemma}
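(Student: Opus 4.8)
The plan is to mirror the proof of Lemma~\ref{lemma:subdynamic} with the roles of the two players exchanged, exploiting that $V^+ = \inf_{(\eta_i)}\sup_{\zeta}\sum_i p_i G_i$ places Player~1 as the outer (information-controlling) minimizer and Player~2 as the inner maximizer. Because we only seek the ``$\leq$'' direction, it suffices to exhibit, for each $\epsilon>0$, a single pure non-revealing strategy for Player~1 on $[t_0,t_1]$ that achieves the right-hand side up to $\epsilon$; the continuation values $U^+(t_1,\cdot,p)$ and $V^+(t_1,\cdot,p)$ then absorb any subsequent belief splitting, which is precisely why a single $\eta\in\mathcal H(t_0)$ (rather than a tuple $(\eta_i)$) and the \emph{unchanged} belief $p$ appear in~\eqref{eq:subdynamic2}.

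First I would reduce the inner maximizing player to pure strategies, establishing $V^+(t_0,x_0,p) = \inf_{(\eta_i)}\sup_{\zeta\in\mathcal Z(t_0)}\sum_i p_i G_i(t_0,x_0,\eta_i,\zeta)$. This is the analogue of the ``Player~1 plays a pure strategy'' step, but simpler: for fixed $(\eta_i)$ the payoff $\sum_i p_i G_i$ is an average over $\Omega_\zeta$, so its supremum over behavioral $\zeta$ cannot exceed the supremum over the pure realizations $\zeta(\omega_2,\cdot)\in\mathcal Z(t_0)$, while the reverse inequality is immediate from $\mathcal Z(t_0)\subset\mathcal Z_r(t_0)$. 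Here linearity in $p$ replaces the convexity of $\max_i$ that was invoked in the dual reformulation.

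Next I would run the localization construction of Lemma~\ref{lemma:subdynamic} but for Player~1. Fix $\epsilon>0$, choose a pure $\eta^0\in\mathcal H(t_0)$ that is $\epsilon$-optimal for the first-stage value $V^+_1(t_0,t_1,x_0,p):=\inf_\eta\sup_\zeta V^+(t_1,X^{t_0,x_0,\eta,\zeta}_{t_1},p)$, and for each reachable $x_1$ select an $\epsilon$-optimal continuation $\eta^{x_1}\in\mathcal H_r(t_1)$. Using Lipschitz continuity of $x_1\mapsto V^+(t_1,x_1,p)$ on $\mathcal Q(t_1)\setminus\partial\mathcal Q(t_1)$ (Lemma~\ref{lemma:regularity}), boundedness of $f$ (Assumption~2), and a finite Borel partition $(E_l)$ of the reachable ball with representatives $\eta^l$ and a common delay $\tau\le\sigma$, I would concatenate $\eta^0$ on $[t_0,t_1)$ with the $\eta^l$ on $[t_1,T]$ into one $\eta\in\mathcal H(t_0)$ that is $(3\epsilon)$-optimal, exactly as $\zeta$ was built in the dual proof.

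The feasibility bookkeeping is where the genuine work lies, and I expect it to be the main obstacle. Writing $\rho_0:=\rho(\mathcal X^{t_0,x_0,\eta,\zeta^0}_{t_1})$ and $\rho_1^l$ for the continuation feasibility on cell $E_l$, the whole-trajectory indicator factors as $\rho=\max\{\rho_0,\rho_1^l\}$ on $O^l$, since both take values in $\{1,+\infty\}$ and infeasibility on either sub-interval forces $+\infty$. Pushing this $\max$ through the expectation and through the linear combination $\sum_i p_i(\cdot)$, and using that Player~2 maximizes (so she picks whichever branch pays more), yields the two branches $\rho_0\,U^+(t_1,x_1,p)$ and $V^+(t_1,x_1,p)$ combined by $\max$; crucially the belief is \emph{not} rescaled, because $\sum_i p_i\,\rho\,\mathbb E[g_i]=\rho\sum_i p_i\,\mathbb E[g_i]=\rho\,U^+(t_1,x_1,p)$, in contrast to the $\hat p/\rho$ rescaling forced by conjugation in the dual. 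I would then carry out the case analysis on the values $\{1,+\infty\}$ of $\rho_0$ and of $\inf_{\eta'}\sum_l\rho_1^l\textbf{1}_{O^l}$ exactly as in the $A$-versus-$B$ dichotomy of Lemma~\ref{lemma:subdynamic}, and let $\epsilon\to0^+$ to obtain~\eqref{eq:subdynamic2}. The delicate points, as before, are interchanging $\sup_\zeta$ with the two-branch $\max$ and verifying that the boundary $\partial\mathcal Q(t_1)$, where the continuation value loses Lipschitz regularity, contributes negligibly.
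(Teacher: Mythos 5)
Your plan is essentially the paper's own argument: the paper omits this proof entirely, stating only that it follows ``using the same proof techniques'' as the subdynamic principle for $V^{-*}$ (Lemma~\ref{lemma:subdynamic}), and your role-swapped mirroring --- reducing the inner maximizer to pure strategies by linearity, the $\epsilon$-optimal concatenation over a Borel partition, the $\max\{\rho_0,\rho_1^l\}$ factorization, and the observation that linearity in $p$ (rather than conjugation in $\hat p$) is what removes the $\hat p/\rho$ rescaling --- is exactly that intended construction. The only slips are cosmetic: the continuation objects at $t_1$ should be type-indexed tuples $(\eta_i^{x_1})\in(\mathcal H_r(t_1))^I$ rather than a single $\eta^{x_1}$, and interchanging $\sup_\zeta$ with the two-branch $\max$ is an exact identity rather than a delicate step.
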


\begin{lemma}
\label{lemma:primalsubsolution}
    ($V^{+}$ is a subsolution of HJ). For any $p \in \Delta(I)$, the map $(t, x) \rightarrow V^{+}(t, x, p)$ is a viscosity subsolution of the primal Hamilton-Jacobi equation:
    \begin{equation}
        \min\left\{\rho(x)U^{+}(t,x,p) - w,  ~w_t + H(x, Dw) \right\} = 0 \text{ in } [0,T] \times \mathbb{R}^{d_x},
    \end{equation}
    where $H$ is defined by Eq.~\eqref{eq:hamiltonian}.
\end{lemma}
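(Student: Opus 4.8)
The plan is to mirror the proof of Lemma~\ref{lemma:dualsubsolution} almost verbatim, replacing the dual subdynamic principle (Lemma~\ref{lemma:subdynamic}) with the primal one (Lemma~\ref{lemma:subdynamic2}) and interchanging the two players' roles. Since the statement now concerns $V^{+}$, for which Player 1 is the minimizer holding the outer $\inf_{\eta}$, the constant control will be assigned to Player 1 rather than to Player 2.

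First I would fix $p \in \Delta(I)$ and a smooth test function $\phi$ with $\phi(t,x) \ge V^{+}(t,x,p)$ on $[0,T]\times\mathbb{R}^{d_x}$ and equality at some $(t_0,x_0)$ with $t_0 \in [0,T)$. By Lemma~\ref{lemma:regularity} I may restrict to $x_0 \in \mathcal{Q}(t_0)$, since on $\bar{\mathcal{Q}}(t_0)$ we have $V^{+}=+\infty$ and no finite test function can touch from above. For an arbitrary fixed $u \in \mathcal{U}$, define the constant pure strategy $\eta \in \mathcal{H}(t_0)$ by $\eta(\delta)(s) \equiv u$. Feeding this $\eta$ into the outer $\inf$ of Lemma~\ref{lemma:subdynamic2} and choosing an $\epsilon h$-optimal response $\zeta_h \in \mathcal{Z}(t_0)$ for the inner $\sup_{\zeta}$ yields, for small $h>0$,
\begin{equation*}
V^{+}(t_0,x_0,p) \le \max\left\{\rho\!\left(\mathcal{X}^{t_0,x_0,\eta,\zeta_h}_{t_0+h}\right) U^{+}\!\left(t_0+h,x_{t_0+h},p\right),\; V^{+}\!\left(t_0+h,x_{t_0+h},p\right)\right\} + \epsilon h,
\end{equation*}
where $x_{t_0+h}=X^{t_0,x_0,\eta,\zeta_h}_{t_0+h}$. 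I would then extract the two viscosity inequalities. For the Hamiltonian branch, write $x_{t_0+h}=x_0+\int_{t_0}^{t_0+h} f(x_0,u,v_h(s))\,ds + h\epsilon(h)$ with $\epsilon(h)\to 0$, bound the right-hand side above by $\phi(t_0+h,x_{t_0+h})$ using $\phi\ge V^{+}\ge \rho U^{+}$ so that the $\max$ collapses onto the $V^{+}$ branch along the feasible segment, Taylor-expand $\phi$, divide by $h$, and send $h\to 0^{+}$, $\epsilon\to 0^{+}$; taking the supremum over Player 2's realized control and then the infimum over the arbitrary $u$, Isaacs' condition~\eqref{eq:hamiltonian} gives $\phi_t(t_0,x_0) + H(x_0,D\phi(t_0,x_0)) \ge 0$. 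For the obstacle branch, I would isolate the $\rho U^{+}$ term and pass to the limit (with $\rho(\mathcal{X}^{\cdot}_{t_0+h})\to \rho(x_0)$ and $x_{t_0+h}\to x_0$) to obtain $\rho(x_0)U^{+}(t_0,x_0,p) - \phi(t_0,x_0)\ge 0$. Together these give $\min\{\rho U^{+}-\phi,\; \phi_t + H\}\ge 0$ at $(t_0,x_0)$, and since $(t_0,x_0)$ and $p$ are arbitrary this establishes the subsolution property.

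I expect the main obstacle to be precisely the appearance of $\max$ where the dual proof had $\min$. In Lemma~\ref{lemma:dualsubsolution} the bound $V^{-*}\le\min\{A,B\}+\epsilon h$ splits immediately into the two independent inequalities $V^{-*}\le A+\epsilon h$ and $V^{-*}\le B+\epsilon h$, each feeding one branch of the HJ operator; with $\max$ the bound $V^{+}\le\max\{A,B\}+\epsilon h$ does not split for free. The resolution I would use is the ordering $\rho(x)U^{+}(t,x,p)\le V^{+}(t,x,p)$ on the feasible set (the constrained value dominates the unconstrained one) together with the feasibility dichotomy $\rho_0\in\{1,+\infty\}$ already exploited in the proof of Lemma~\ref{lemma:subdynamic2}: on the branch where the short trajectory stays feasible one has $A\le B$, so $\max\{A,B\}=B$ is controlled by $\phi$ for the Hamiltonian estimate, while the obstacle inequality is recovered from the $A$-term in the limit. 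Reconciling these two branches consistently, and handling the boundary $\partial\mathcal{Q}(t_0)$ where $V^{+}$ may be discontinuous, is where the bulk of the technical care will go, exactly as in the scenario analysis of Lemma~\ref{lemma:subdynamic}; the remaining Taylor-expansion and Isaacs steps are routine and carry over unchanged from Lemma~\ref{lemma:dualsubsolution}.
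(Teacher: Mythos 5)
Your overall plan---mirror the proof of Theorem~\ref{lemma:dualsubsolution} with the players' roles swapped, feeding a constant control $u$ into the outer $\inf_{\eta}$ of Lemma~\ref{lemma:subdynamic2}, choosing a near-optimal $\zeta_h$, and Taylor-expanding---is exactly what the paper intends: it omits this proof entirely and only remarks that the same techniques apply. The Hamiltonian branch of your argument goes through: on the feasible branch $\rho_0=1$ and $U^{+}\le V^{+}$ (the constraint can only hurt the minimizer when the $g_i$ are nonnegative, as the paper's own arguments assume), so $\max\{A,B\}=B=V^{+}(t_0+h,x_h,p)\le\phi(t_0+h,x_h)$, and the expansion, the supremum over Player 2's realized control, the infimum over $u$, and Isaacs' condition give $\phi_t(t_0,x_0)+H(x_0,D\phi(t_0,x_0))\ge 0$ just as in the dual case.

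The gap is in the obstacle branch, and it is precisely the $\max$-versus-$\min$ issue you flag---but your proposed resolution does not close it. From $V^{+}\le\max\{A,B\}+\epsilon h$ one cannot extract $V^{+}\le A+\epsilon h$; and once you have argued $A\le B$ on the feasible branch, the inequality $V^{+}\le\max\{A,B\}=B$ carries no information about $A$ at all, so ``the obstacle inequality is recovered from the $A$-term in the limit'' is unsupported. Worse, the target inequality $\rho(x_0)U^{+}(t_0,x_0,p)-\phi(t_0,x_0)\ge 0$ at a touching point reads $U^{+}(t_0,x_0,p)\ge V^{+}(t_0,x_0,p)$ on the feasible set, which combined with your own ordering $U^{+}\le V^{+}$ would force $U^{+}=V^{+}$ there---false in general, since the constraint genuinely changes the value. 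What the $\max$-type subdynamic principle actually yields (passing to a subsequence in $h$ along whichever branch attains the maximum infinitely often) is $\max\{\rho U^{+}-\phi,\ \phi_t+H\}\ge 0$, i.e., a subsolution of the variational inequality with $\max$ in place of $\min$. To prove the lemma as stated you would need either a separate argument for the obstacle inequality that does not come from Lemma~\ref{lemma:subdynamic2}, or a corrected form of the equation; as written, your proof does not establish the claimed $\min\{\cdot,\cdot\}\ge 0$.
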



\section{Proof of Theorem \ref{thm:value_convergence}}
\label{sec:value_convergence}
\begin{proof}
The following proof follows that of \cite{cardaliaguet2009numerical}, with the additional treatment of the state constraint. Note that in numerical approximation, we use $K>0$ to replace infinite values so that we can introduce bounded test functions. The HJ equations thus become
\begin{equation*}
    \Biggl\{\begin{array}{ll}
        w_t + H(x, Dw) = 0 & (t,x) \in \Omega  \\
        \min\{K - w, w_t + H(x, Dw)\} = 0 & (t,x) \in \bar{\Omega}, 
    \end{array}
\end{equation*}
where $\Omega$ (resp. $\bar{\Omega}$) contains all $(t,x)$ such that $V(t,x) < K$ (resp. $V(t,x) = K$).

Consider $w$ be any cluster point in the topology of uniform convergence on compact subsets of $[0, T] \times \mathbb{R}^{d_x} \times \Delta (I)$ of $V_\tau$ as $\tau \rightarrow 0^+$. $w$ is convex with respect to $p$ and satisfies:
\begin{equation}\label{eq:bc}
    w(T, x, p) = \sum_{i=1}^I p_ig_i(x),
\end{equation}
for any $(T,x, p) \in \Omega \times \Delta(I)$ and $w(T,x,p) = K$ for any $(T,x, p) \in \bar{\Omega}\times \Delta(I)$.
Let $\phi$ be a test function such that $w(\cdot, \cdot, p) - \phi$ has a strict local maximum at $(t_0, x_0)$, and $w(t_0, x_0, p) = \phi(t_0, x_0)$. Then there are $(t_k, x_k)$ converging to $(t_0, x_0)$ such that $V_\tau (\cdot, \cdot, p) - \phi$ has a local maximum at $(t_k, x_k)$.


First consider $(t_k,x_k) \in \Omega$. For any $x \in \mathbb{R}^{d_x}$,
\begin{equation*}
    V_\tau (t_{k+1}, x, p) - \phi(t_{k+1}, x) \le V_\tau (t_k, x_k, p) - \phi(t_k, x_k)
\end{equation*}
Then, rearranging (\ref{eq:backup_cons}) to have
\begin{align*}
    0 &= \text{Vex}_p\left(\min_u \max_v V_\tau(t_{k+1}, x_k + \tau f(x_k, u, v), p)\right) - V_\tau (t_k, x_k, p)\\
    &\le \min_u \max_v V_\tau(t_{k+1}, x_k + \tau f(x_k, u, v), p) - V_\tau (t_k, x_k, p)\\
    &\le \min_u \max_v \phi(t_{k+1}, x_k + \tau f(x_k, u, v)) - \phi (t_k, x_k)
\end{align*}
Then, from standard arguments (see \cite{cardaliaguet2009numerical} and references therein)
\begin{equation}
    \frac{\partial \phi}{\partial t}(t_0, x_0) + \min_{u \in U} \max_{v \in V} f(x_0, u, v) \frac{\partial \phi}{\partial x}(t_0, x_0) \ge 0.
\end{equation}

Now consider $(t_k, x_k) \in \bar{\Omega}$, in which case $\phi(t_k,x_k) = V_{\tau}(t_k, x_k, p) = K$. When $\tau \rightarrow 0^+$,
\begin{equation*}
    \min_u \max_v \phi(t_{k+1}, x_k + \tau f(x_k, u, v)) = K,
\end{equation*}
hence
\begin{equation*}
     \frac{\partial \phi}{\partial t}(t_0, x_0) + \min_{u \in U} \max_{v \in V} f(x_0, u, v) \frac{\partial \phi}{\partial x}(t_0, x_0) = 0.
\end{equation*}

Then we have
\begin{equation*}
    \min\left\{K - \phi,  ~\phi_t + H(x, D\phi) \right\} = 0 \text{ in } \bar{\Omega}.
\end{equation*}

Hence, $w$ is a dual subsolution of the HJI. 
We can follow the same technique to show that $w$ is a supersolution in the dual sense, and therefore $w = V$.
\end{proof}


\section{Bellman Backup of the Conjugate Value with the Presence of Instantaneous Loss}\label{sec:modify}

Here we extend the subdynamic principle for $V^{-*}$ (Lemma~\ref{lemma:subdynamic}) when instantaneous loss is present. Since we will use letter $l$ to index possible states reached at $t_1$ from $t_0$, we denote the instantaneous loss for pure strategies $(\eta,\zeta)$ at time $s$ as $L(\eta,\zeta,s)$ instead. For conciseness, let us consider $t_0 \in [0,T]$ and $(x_0,\hat{p}) \in \mathcal{Q}(t_0) \times \mathbb{R}^{I}$, i.e., states for which Player 1 can play to avoid state constraint violation. 
To recap, let $\eta$ be a pure strategy of Player 1; let $\zeta$ be such that $\zeta = \zeta^0$ in $[t_0,t_1]$ where $\zeta^0$ is pure and $\epsilon$-optimal for $V_1^{-*}(t_0,t_1,x_0,\hat{p}) := := \inf_{\zeta \in \mathcal{Z}(t_0)}\sup_{\eta \in \mathcal{H}(t_0)} V^{-*}\left(t_1,X^{t_0,x_0,\eta,\zeta}_{t_1},\hat{p} - \int_{t_0}^{t_1} L(\eta, \zeta,s)ds \right)$, and $\zeta = \zeta^l$ in $[t_1,T]$ where $\zeta^l$ is mixed and $(2\epsilon)$-optimal for $V^{-*}$ at $(t_1,x_1,\hat{p})$ for any $x_1 \in E_l$.

By definition and using the fact that Player 1 plays a pure strategy in the dual game, we have
\begin{equation}
    V^{-*}(t_0,x_0,\hat{p}) = \inf_{\zeta} \sup_{\eta} \max_i \left\{
    \hat{p}_i - \int_{\omega} 
    \left( g_i(X^{t_0,x_0,\eta,\zeta(\omega,\cdot)}_T) + \int_{t_0}^{T} L(\eta, \zeta(\omega,\cdot), s)ds
    \right) d\textbf{P}(\omega)
    \right\}.
\end{equation}
Here
\begin{equation}
\small
\begin{aligned}
    & \max_i \left\{
    \hat{p}_i - \int_{\omega} 
    \left( g_i(X^{t_0,x_0,\eta,\zeta(\omega,\cdot)}_T) + \int_{t_0}^{T} L(\eta, \zeta(\omega,\cdot),s)ds
    \right) d\textbf{P}(\omega)
    \right\} \\
    = & \max_i \left\{
    \hat{p}_i - \sum_{l} \left(\int_{\omega^l} 
    \left( g_i(X^{t_1,X^{t_0,x_0,\eta,\zeta^0}_{t_1},\eta,\zeta^l(\omega,\cdot)}_T) + \int_{t_0}^{t_1} l(\eta(s), \zeta^0(\omega,\cdot)(s))ds + \int_{t_1}^{T} L(\eta, \zeta^l(\omega,\cdot),s)ds
    \right) d\textbf{P}^l(\omega^l)
    \right) \textbf{1}_{O^l}
    \right\} \\
    \leq & \sum_{l} \sup_{\eta'} \max_i \left\{
    \hat{p}_i -  \left(\int_{\omega^l} 
     g_i(X^{t_1,X^{t_0,x_0,\eta,\zeta^0}_{t_1},\eta',\zeta^l(\omega^l,\cdot)}_T) + \int_{t_0}^{t_1} L(\eta, \zeta^0,s)ds + \int_{t_1}^{T} L(\eta', \zeta^l(\omega^l,\cdot), s)ds
     d\textbf{P}^l(\omega^l)\right)
    \right\}\textbf{1}_{O^l} \\
    \leq & \sum_l \left ( V^{-*}\left(t_1,X^{t_0,x_0,\eta,\zeta^0}_{t_1},\hat{p} - \int_{t_0}^{t_1} L(\eta, \zeta^0,s)ds\right) + 2\epsilon \right) \textbf{1}_{O^l} \\
        & \text{(because $\zeta^l$ is ($2\epsilon$)-optimal for $V^{-*}$ at $(t_1, x_1, \hat{p})$ for any $x_1 \in E_l$.)}\\
        = & V^{-*}\left(t_1,X^{t_0,x_0,\eta,\zeta^0}_{t_1},\hat{p} - \int_{t_0}^{t_1} L(\eta, \zeta^0,s)ds\right) + 2\epsilon \\
        \leq & V^{-*}_1(t_0,t_1,x_0,\hat{p}) + 3\epsilon\\
        & \text{(because $\zeta^0$ is $\epsilon$-optimal for $V^{-*}_1(t_0,t_1,x_0,\hat{p})$.)}\\
\end{aligned}
\end{equation}
Since $\epsilon$ can be arbitrarily small, we have
\begin{equation}
    V^{-*}(t_0,x_0,\hat{p}) \leq \inf_{\zeta \in \mathcal{Z}(t_0)}\sup_{\eta \in \mathcal{H}(t_0)} V^{-*}(t_1,X^{t_0,x_0,\eta,\zeta}_{T},\hat{p} - \int_{t_0}^{t_1} L(\eta, \zeta,s)ds)
\end{equation}

\section{Examples of Zero-Sum Games with One-Sided Information}
\label{sec:example}
Here we discuss two games in detail, namely, the zero-sum beer-quiche game which is extensive-form with sequential actions, and Hexner's game~\cite{hexner1979differential} which is differential and with simultaneous actions. Both games have one-sided information and all analytically solved. We show that the characterization of value proposed by Cardaliaguet~\cite{cardaliaguet2007differential} leads to the true equilibrium behavioral strategies for both games.  

\subsection{Zero-sum beer-quiche game}
\label{sec:beerquiche}
We present a zero-sum variant of the classic beer-quiche game~\footnote{For more information about the original beer-quiche game, please see \hyperlink{https://gametheory101.com/courses/game-theory-101/the-beer-quiche-game/}{https://gametheory101.com/courses/game-theory-101/the-beer-quiche-game/}}, which is an incomplete-information game with a perfect Bayesian equilibrium. 
\paragraph{Game settings.} In this sequential game, Player 1 first chooses to take either quiche (Q) or beer (B), and based on his choice, Player 2 chooses to either defer (d) or bully (b). Player 1 has a probability of $p_T$ to be tough (T) and $p_W = 1-p_T$ to be weak (W). The exact type is unknown to Player 2 but $p = [p_T, p_W]^T$ is common knowledge. The payoffs to be maximized by Player 1 follow Table~\ref{tab:payoff}. For example, if Player 1 is tough and chooses to eat quiche (Q) while Player 2 chooses to bully (b), then Player 1 receives a payoff of 1.

\begin{table}[!ht]
    \caption{Payoff table for a zero-sum beer-quiche game}
    \label{tab:payoff}
    \begin{minipage}{.55\linewidth}
    \caption*{Tough}
    \centering
\begin{tabular}{ccc}
                       & b                      & d                      \\ \cline{2-3} 
\multicolumn{1}{l|}{B} & \multicolumn{1}{l|}{2} & \multicolumn{1}{l|}{1} \\ \cline{2-3} 
\multicolumn{1}{l|}{Q} & \multicolumn{1}{l|}{1} & \multicolumn{1}{l|}{0} \\ \cline{2-3} 
\end{tabular}
    \end{minipage}%
    \begin{minipage}{.17\linewidth}
    \caption*{Weak}
\begin{tabular}{ccc}
                       & b                       & d                      \\ \cline{2-3} 
\multicolumn{1}{l|}{B} & \multicolumn{1}{l|}{-2} & \multicolumn{1}{l|}{0} \\ \cline{2-3} 
\multicolumn{1}{l|}{Q} & \multicolumn{1}{l|}{-1} & \multicolumn{1}{l|}{2} \\ \cline{2-3} 
\end{tabular}
    \end{minipage}
\end{table}

\begin{figure}
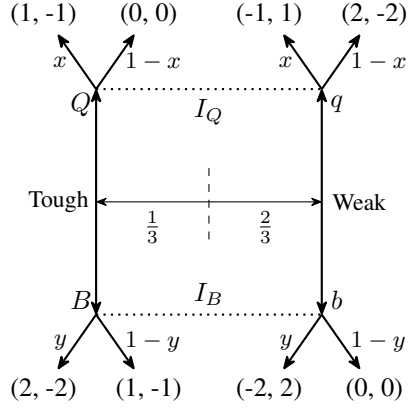

    \centering
    \includestandalone{figures/beerquiche}
    \caption{Zero-Sum Variant of the Beer-Quiche Game}
    \label{fig:beerquiche}
\end{figure}

\paragraph{Perfect Bayesian equilibrium.} The standard approach finds the behavioral strategies of both players for a particular $p$. Consider the extensive form of the game as shown in Fig.~\ref{fig:beerquiche}. Dotted lines represent info sets that Player 2 cannot distinguish. Here, Player 1 has $p_T = \frac{1}{3}$ to be Tough. The behavioral strategies for each player are derived as follows: 

Let $Q, B, q, b$ represent probabilities of Player 1 choosing quiche given he is tough, beer given he is tough, quiche given he is weak, and beer given he is weak, respectively. Assume $x$ and $y$ be the probability of Player 2 bullying Player 1 who chooses quiche and beer, respectively. First, we find the beliefs of Player 2 when Player 1 chooses quiche or beer (info-set $\mathcal{I_Q}$ and $\mathcal{I_B}$, respectively):
\begin{align*}
    \text{if } (Q, q) \neq (0, 0), \quad \mu_2(T|\mathcal{I_Q}) &= \frac{\frac{1}{3} Q}{\frac{1}{3} (Q) + \frac{2}{3} (q)} = \frac{Q}{Q+2q} \quad \text{and}\\
    \mu_2(W|\mathcal{I_Q}) &= \frac{\frac{2}{3}q}{\frac{1}{3}Q + \frac{2}{3}q} = \frac{2q}{Q+2q}\\
    \text{if } (B, b) \neq (0, 0), \quad \mu_2(T|\mathcal{I_B}) &= \frac{B}{B+2b}, \quad \text{and} \quad \mu_2(W|\mathcal{I_B}) = \frac{2b}{B+2b}
\end{align*}
Then, the expected payoffs for bully and defer at $\mathcal{I_Q}$ are:
\begin{align*}
    E_2(\text{bully}|\mathcal{I_Q}) &= \frac{Q}{Q+2q}(-1) + \frac{2q}{Q+2q}(1) = -\frac{Q-2q}{Q+2q}\\
    E_2(\text{defer}|\mathcal{I_Q}) &= \frac{Q}{Q+2q}(0) + \frac{2q}{Q+2q}(-2) = -\frac{4q}{Q+2q}\\
\end{align*}
Given Player 2's strategy at $\mathcal{I_Q}$, his expected payoff can be expressed as:
\begin{align*}
    E_2(\mathcal{I_Q}) &= -\frac{Q-2q}{Q+2q}x - (1-x) \frac{4q}{Q+2q}\\
                       &= \frac{-(Q-6q)x - 4q}{Q+2q} 
\end{align*}
The value of $x$ that maximizes $E_2(\mathcal{I_Q})$ is:
\begin{align*}
    x &= \begin{cases} \text{any} & \text{if} \quad (Q, q) = (0, 0) \\ 
                       1 & \text{if} \quad Q < 6q\\
                       \text{any} & \text{if} \quad Q = 6q\\
                       0 & \text{if} \quad Q > 6q\end{cases}
\end{align*}

Applying the same reasoning to info-set $\mathcal{I_B}$, we find the value of $y$ that maximizes $E_2(\mathcal{I_B})$ as: 
\begin{align*}
    y &= \begin{cases} \text{any} & \text{if} \quad (B, b) = (0, 0) \\ 
                       0 & \text{if} \quad 4b < B\\ 
                       \text{any} & \text{if} \quad 4b = B\\
                       1 & \text{if} \quad 4b > B\end{cases}
\end{align*}
Given Player 2's strategy, the expected payoffs to Player 1 for his strategies are:
$$E_1(Q) = x, \quad E_1(B) = y+1, \quad E_1(q) = 2 - 3x, \quad E_1(b) = -2y$$ 

As a result the expected value for each of P1's type are:
$$E_1(T) = B(1+y) + (1-B)x$$
$$E_1(W) = b(-2y) + (1-b)(2-3x)$$
Assume $B \ge 4b$. Then, 
\begin{align*}
    1-Q = B &\ge 4b = 4(1-q) \\
    \implies 1-Q &\ge 4-4q \\
    \implies 4q &\ge Q + 3 \\
    \implies 6q &> Q
\end{align*}
Hence, $x = 1$. Thus, $E_1(Q) = 1 < E_1(B) = y+1$. As a result, $B=1$, and $Q=0$. 

Assuming $B > 4b$, following the process as above, we reach to a contradiction. Therefore, 
\begin{align*}
    B &= 4b \implies b = \frac{B}{4} = \frac{1}{4}\\
\end{align*}
Then, 
$$\frac{\partial E_1(W)}{\partial b} = 1 - 2y$$
Hence, for $b= 1/4$ to be feasible, we need: $$y = \frac{1}{2}$$


Therefore, we find an equilibrium with $$x = 1, \;y = \frac{1}{2},\; q = \frac{3}{4},\; b = \frac{1}{4},  \text{ and } B = 1, Q=0$$

To summarize, Player 2 always bullies the person who eats quiche and bullies the person drinking beer half the time. The tough guy always drinks beer while the weak guy drinks beer a quarter of the time and eats quiche three-quarters of the time. One can easily check that the $B < 4b$ case also leads to a contradiction, resulting in a unique equilibrium for the game. 

\paragraph{Solution using primal and dual backward induction.} Now we solve the game through backward induction of its primal and dual values (denoted by $V$ and $C$ respectively). Here we introduce discrete-time $t=0,1,2$: Players 1 and 2 make their respective decisions at $t=0$ and $t=1$, and the game ends at $t=2$. We describe the states of the game as the decisions being made up to the corresponding time, e.g., $x=(B,b)$ at $t=2$ means that Player 1 has chosen beer and Player 2 to defer. \textbf{Primal game:} At the terminal time step ($t=2$), based on the payoff table, we have
\begin{equation}
    V(2, x, p) = \left \{ \begin{array}{ll}
        4p_T-2 & \text{if } x = (B, b) \\
        p_T & \text{if } x = (B, d) \\
        2p_T-1 & \text{if } x = (Q, b) \\
        2-2p_T & \text{if } x = (Q, d)
    \end{array}
    \right..
\end{equation}
At the intermediate time step ($t=1$), we have
\begin{equation}
    V(1, x, p) = \min_{v \in \{b, d\}} V(2, (x, v), p).
\end{equation}
We can find the best responses of Player 2 for both actions of Player 1. This leads to
\begin{equation}
    V(1, x, p) =  \left \{ \begin{array}{lll} 
        
        p_T & \text{if } x = B, ~3p_T-2 \geq 0 & (v^* = d) \\
        4p_T-2 & \text{if } x = B, ~3p_T-2 < 0 & (v^* = b) \\
        2-2p_T & \text{if } x = Q, ~4p_T-3 \geq 0 & (v^* = d) \\
        2p_T-1 & \text{if } x = Q, ~4p_T-3 < 0 & (v^* = b)
    \end{array}
    \right..
\end{equation}
Note that since Player 1 does not take an action in this time step, we do not need to take a concave hull of $V(1, x, \cdot)$. 
At the beginning of the game ($t=0$), we have
\begin{equation}
    V(0, \emptyset, p) = \text{Cav}\left( \max_{u \in \{B, Q\}} V(1, u, p)\right).
\end{equation}

\begin{figure}
    \centering
    \includegraphics[width=0.7\linewidth]{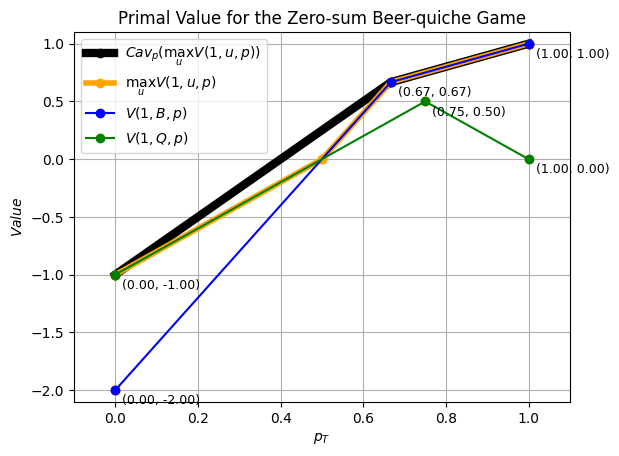}
    \caption{Primal value $V(0,\emptyset,p_T)$ at $t=0$.}
    \label{fig:beer1}
\end{figure}

By taking the concave hull with respect to $p_T$ (see Fig.~\ref{fig:beer1}), we get
\begin{equation}
    V(0, \emptyset, p) =  \left \{ \begin{array}{lll} 
        5p_T/2  - 1 & \text{if } p_T < 2/3 \\
        p_T & \text{if } p_T \geq 2/3 \\
    \end{array}
    \right..
\end{equation}
Note that from Fig.~\ref{fig:beer1}, when $p_T \in [0, 2/3)$, $V(0, \emptyset, p) = \lambda \max_u V(1, u, p^1) + (1-\lambda) \max_u V(1, u, p^2)$, where $p^1 = [0, 1]^T$, $p^2 = [2/3, 1/3]^T$, and $\lambda p^1 + (1-\lambda) p^2 = p$. When $p_T = 1/3$, $\lambda = 1/2$, Player 1's strategy is thus
\begin{equation}
\begin{aligned}
    & \Pr(u=Q|T) = \frac{\lambda p^1[1]}{p[1]} = 0,     & \Pr(u=Q|W) = \frac{\lambda p^1[2]}{p[2]} = 3/4, \\
    & \Pr(u=B|T) = \frac{(1-\lambda) p^2[1]}{p[1]} = 1, 
    & \Pr(u=B|W) = \frac{(1-\lambda) p^2[2]}{p[2]} = 1/4.
\end{aligned}
\end{equation}
This result is consistent with the true perfect Bayesian equilibrium we previously derived. 

\textbf{Dual game:} To solve for Player 2's equilibrium, we first derive the dual variable $\hat{p} \in \partial_{p} V(0, \emptyset, p)$ for $p = [1/3, 2/3]^T$. By definition, $\hat{p}^T p$ defines the concave hull of $V(0, \emptyset, p)$, and therefore we have 
\begin{equation}
\begin{aligned}
    & [1/3, 2/3] \hat{p} = V(0, \emptyset, p) = -1/6 \\
    & [0, 1] \hat{p} = V(0, \emptyset, [0, 1]) = -1.
\end{aligned}
\end{equation}
This leads to $\hat{p} = [3/2, -1]^T$.

At the terminal time, we have
\begin{equation}
    \begin{aligned}
        C(2, x, \hat{p}) & = \min\{\hat{p}[1] - g_T(x), \hat{p}[2] - g_W(x)\} \\
        & = \left \{ \begin{array}{ll}
            \min\{\hat{p}[1] - 2, \hat{p}[2] + 2\} & \text{if } x = (B, b) \\
            \min\{\hat{p}[1] - 1, \hat{p}[2]\} & \text{if } x = (B, d) \\
            \min\{\hat{p}[1] - 1, \hat{p}[2] + 1\} & \text{if } x = (Q, b) \\
            \min\{\hat{p}[1], \hat{p}[2] - 2\} & \text{if } x = (Q, d) \\
        \end{array} \right.
    \end{aligned}
\end{equation}

At $t=1$, we have
\begin{equation}
        C(1, u, \hat{p}) = \text{Cav}_{\hat{p}}\left ( \max_{v} C(2, (u, v), \hat{p})\right).
\end{equation}
When $u = B$, the conjugate value is a concave hull of a piece-wise linear function:
\begin{equation}
\begin{aligned}
        C(1, B, \hat{p}) & = \text{Cav}_{\hat{p}}\left (
        \left \{ \begin{array}{lll}
            \hat{p}[1] - 1 & \text{if } \hat{p}[2] \geq \hat{p}[1] - 1 & (v^* = d) \\
            \hat{p}[2] & \text{if } \hat{p}[2] \in [\hat{p}[1] - 2, \hat{p}[1] - 1) & (v^* = b) \\
            \hat{p}[1] - 2 & \text{if } \hat{p}[2] \in [\hat{p}[1] - 4, \hat{p}[1] - 2) & (v^* = d)  \\
            \hat{p}[2] + 2 & \text{if } \hat{p}[2] < \hat{p}[1] - 4 & (v^* = b)  \\
        \end{array} \right.
        \right) \\
        & = \left \{ \begin{array}{lll}
            \hat{p}[1] - 1 & \text{if } \hat{p}[2] \geq \hat{p}[1] - 1 & (v^* = d) \\
            2/3 \hat{p}[1] + 1/3 \hat{p}[2] - 2/3 & \text{if } \hat{p}[2] \in [\hat{p}[1] - 4, \hat{p}[1] - 1) & (\text{mixed strategy})\\
            \hat{p}[2] + 2 & \text{if } \hat{p}[2] < \hat{p}[1] - 4 & (v* = b)  \\
        \end{array} \right.
\end{aligned}
\end{equation}

\begin{figure}
    \centering
    \includegraphics[width=0.7\linewidth]{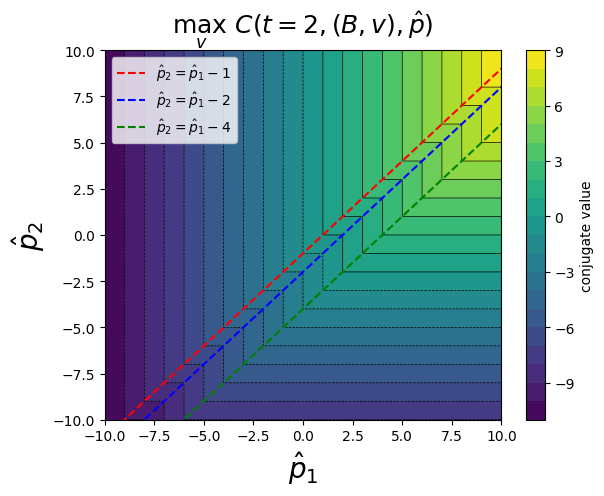}
    \caption{Conjugate value $\max_v ~C(2,B,\hat{p})$ at $t=2$.}
    \label{fig:beer2}
\end{figure}

Fig.~\ref{fig:beer2} visualizes $C(1, B, \hat{p})$. For $\hat{p} = [3/2, -1]^T$ which satisfies $\hat{p}[2] \in [\hat{p}[1] - 4, \hat{p}[1] - 1)$, Player 2 follows a mixed strategy determined based on $\{\lambda_1, \lambda_2, \lambda_3\} \in \Delta(3)$ and $\hat{p}^j \in \mathbb{R}^2$ for $j = 1, 2, 3$ such that
\begin{enumerate}[label=(\roman*)]
    \item At least one of $\hat{p}^j$ for $j = 1, 2, 3$ should satisfy $\hat{p}[2] = \hat{p}[1] - 1$ (denoted as line 1) and another $\hat{p}[2] = \hat{p}[1] - 4$ (denoted as line 2). The last could be on either line 1 or 2. These conditions are necessary for $C(1, B, \hat{p})$ to be a concave hull:
    \begin{equation}
            C(1, B, \hat{p}) = \sum_{j=1}^3 \lambda_j \max_v C(2, (B,v), \hat{p}^j).
    \end{equation}    
    Without loss of generality, we will set $\hat{p}^1$ on line 1 and both $\hat{p}^2$ and $\hat{p}^3$ on line 2;
    \item $\sum_{j=1}^3 \lambda_j \hat{p}^j = \hat{p}$.
\end{enumerate}
These conditions leads to $\lambda_1 = 1/2$ and $\lambda_2 + \lambda_3 = 1/2$. Therefore Player 2 chooses to defer and bully with equal chance when Player 1 takes beer. 

When $u = Q$, we similarly have
\begin{equation}
    C(1, Q, \hat{p}) = \left \{ \begin{array}{lll}
            \hat{p}[1] & \text{if } \hat{p}[2] \geq \hat{p}[1] + 2 & (v^* = d) \\
            ... & \text{if } \hat{p}[2] \in [\hat{p}[1] - 2, \hat{p}[1] + 2) & (\text{mixed strategy})\\
            \hat{p}[2] + 1 & \text{if } \hat{p}[2] < \hat{p}[1] - 2 & (v* = b)  \\
        \end{array} \right.
\end{equation}
We omitted the derivation of the concave hull when $\hat{p}[2] \in [\hat{p}[1] - 2, \hat{p}[1] + 2)$ because for $\hat{p} = [3/2, -1]^T$, $C(1, Q, \hat{p}) = \hat{p}[2] + 1 = 0$ while $v^* = b$, i.e. if Player 1 takes quiche, Player 2 chooses to bully with certainty.

The value and its conjugate provide behavioral strategies for Player 1 (informed) and Player 2 (non-informed), respectively, for arbitrary initial belief $p$. Moreover, the convexity of the value reveals subsets of $p$ where Player 1 should use a mixed strategy that manipulates the belief in order to improve its value. Similarly, the convexity of the conjugate value reveals subsets of dual variables $\hat{p}$ where Player 2 should use a mixed strategy to mitigate risks due to its uncertainty about Player 1.

\subsection{Hexner's game}
\label{sec:hexner}

Here we discuss the solution to Hexner's game using Cardaliaguet's method based on the reformulation proposed by Hexner. To recap, the payoff to be minimized by Player 1 is
\begin{equation}
    J(t,\tilde{\theta}_1, \tilde{\theta}_2) = \mathbb{E}_{\theta}\left[\int_{\tau = t}^T (\tilde{\theta}_1(\tau) - \theta)^2 d_1(\tau) - (\tilde{\theta}_2(\tau) - \theta)^2 d_2(\tau) d\tau\right],
\end{equation}
where $d_1$, $d_2$, $p_{\theta}$ are common knowledge; $\theta$ is only known to Player 1; the scalar $\tilde{\theta}_1$ (resp. $\tilde{\theta}_2$) is Player 1's (resp. Player 2's) strategy. We consider two player types $\theta \in \{-1, 1\}$ and therefore $p_{\theta} \in \Delta (2)$. Since the reformulation contains no system state, the strategies are functions of only time. Hexner's solution is as follows: 
\begin{align}
    \tilde{\theta}_1(s) = \tilde{\theta}_2(s) = 0 \quad \forall s\in [0, t_r] \\  
    \tilde{\theta}_1(s) = \tilde{\theta}_2(s) = \theta \quad \forall s\in (t_r, T],
\end{align}
where 
\begin{equation}\label{eq:tr}
    t_r = \argmin_t \int_0^t (d_1(s)-d_2(s))ds,
\end{equation}
and $(d_1,d_2)$ are defined in Eq.~\eqref{eq:d}. 

We will need the following preparation before introducing Cardaliaguet's solution. First, introduce time stamps $[T_k]_{k=1}^{2r}$ as roots of the time-dependent function $d_1-d_2$, with $T_0 = 0$, $T_{2q+1} = t_r$, and $T_{2r+1} = T$. Without loss of generality, we assume that:
\begin{align}
    d_1 - d_2 < 0 \quad \forall t\in (T_{2k}, T_{2k+1})~\forall k=0,...,r, \\
    d_1 - d_2 \geq 0 \quad \forall t\in [T_{2k-1}, T_{2k}]~\forall k=1,...,r.
\end{align}
We also introduce $D_k := \int_{T_k}^{T_{k+1}} (d_1-d_2)ds$ and  
\begin{equation}
    \tilde{D}_k = \left \{ \begin{array}{ll}
        \tilde{D}_{k+1} + D_k & \text{if } \tilde{D}_{k+1} + D_k < 0 \\
        0 & \text{otherwise}
    \end{array}\right.,
\end{equation}
with $\tilde{D}_{2r+1} = 0$. 

\begin{lemma}
    (Properties of $D_k$ and $\tilde{D}_k$) The following properties will be useful: 
\begin{enumerate}
    \item $\int_{k}^{2q+1} (d_1-d_2)ds = \sum_{k}^{2q} D_k < 0$, $\forall k=0,...,2q$;
    \item $\int_{2q+1}^k (d_1-d_2)ds = \sum_{2q+1}^{k-1} D_k > 0$, $\forall k=2q+2,...,2r+1$;
    \item $\tilde{D}_{2q+2} + D_{2q+1} > 0$;
    \item $\tilde{D}_k < 0, ~ \forall k<2q+1$.
\end{enumerate}
\end{lemma}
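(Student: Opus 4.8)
The plan is to reduce all four properties to a single structural fact together with a closed form for the clamped backward recursion $\tilde{D}_k$. The structural fact is that $t_r = T_{2q+1}$ is the (strict) global minimizer of the antiderivative $F(t) := \int_0^t (d_1(s)-d_2(s))\,ds$ over the node set $\{T_k\}$, which holds by the very definition of $t_r$ as $\argmin_t F(t)$. The unifying observation is the telescoping identity $\sum_{j=k}^{m} D_j = F(T_{m+1}) - F(T_k)$, so that every quantity appearing in the statement is a difference of values of $F$ at the nodes.

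First I would record the monotonicity structure of $F$. By the assumed sign conditions, $F$ is strictly decreasing on each open interval $(T_{2k},T_{2k+1})$ and nondecreasing on each $[T_{2k-1},T_{2k}]$. Since $t_r=T_{2q+1}$ is the global minimizer, $F(T_{2q+1}) \le F(T_k)$ for every $k$. Properties 1 and 2 are precisely the strict statements $F(T_{2q+1}) < F(T_k)$ for $k\le 2q$ and for $k\ge 2q+2$; the non-strict versions are immediate from the global-minimum property, while strictness follows from the strict sign of $d_1-d_2$ on the open intervals (for an even index $k=2j$ one has $F(T_{2j})>F(T_{2j+1})\ge F(T_{2q+1})$ directly) and, for the odd nodes, from $t_r$ being the unique minimizer. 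The extreme cases $D_{2q}<0$ and $D_{2q+1}>0$ emerge as the single-interval instances.

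Next I would establish a closed form for $\tilde{D}_k$. Rewriting the recursion as $\tilde{D}_k = \min\{0,\; D_k + \tilde{D}_{k+1}\}$ with $\tilde{D}_{2r+1}=0$, a backward induction on $k$ (using telescoping at each step) gives
\[
\tilde{D}_k = \min\Bigl\{0,\ \min_{k \le m \le 2r}\sum_{j=k}^{m} D_j\Bigr\} = \Bigl(\min_{k \le n \le 2r+1} F(T_n)\Bigr) - F(T_k),
\]
where including the index $n=k$ accounts for the clamp at $0$; thus $\tilde{D}_k$ is the shortfall of $F(T_k)$ below the minimum of $F$ over all \emph{later} nodes. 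Properties 3 and 4 then follow at once: for Property 4 with $k\le 2q$, the node $T_{2q+1}$ lies in $\{k,\dots,2r+1\}$ and Property 1 gives $F(T_{2q+1})<F(T_k)$, so $\tilde{D}_k<0$; and for Property 3 the closed form yields $\tilde{D}_{2q+2}+D_{2q+1} = \bigl(\min_{2q+2 \le n \le 2r+1} F(T_n)\bigr) - F(T_{2q+1})$, which is strictly positive by Property 2.

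The main obstacle I anticipate is pinning down strictness in Properties 1 and 2: the $\argmin$ property alone yields only $\le$, and ruling out ties among the local minima of $F$ requires invoking that $t_r$ is the \emph{unique} global minimizer (generically true and consistent with the WLOG sign assumptions). Once strictness is secured, the induction for $\tilde{D}_k$ and the deductions of Properties 3 and 4 are routine.
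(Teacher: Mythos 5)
Your proposal is correct, and it reaches properties 3 and 4 by a genuinely different route than the paper. The paper dismisses properties 1 and 2 as ``direct from the definition,'' proves property 3 by contradiction --- assuming $\tilde{D}_{2q+2}+D_{2q+1}\le 0$ and unrolling the clamped recursion \emph{forward} to force $\tilde{D}_{2r+1}<0$, contradicting the terminal condition $\tilde{D}_{2r+1}=0$ --- and proves property 4 by an induction anchored at $\tilde{D}_{2q+1}=0$ and $D_{2q}<0$, of which only the first step ($k=2q$) is actually written out. You instead establish once the running-minimum closed form $\tilde{D}_k=\bigl(\min_{k\le n\le 2r+1}F(T_n)\bigr)-F(T_k)$ (which I verified: the backward induction $\tilde{D}_k=\min\{0,D_k+\tilde{D}_{k+1}\}$ does telescope exactly as you claim), and then properties 3 and 4 become one-line consequences of properties 1 and 2, with no contradiction argument and with property 4 covered uniformly for all $k<2q+1$ rather than just $k=2q$. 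The closed form also recovers the paper's intermediate facts $\tilde{D}_{2q+1}=0$ and $\tilde{D}_k=\sum_{j=k}^{2q}D_j$ for $k\le 2q$ as special cases. You are also more careful than the paper on the one genuinely delicate point: strictness of $F(T_{2q+1})<F(T_k)$ at the \emph{odd} nodes requires that $t_r$ be the unique minimizer (at even nodes the strict sign of $d_1-d_2$ on the open intervals does the job), whereas the paper asserts strict inequalities without comment. What the paper's argument buys is brevity and no need to state a closed form; what yours buys is a single structural identity from which everything is read off, a complete proof of property 4, and an explicit accounting of where strictness comes from.
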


\begin{proof}
Properties 1 and 2 are results directly from the definition of $D_k$. 

For property 3, if $\tilde{D}_{2q+2} + D_{2q+1} \leq 0$, then $\tilde{D}_{2q+2} = \tilde{D}_{2q+3} + D_{2q+2} \leq -D_{2q+1}$, then $ \tilde{D}_{2q+3} \leq -(D_{2q+2}+D_{2q+1}) < 0$ (property 2). This leads to $\tilde{D}_{2q+k} \leq -\sum_{i=1}^{k-1}D_{2q+i} <0$ for $k=1,...,2r-2q$. Thus $\tilde{D}_{2r} < 0$. Contradiction.

For property 4, first we have $\tilde{D}_{2q+1}=0$ (property 3). Since $D_{2q}<0$ (property 1), $\tilde{D}_{2q} = D_{2q} < 0$.
\end{proof}

\paragraph{Primal game.} We start with $V(T,p) = 0$ where we use $p := p_{\theta}[1]$ as the probability of $\theta = -1$. The Hamiltonian can be derived as
\begin{align*}
    H(p) & = \min_{\tilde{\theta}_1} \max_{\tilde{\theta}_2} \mathbb{E}_{\theta}\left[ (\tilde{\theta}_1 - \theta)^2 d_1 - (\tilde{\theta}_2 - \theta)^2 d_2\right] \\
    & = 4p(1-p)(d_1-d_2).
\end{align*}
The optimal actions for the Hamiltonian are $\tilde{\theta}_1 = \tilde{\theta}_2 = 1-2p$.
From Bellman backup, we can get
\begin{equation*}
    V(T_k, p) = 4p(1-p)\tilde{D}_k.
\end{equation*}
Therefore, at $T_{2q+1}$, we have
\begin{align*}
    V(T_{2q+1}, p) & = Vex_p\left(V(T_{2q+2},p) + 4p(1-p)D_{2q+1}\right) \\
    & = Vex_p\left(4p(1-p)(\tilde{D}_{2q+2} + D_{2q+1})\right).
\end{align*}
Notice that $\tilde{D}_{2q+2} + D_{2q+1} >0$ (property 3) and $\tilde{D}_k <0$ for all $k < 2q+1$ (property 4), $T_{2q+1}$ is the first time such that the right-hand side term inside the convexification operator, i.e., $4p(1-p)(\tilde{D}_{2q+2} + D_{2q+1})$, becomes concave. Therefore, splitting of belief happens at $T_{2q+1}$ with $p^1 = 0$ and $p^2 = 1$. Player 1 plays $\tilde{\theta}_1 = -1$ (resp. $\tilde{\theta}_1 = 1$) with probability 1 if $\theta = -1$ (resp. $\theta = 1$), i.e., Player 1 reveals its type. This result is consistent with Hexner's. 

\paragraph{Dual game.} To find Player 2's strategy, we need to derive the conjugate value which follows
\begin{align*}
    C(t, \hat{p}) & = \left \{ \begin{array}{ll}
       \max_{i\in\{1,2\}} \hat{p}[i] & \forall t \geq T_{2q+1} \\
        \hat{p}[2] - \tilde{D}_t\left(1 - \frac{\hat{p}[1]-\hat{p}[2]}{4\tilde{D}_t}\right)^2 & \forall t < T_{2q+1}, ~4\tilde{D}_t \leq \hat{p}[1] - \hat{p}[2] \leq -4\tilde{D}_t \\
        \hat{p}[1] & \forall t < T_{2q+1}, ~\hat{p}[1] - \hat{p}[2] \geq 4\tilde{D}_t \\
        \hat{p}[2] & \forall t < T_{2q+1}, ~\hat{p}[1] - \hat{p}[2] < 4\tilde{D}_t
    \end{array}\right.
\end{align*}

Here $\hat{p} \in \nabla_{p_\theta} V(0,p_{\theta})$ and $V(0,p_{\theta}) = 4p[1]p[2]\tilde{D}_0$. For any particular $p_* \in \Delta(2)$, from the definition of subgradient, we have $\hat{p}[1] p_*[1] + \hat{p}[2] p_*[2] = 4p_*[1]p_*[2]\tilde{D}_0$ and $\hat{p}[1] - \hat{p}[2] =4(p_*[2]-p_*[1])\tilde{D}_0$. Solving these to get $\hat{p} = [4p_*[2]^2\tilde{D}_0, 4p_*[1]^2\tilde{D}_0]^T$. Therefore $\hat{p}[1] - \hat{p}[2] = 4\tilde{D}_0 (1-2p_*[1]) \in [4\tilde{D}_0, -4\tilde{D}_0]$, and
\begin{equation*}
    C(0,\hat{p}) = \hat{p}[2] - \tilde{D}_0\left(1 - \frac{\hat{p}[1]-\hat{p}[2]}{4\tilde{D}_0}\right)^2.
\end{equation*}

Notice that $C(t, \hat{p})$ is convex to $\hat{p}$ since $\tilde{D}_0 <0$ (property 4) for all $t \in [0, T]$. Therefore, there is no splitting of $\hat{p}$ during the dual game, i.e., $\tilde{\theta}_2 = 1-2p$. This result is also consistent with Hexner's.


\section{Backward Reachable Tube}\label{sec:BRT}
The computation of the Backward Reachable Tube (BRT) allows us to classify the state space into feasible and infeasible regions at different times from Player 1's perspective.

\paragraph{Computation of BRT.} For the simplified football game, the state constraint is defined as $c(x) := \|(d_{x_1}, d_{y_1}) - (d_{x_2}, d_{y_2})\|_2 - r$, and $\mathcal{C} = \{x: c(x) \leq 0\}$.
The Hamilton-Jacobi-Isaacs Variational Inequality (HJI VI) is denoted by $L$ and satisfies the boundary condition $D$~\cite{bansal2021deepreach}:
\begin{equation}
    \begin{aligned}
        & L(\tilde{V}, t, x)=\min\{\nabla_t \tilde{V}(t,x) + H(t,x), c(x) - \tilde{V}(t,x)\} = 0, \\
        & D(\tilde{V}, x) = \tilde{V}(T,x) - c(x) = 0,
    \end{aligned}
\end{equation}
where $H$ is the Hamiltonian: 
\begin{equation}
    H(t,x) = \max_u \min_v \langle \nabla_x \tilde{V}(t,x), f(x, u, v)\rangle.
\end{equation}

We use Physics-Informed Neural Network (PINN) to learn the value function $\tilde{V}(t,x)$, the sub-zero level set of which represents the BRT: 
\begin{equation}
    \bar{\mathcal{Q}}(t) = \{x \in \mathbb{R}^{d_x}: \tilde{V}(t,x) \le 0\}.
\end{equation}

We denote PINN dataset $\mathcal{D} = \left \{\left(t^{(k)}, x^{(k)}\right) \right\}_{k=1}^K$ containing uniformly sampled data points in $[0, T] \times \mathbb{R}^{d_x}$ and define the loss function as:
\begin{equation}
    \min_{\tilde{V}} \quad \mathcal{L}\left(\tilde{V}\right) = \sum_{k=1}^K \left\|L(\tilde{V}^{(k)}, t^{(k)}, x^{(k)})\right\|_1 + C_1 \left\|D(\tilde{V}^{(k)}, x^{(k)})\right\|_1,
    \label{eq:deepreach}
\end{equation}
where $\tilde{V}^{(k)}$ is an abbreviation for $\tilde{V} \left(t^{(k)}, x^{(k)}\right)$ and $C_1$ is the hyperparameter that balances the loss term $\|L\|_1$ and $\|D\|_1$. 

\begin{figure}[!ht]
    \centering
    \includegraphics[width=0.45\linewidth]{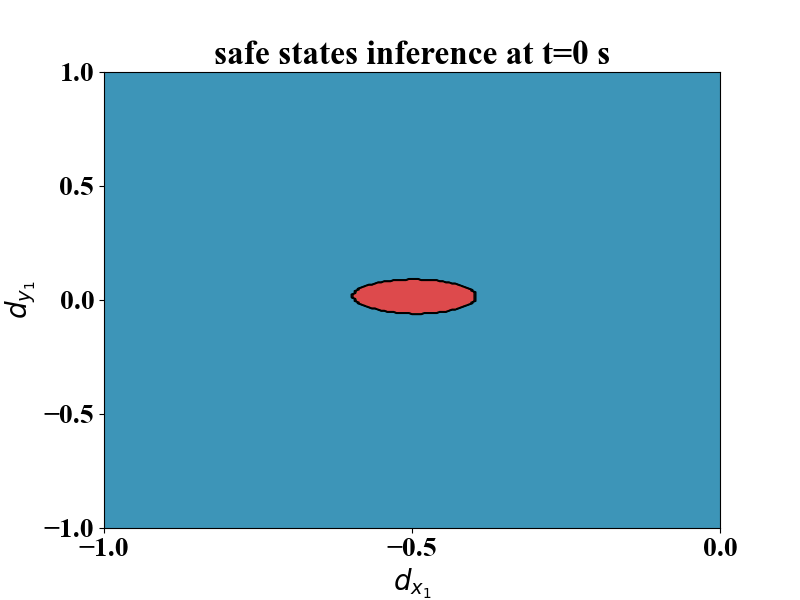}
    \caption{A visualization of safe/unsafe initial position for the attacker when the defender is fixed at $(-0.05, 0)$. The initial velocities for both players are zero in the beginning. The red (blue) region represents unsafe (safe) states.}
    \label{fig:BRT_initial}
\end{figure}

\paragraph{Training.} We uniformly sample 60k input states $x \in [-1, 1]$ (specifically, 10k states $x \in \mathcal{C}$) and use curriculum learning proposed in~\cite{bansal2021deepreach} to improve the training convergence. The rest of the dynamics parameters are chosen as: $T=1$, $r=0.05$, $u_x \in [-6, 6]$, $u_y \in [-12, 12]$, $v_x \in [-6, 6]$, $v_y \in [-4, 4]$, velocities are sampled as mentioned in Sec.~\ref{sec: data sampling} and are normalized between $[-1, 1]$. The PINN utilizes a fully-connected network with 3 hidden layers, each comprising 512 neurons with \texttt{sin} activation function. The network adopts the Adam optimizer with a fixed learning rate of $2 \times 10^{-5}$. We first pretrain the network over 10k iterations to satisfy the boundary condition $D$ and then refine the network through 100k gradient descent steps, with states sampled from an expanding time window starting from the terminal. Fig.~\ref{fig:BRT_initial} shows the visualization of BRT in a 2D frame given $t=0$ and fixed states except $(d_{x_1}, d_{y_1})$. 


\section{Proof of Proposition~\ref{prop:convex_error}}
\label{sec:convex_error}

\begin{proof}
    Let $f^{0}: [0,1]^{I-1} \rightarrow \mathbb{R}$ be a bounded and Lipschitz continuous function, $\mathcal{P} \subset [0,1]^{I-1}$ be a lattice, and $f$ be a convex hull computed from the data $\{f(p), p\}_{p \in \mathcal{S}}$. Let the true convex hull of $f^0$ be $Vex(f^0)$: $Vex(f^0)(p) \leq f(p)$ for all $p \in [0,1]^{I-1}$, with equality reached at least for $p \in \mathcal{S}$.
    
    Introduce a set $P^0 = \{p^{(i)} \in \mathcal{S}\}_{i=1}^{I}$ and a space $\mathcal{P}^0 = \{p \in [0,1]^{I-1}~|~ \exists \lambda \in \Delta(I) \text{ s.t. } p = \sum_{i=1}^{I} \lambda[i] p^{(i)},~p^{(i)}\in P^0 \}$ so that $f(p) = \sum_{i=1}^{I} \lambda[i]f(p^{(i)})$ for all $p\in \mathcal{P}^0$, i.e., $P^0$ are vertices of a segment $\mathcal{P}^0$ of $[0,1]^{I-1}$ within which $f$ is affine.  

    Let $U := \{u^{(i)}\}_{i=1}^N = \mathcal{P} \bigcap \mathcal{P}^0$ be the set of lattice nodes contained in $\mathcal{P}^0$. Since $f$ is a convex hull of $f^0$, we have $f(u^{(i)}) \leq f^0(u^{(i)})$ for all $i=1,...,N$. $U$ defines a segmentation $\mathcal{E}$ of $\mathcal{P}^0$: Each $e \in \mathcal{E}$ is associated with $U_e := \{u^{(e_i)}\}_{i=1}^{I} \subset U$ such that $e = \{p \in [0,1]^{I-1} ~|~ \exists \lambda \in \Delta(I) \text{ s.t. } p = \sum_{i=1}^{I} \lambda[i] u^{(e_i)},~u^{(e_i)}\in U_e\}$ and $u \notin e$ for any $u \in U \setminus U_e$.

    For any $p \in e$, we have the following loose lower bound on $f^0(p)$:
    \begin{equation}
        f^0(p) \geq \min_i f^0(u^{e_i}) - \Delta L \geq \min_i f(u^{e_i}) - \Delta_e L,
    \end{equation}
    where $\Delta := \max_{i,j} \|u^{e_i}-u^{e_j}\|_2$, and $L$ is the Lipschitz constant of $f^0$. $\Delta$ is a constant for a given lattice $\mathcal{P}$.

    Therefore within $e$, the convexification error is lower bounded by
    \begin{equation}
        \max_{\lambda \in \Delta(I)} \left\{f(\sum_i \lambda[i] u^{(e_i)}) - \min_i f(u^{e_i}) + \Delta L\right\} = \max_i f(u^{e_i}) - \min_i f(u^{e_i}) + \Delta L \leq 2\Delta L.
    \end{equation}

    Since this error is constant, and $f - 2\Delta L$ is a convex lower bound of $Vex(f^0)$, we have $\varepsilon_{vex} \leq 2\Delta L$.
\end{proof}

\section{Details on Case Studies}\label{sec:case_details}
The code for the implementation is available at \href{https://github.com/ghimiremukesh/OSIIG}{\texttt{https://github.com/ghimiremukesh/OSIIG}}.
\subsection{Hexner's Strategy}
For the unconstrained simplified football game discussed in Sec.~\ref{sec:cases}, the strategies depend on the trajectory of the $d_1 - d_2$. In Fig.\ref{fig:d1d2}, we plot the trajectory and determine the critical time from Eq.\eqref{eq:tr}. For the choices of parameters, we determine $t_r = 0.4$s. We set $R_A = \texttt{diag}(0.05, 0.025)$, and $R_D = \texttt{diag}(0.05, 0.1)$.

\begin{figure}[!h]
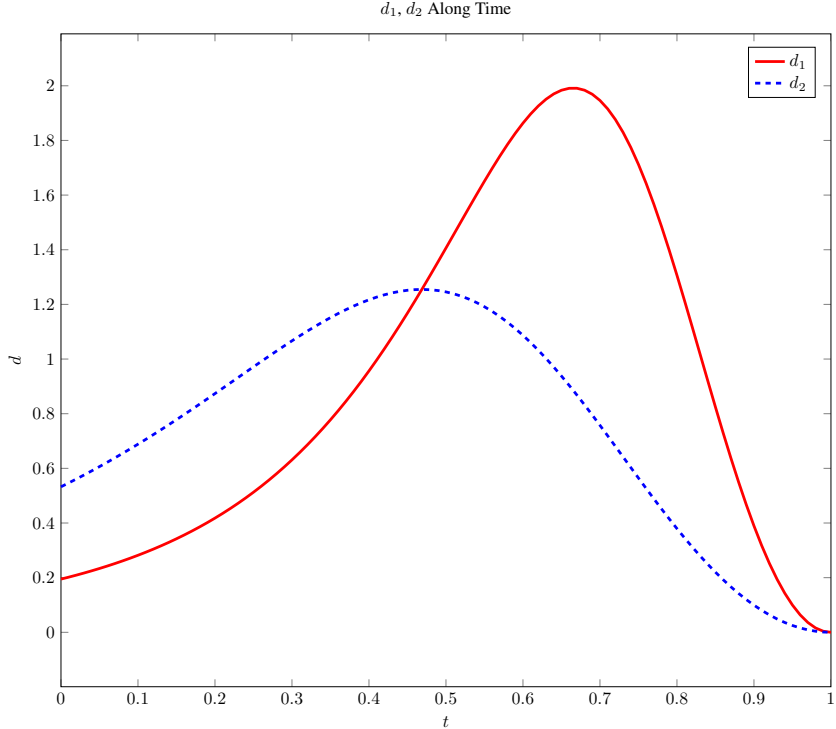

    \centering
    \includestandalonewithpath[width=.65\linewidth]{figures/d1_d2}
    \caption{Plot of $d_1$ and $d_2$ along time. The critical time $t_r$ occurs at $t\approx 0.4$. The attacker will conceal its type until $t_r$ and reveal it after $t_r$. }
    \label{fig:d1d2}
\end{figure}

\subsection{Data Sampling}
\label{sec: data sampling}
\paragraph{Unconstrained Game.}
For the unconstrained game, we sample positions $(d_x, d_y)$ and velocities $(\dot{d}_x, \dot{d}_y)$ for both players. As the arena is bounded between $[-1, 1]$ in both $x$ and $y$ directions, we sample the positions of the two players in $[-1, 1]$. However, when it comes to velocities, we experimentally determine the range from the LQR problem as the following: 
$\dot{d}_{x_1} \in [-6, 6]$, $\dot{d}_{y_1} \in [-4, 4]$, $\dot{d}_{x_2} \in [-6, 6]$, and $\dot{d}_{y_2} \in [-4, 4]$. 
We then normalize the velocities between $[-1, 1]$ and compute the values as described in algorithm \ref{alg:value}. The resulting normalized joint states $(\mathcal{X})$ and values $(V)$ are stored for training the value network. At each time step we sample 10000 states and set $|\mathcal{P}| = 100$. This brings the total training data at each time step to 1M for the unconstrained case.
\paragraph{Constrained Primal Game.}
For the constrained game, we sample the positions between $[-1, 1]$ and all velocities between the ranges discussed above. As in the unconstrained case, these are normalized to $[-1, 1]$ before computing the values and storing the training data. With the same $\mathcal{P}$, we sample 5000 states from the feasible set $\mathcal{Q}(t)$, resulting in 500,000 training data at each time step. Solving constrained game requires evaluating $\min_u \max_v V(t, x+\tau f(x, u, v), p)$ over all possible pairs of $x'(\text{i.e. } x+\tau f(x, u, v))$, which is memory intensive. Based on the available resources we set the total number of initial states to be sampled to 5000. To speed up the calculation, and capture a wide range of data, we divide the state space into 50 uniform intervals, and distribute the computation to 56 CPU cores, with 515,271 MB of total memory. Each minimax computation is independent and hence can be evaluted in parallel.   

\paragraph{Constrained Dual Game.} In the dual game, the uninformed player keeps track of the process $\hat{p} \in \mathbb{R}^I$. As a result, the dual value is a 10-D function, which increases the complexity of the computation due to the need for convexification of the value along $I$ dimensions (here, $I=2$). We follow the same procedure as in the primal game and collect 250,000 samples for training. The range of $\hat{p}$ was determined to be $[-14, 14]$ based on the primal value at the initial time as discussed in Sec.~\ref{sec:value}. Furthermore, due to the additional input dimension in the dual value network, the dual value approximation suffers from relatively higher error compared to the primal value. Ultimately this affects the strategy of the uninformed player (Player 2). We compare the resulting strategy of Player 2 from the dual value with that of the ground truth strategy in the unconstrained game.

\begin{figure}[!ht]
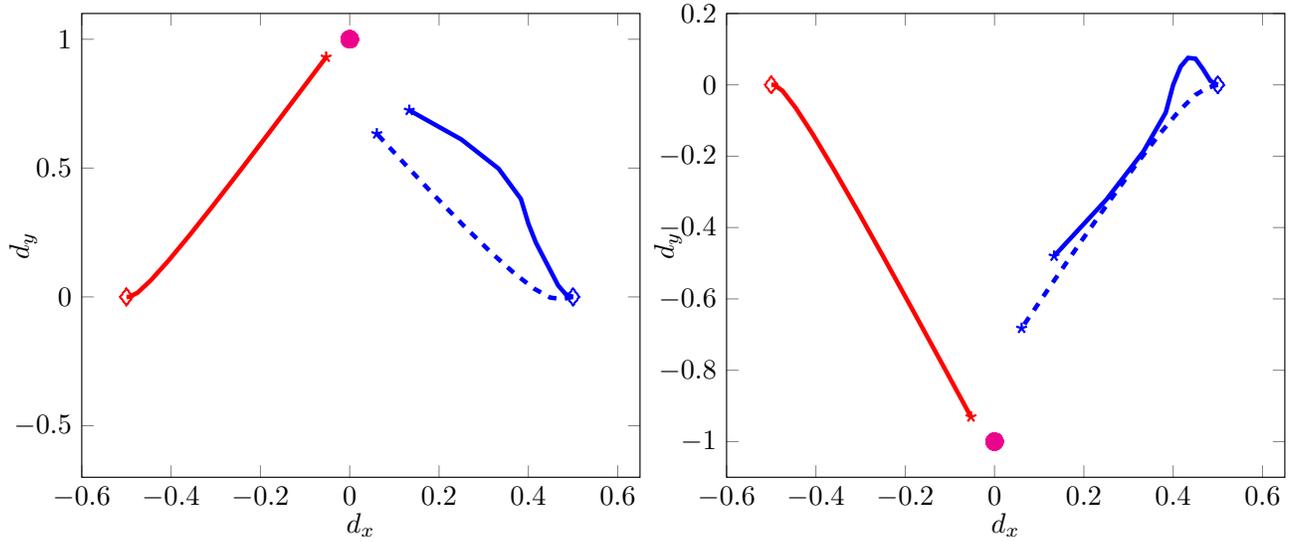

\begin{subfigure}[b]{0.5\linewidth}
    \centering
    \includestandalonewithpath[width=\linewidth]{figures/compare_dual_new}
    \end{subfigure}%
    \begin{subfigure}[b]{.5\linewidth}
    \centering
    \includestandalonewithpath[width=\linewidth]{figures/compare_dual_2_new}
    \end{subfigure}
    \caption{Comparison between the P2's ground truth strategy and the strategy synthesized from the dual value. P1's trajectory is shown red and P2's in blue. Solid trajectories correspond to that obtained when P2 plays its equilibrium strategy. Dotted trajectories represent the ground truth solution.}
    \label{fig:compare_dual}
\end{figure}


\end{document}